\newtheorem{theorem}{Theorem}
\newtheorem{lemma}{Lemma}
\newtheorem*{problem}{Problem}
\newcommand{\tabincell}[2]{\begin{tabular}{@{}#1@{}}#2\end{tabular}}
\newtheorem{definition}{Definition}
\newtheorem{corollary}{Corollary}
\newenvironment{proof}{{\bf Proof: \ }}{\hfill$\bull$\medskip}
\newtheorem{remark}{Remark}
\newcommand{\bull}{\vrule height 1.8ex width 1.0ex depth 0ex}
\renewcommand{\theequation}{\arabic{section}.\arabic{equation}}
\begin{document}
\title{A hybrid stochastic differential reinsurance and investment game with bounded memory
\footnote{ This research is supported by the National Natural Science Foundation of China (Nos. 71771082, 71801091) and Hunan Provincial Natural Science Foundation of China (No. 2017JJ1012).}
}
\author{\normalsize Yanfei Bai$^1$,\quad Zhongbao Zhou$^{1,}$\footnote{Corresponding author. E-mail: z.b.zhou@163.com; z.b.zhou@hnu.edu.cn.},\quad Helu Xiao$^2$,\quad Rui Gao$^3$,\quad Feimin Zhong$^1$\\
\small $^1$School of Business Administration, Hunan University, Changsha 410082, China\\
\small $^2$School of Business, Hunan Normal University, Changsha 410081, China\\
\small $^3$ College of Mathematics and Econometrics, Hunan University, Changsha 410082, China
}
\date{}
\maketitle
\begin{abstract}
This paper investigates a hybrid stochastic differential reinsurance and investment game  between one reinsurer and two insurers, including a stochastic Stackelberg differential subgame and a non-zero-sum stochastic differential subgame.
The reinsurer, as the leader of the Stackelberg game, can price reinsurance premium and invest its wealth in a financial market that contains a risk-free asset and a risky asset.
The two insurers, as the followers of the Stackelberg game, can purchase proportional reinsurance from the reinsurer and invest in the same financial market.
The competitive relationship between two insurers is modeled by the non-zero-sum game, and their decision making will consider the relative performance measured by the difference in their terminal wealth.
We consider wealth processes with delay to characterize the bounded memory feature.
This paper aims to find the equilibrium strategy for the reinsurer and insurers by maximizing the expected utility of the reinsurer's terminal wealth with delay and maximizing the expected utility of the combination of insurers' terminal wealth and the relative performance with delay.
By using the idea of backward induction and the dynamic programming approach,
we derive the equilibrium strategy and value functions explicitly.
Then, we provide the corresponding verification theorem.
Finally, some numerical examples and sensitivity analysis are presented to demonstrate the effects of model parameters on the equilibrium strategy.
We find the delay factor discourages or stimulates investment depending on the length of delay.
Moreover, competitive factors between two insurers make their optimal reinsurance-investment strategy interact, and reduce reinsurance demand and reinsurance premium price.
\newline \textbf{Keywords:} \textit{Decision analysis; Stochastic differential games; Reinsurance contract design; Investment; Delay}
\end{abstract}

\section{Introduction}\label{section 1}

Insurers and reinsurers, as special financial institutions, not only have to face the investment risks in the financial market, but also have to manage the risk of random claims in the insurance market.
Insurers can sign reinsurance contracts from the reinsurer and transfer part of the risk of claims to the reinsurer, because the reinsurer is more risk-seeking than insurers.
Research on optimal reinsurance and investment strategies has been an important part of mainstream study in the actuarial field.
In recent decades, many scholars have made extensive studies on reinsurance and investment optimization problem under different objectives, for example, minimizing the probability of ruin (\cite{Browne-1995-20}, \cite{Chen-2010-47}, \cite{Li-2015-259}, etc.), maximizing the expected utility of the terminal wealth (\cite{Liang-2011-32}, \cite{Li-2012-51}, \cite{HUANG2016443}, \cite{Zhao-2017-28}, etc.), maximizing the expected terminal surplus as well as minimizing the variance of the terminal surplus ( \cite{Bi-2014-212},  \cite{Zhou-Zhang-2019-28}, \cite{ZHOU2019}, etc.).

However, the majority of these researches study the reinsurance and investment optimization problem only from the unilateral perspective of insurers, while the interest of the reinsurer is generally ignored.
Since the setting of the reinsurance contract depends on the mutual agreement between the insurer and the reinsurer, a reinsurance contract that only considers the interest of one party may be unacceptable to the other party.
That is, the reinsurance contract should be designed to take into account the interests of both the insurer and the reinsurer.
In view of the monopoly position of the reinsurer and the competitive relationship between insurers in the market, we consider a reinsurer and two insurers as the leader and the followers of a stochastic Stackelberg differential game respectively, and use the non-zero-sum stochastic differential game to describe the competitive relationship.
We investigate the reinsurer's premium pricing and investment optimization problem as well as insurers' reinsurance and investment optimization problem.
That is, we consider the mutual interests of the reinsurer and two insurers as well as the competition between insurers.

As far as we know, the game problem in the insurance market has attracted some scholars' attention.
With respect to maximizing the expected utility of the relative performance, \cite{Bensoussan-2014-50} studied a non-zero-sum stochastic differential investment and reinsurance game between two insurers whose surplus processes were modulated by continuous-time Markov chains;
\cite{Deng-2018-264} investigated the implications of strategic interaction between two constant absolute risk aversion (CARA) insurers on their reinsurance-investment policies with default risk under the framework of non-zero-sum stochastic differential game.
There are still many studies on the non-zero-sum stochastic differential reinsurance-investment game problem, such as \cite{Meng-2015-62}, \cite{Pun-2016-68}, \cite{Guan-2016-70}, \cite{Yan-2017-75}, \cite{ZHUCao2018}, etc.

Obviously, the above mentioned stochastic differential game models about reinsurance-investment problem do not consider the interest of the reinsurer.
\cite{Chen2018} first proposed a stochastic Stackelberg differential reinsurance game model to  depict the leader-follower relationship between the reinsurer and insurer in the insurance market, and analyzed optimal reinsurance strategy from joint interests of the insurer and the reinsurer.
\cite{CHEN2019120} studied stochastic Stackelberg differential reinsurance games under
time-inconsistent mean-variance framework.
Inspired by these insights, in this paper, we will build a more realistic stochastic differential reinsurance and investment game model, i.e., the hybrid stochastic differential reinsurance and investment game model, which takes into account the leader-follower relationship between the reinsurer and the insurers and the competitive relationship between the insurers.

Traditionally, most of the researches on optimal reinsurance-investment decision making are based on current information, ignoring the performance of past wealth.
However, decisions often rely on the past information in real systems, and delays arise naturally.
This feature is commonly referred to as the delay feature or bounded memory feature (see \cite{Chang-2011-36} and \cite{Federico-2011-15}).
\cite{Shen-2014-57} first introduced the bounded memory feature of wealth into the optimal investment-reinsurance problem for mean-variance insurers and obtained the optimal strategy under certain conditions.
Since then, \cite{A-2015-61} considered an optimal investment and excess-of-loss reinsurance problem with delay for an insurer under Heston's stochastic volatility model.
\cite{Yang-2017-47} researched an optimal proportional reinsurance problem for the compound Poisson risk model with delay under the mean-variance criterion.
\cite{Chunxiang-2018-342} studied an optimal investment and excess-of-loss reinsurance problem with delay and jump-diffusion risk process.
The delay feature of the wealth process has influence on decision making of the reinsurer and insurers.
It would be more practical to consider such a delay period.
Therefore, this paper also considers the delay feature of the wealth processes under the framework of the hybrid stochastic differential game.

The main work of this article is summarized as follows.
We build a hybrid stochastic differential reinsurance and investment game model, including stochastic Stackelberg differential subgame and non-zero-sum stochastic differential subgame.
One reinsurer and two insurers are three players in the hybrid game and are the leader and the followers in the Stackelberg game respectively.
The competitive relationship between two insurers is modeled by the non-zero-sum game, and their decision making will consider the relative performance measured by the difference in their terminal wealth.
Furthermore, the effects of delay on wealth processes are considered.
By using the idea of backward induction and the dynamic programming approach, we derive the equilibrium strategy and value functions explicitly.
Then, we establish the corresponding verification theorem for the optimality of the given strategy.
The equilibrium strategy indicates that the optimal reinsurance-investment strategies of two insurers interact with each other and reflect the herd effect.
Moreover, the optimal reinsurance strategies of insurers depend on the reinsurer's optimal premium strategy.
We study several special cases of the model and find that the delay factor discourages or stimulates investment depending on the length of the delay, and the optimal reinsurance premium in the intermediate case follows the principle of variance premium when there is only one insurer and one reinsurer in the insurance market.
Finally, we present some numerical examples and sensitivity analysis to demonstrate the effects of the model parameters on the equilibrium strategy.
Through the analysis and numerical simulation of equilibrium strategy, we find that
competitive factors between two insurers reduce the demand for reinsurance and the price of reinsurance premium.
In addition, we find that the effect of delay weight on the equilibrium strategy is related to the length of delay.

Different from the existing literature, our work has the following four contributions.
(1) We first construct a hybrid stochastic differential reinsurance and investment game model including a stochastic Stackelberg differential subgame and a non-zero-sum stochastic differential subgame.
That is, we first consider the leader-follower relationship between the reinsurer and insurers and the competitive relationship between insurers at the same time, which is closer to the actual situation.
(2) We consider the tripartite game between one reinsurer and two insurers, and the joint interests of the reinsurer and insurers are considered, while the majority of the existing researches on non-zero-sum reinsurance and investment game only focus on the insurer's interest, ignoring the reinsurer's interest generally.
(3) Since reinsurance and investment are important risk management tools, we study the stochastic differential reinsurance and investment game problem, while \cite{Chen2018} and \cite{CHEN2019120} only studied the reinsurance optimization problem under the framework of the Stackelberg game.
(4) We consider the effect of the bounded memory feature of wealth processes under the framework of the hybrid stochastic differential game, which has rarely been studied in the past.

The remainder of this paper is organized as follows.
Section \ref{section 2} formulates the hybrid stochastic differential reinsurance and investment game between one reinsurer and two insurers with delay.
In Section \ref{section 3}, we derive the equilibrium strategy and value functions for the hybrid game problem by  using the idea of backward induction and the dynamic programming approach.
Then, we establish a verification theorem for the optimality of the equilibrium strategy and study some special cases of our model.
Section \ref{section 4} provides some numerical examples and sensitivity analysis to demonstrate the effects of the model parameters on the equilibrium strategy.
Section \ref{section 5} concludes the paper.

\section{Model setup}\label{section 2}

In this section, we describe the model in details. Let $[0,T]$ be a continuous-time finite
horizon, over which reinsurance and investment behavior can occur. The uncertainty in
markets is represented by a complete probability space $\left(\Omega,\mathcal{F},P\right)$, which is equipped with a filtration $\mathcal{F}=\left\{\mathcal{F}_t\right\}_{0\leq t\leq T}$ satisfying the usual conditions.

\subsection{Dynamics of the surplus processes}\label{section 2.1}

We consider an insurance market containing two competing insurers and one reinsurer.
The surplus process of insurer $i\in\{1,2\}$, denoted by $\{X_i(t)\}_{t\geq0}$, is depicted by the classic Cram\'{e}r-Lundberg risk model:
\begin{align}\label{equ:2.1}
X_i(t)=x_i^0+c_it-\sum_{n=1}^{N_i(t)+N(t)}\tilde{Y}_i^n, \quad i\in\{1,2\},
\end{align}
where $x_i^0>0$ and $c_i\geq0$ are the initial surplus and the premium rate of insurer $i$, respectively;
$N_i(t)+N(t)$ represents the number of claims up to time $t$;
$\{\tilde{Y}_i^n\geq0,n\geq1\}$ is a list of independent identically distributed (i.i.d.) random variables with distribution function $F_i(\tilde{Y})$;
$\tilde{Y}_i^n$ represents the amount of the $n$-th claim of insurer $i$.
We assume that
\begin{enumerate}[(i)]
  \item $\{\tilde{Y}_i^n\geq0,n\geq1\}$ has finite first moment $\mu_i$ ($0<\mu_i<+\infty$) and finite secondary moment $(\tilde{\sigma}_i)^2<+\infty$;
  \item $\{\tilde{Y}_i^n\geq0,n\geq1\}$ is independent of $N_i(t)$ and $N(t)$;
  \item $N_{1}(t)$, $N_{2}(t)$ and $N(t)$ are three mutually independent Poisson processes with intensity $\tilde{\lambda}_1> 0$, $\tilde{\lambda}_2> 0$ and $\tilde{\lambda}> 0$, respectively.
\end{enumerate}

Then, the surplus process $\{X_i(t)\}_{t\geq0}$ indicates that insurer 1 and insurer 2 are subject to common impact that is represented by $\{N(t)\}_{t\geq0}$.
Refer to  \cite{Grandell-1977-1977}, \cite{Browne-1995-20}, \cite{Gerber-2006-186},
\cite{Bai-2008-42}, \cite{Chen-2018-48} etc., the classic Cram\'{e}r-Lundberg model \eqref{equ:2.1} can be approximated by the following diffusion process:
\begin{align}\label{equ:2.2}
dX_i(t)=c_idt-\lambda_i\mu_idt+\sqrt{\lambda_i(\tilde{\sigma}_i)^2}dW_i(t),\quad X_i(0)=x_i^0, \quad i\in\{1,2\},
\end{align}
where $\lambda_i=\tilde{\lambda}_i+\tilde{\lambda}$; $\{W_i(t),t\geq0\}$ is a standard $\mathcal{F}$-Brownian motion. $d\langle W_{1}(t),W_{2}(t) \rangle$ $=\rho dt$, where $\rho
=\frac{\tilde{\lambda} \mu_{1}\mu_{2}}{\sqrt{\lambda_{1}\lambda_{2}}\tilde{\sigma}_{1}\tilde{\sigma}_{2}}$.

According to expected value premium principle, we know that $c_i=(1+\theta_i)\lambda_i\mu_i$, where $\theta_i>0$ is the safety loading of insurer $i$.
Both two insurers can manage its claim risk through purchasing proportional reinsurance continuously from the reinsurer.
The reinsurance strategy of insurer $i$ is characterized by $\{q_i(t),t\geq0\}$ satisfying $q_i(t)\in[0,1]$. Then the reinsurer will cover $(1-q_i(t))100\%$ of the claims of insurer $i$ while insurer $i$ will cover remaining at time $t$.
The price of the reinsurance premium at time $t$ is $p(t)\in[c_F,\bar{c}]$, where $c_F=\max\{c_{1},c_{2}\}$, $\bar{c}=(1+\bar{\theta})\lambda_F\mu_F$, $\bar{\theta}$ is an upper bound of the reinsurer's relative safety loading, $\bar{\theta}>\max\{\theta_1,\theta_2\}$ and $\lambda_F\mu_F=\max\{\lambda_1\mu_1,\lambda_2\mu_2\}$.  Introducing
proportional reinsurance strategy $q_i(t)$ into equation \eqref{equ:2.2}, then
\begin{align}
dX_i(t)
=[\theta_ia_i-(p(t)-a_i)(1-q_i(t))]dt+q_i(t)\sigma_idW_i(t), \quad i\in\{1,2\},
\end{align}
where $a_i=\lambda_i\mu_i$, $\sigma_i=\sqrt{\lambda_i(\tilde{\sigma}_i)^2}$. The surplus process of the reinsurer is as following:
\begin{align}
dX_L(t)
=&[(p(t)-a_1)(1-q_1(t))+(p(t)-a_2)(1-q_2(t))]dt\nonumber\\
&+(1-q_1(t))\sigma_{1}dW_{1}(t)
+(1-q_2(t))\sigma_{2}dW_{2}(t),\quad X_L(0)=x_L^0.
\end{align}

\subsection{Dynamics of the financial assets}\label{section 2.2}

Assuming that both the reinsurer and two insurers can invest in a financial market that contains one risk-free asset and one risky asset. The price process of the risk-free asset, $\{S_0(t)\}_{t\geq0}$, is given by the following ordinary differential equation (ODE):
\begin{align}
dS_0(t)=r_0S_0(t)dt,\quad S_0(0)=1,
\end{align}
where $r_0$ is the constant risk-free interest rate. The price process of the risky asset, $\{S(t)\}_{t\geq0}$, is described by the constant elasticity of variance (CEV) model:
\begin{align}\label{equ:St}
dS(t)=S(t)\left[rdt+\sigma S^\beta(t) dW(t)\right],\quad S(0)=s_0,
\end{align}
where $r$, $\sigma S^\beta(t)$ and $\beta$ denote the expected return rate, the volatility and the constant elasticity parameter of the risky asset, respectively;
$r>r_0>0$, $\sigma>0$; $\{W(t),t\geq0\}$ is a standard $\mathcal{F}$-Brownian motion and is independent of $\{W_1(t),t\geq0\}$ and $\{W_2(t),t\geq0\}$.
The CEV model can reduce to a geometric Brownian motion (GBM) when $\beta=0$.
If $\beta<0$, the volatility $\sigma S^\beta(t)$ increases as the stock price decreases, and a distribution with a fatter left tail can be generated.
If $\beta>0$, the volatility $\sigma S^\beta(t)$ increases as the stock price increases.

\subsection{Wealth processes with delay}\label{section 2.3}

Suppose that there are no transaction costs or taxes for investment and reinsurance, and short-selling of the risky asset is allowed. Let $\{b_L(t),t\geq0\}$, $\{b_{1}(t),t\geq0\}$ and $\{b_{2}(t),t\geq0\}$ be measurable processes valued in $\mathbb{R}$ representing the amount invested in the risky asset by the reinsurer, insurer $1$ and insurer $2$ at time $t$, respectively.
Then, the remaining wealth $X_L(t)-b_L(t)$, $X_{1}(t)-b_{1}(t)$ and $X_{2}(t)-b_{2}(t)$ are invested in the risk-free asset.
Let $\pi_L(t)=(p(t),b_L(t))$, $\pi_{1}(t)=(q_1(t),b_{1}(t))$ and $\pi_{2}(t)=(q_2(t),b_{2}(t))$.
Then, with  strategy $\{\pi_L(t),t\geq 0\}$, the wealth
process of the reinsurer, denoted by  $\{X_L^{\pi_L}(t)\}_{t\geq0}$,  can be expressed as:
\begin{align}
dX_L^{\pi_L}(t)
=&[(p(t)-a_1)(1-q_1(t))+(p(t)-a_2)(1-q_2(t))+r_0X_L^{\pi_L}(t)+(r-r_0)b_L(t)]dt\nonumber\\
&+(1-q_1(t))\sigma_{1}dW_{1}(t)
+(1-q_2(t))\sigma_{2}dW_{2}(t)+b_L(t)\sigma S^\beta(t) dW(t).
\end{align}
With strategy $\{\pi_i(t),t\geq0\}$, the wealth process of
insurer $i$, denoted by $\{X_i^{\pi_i}(t)\}_{t\geq0}$, can be expressed as:
\begin{align}
dX_i^{\pi_i}(t)
=&[\theta_ia_i-(p(t)-a_i)(1-q_i(t))+r_0X_i^{\pi_i}(t)+(r-r_0)b_i(t)]dt\nonumber\\
&+q_i(t)\sigma_idW_i(t)+b_i(t)\sigma S^\beta(t) dW(t), \quad i\in\{1,2\}.
\end{align}

In fact, due to the bounded memory feature, the reinsurer's and insurers' strategies depend on the exogenous capital instantaneous inflow into or outflow from current wealth.
Refer to \cite{A-2015-61}, let $Y_L(t)$ and $Z_L(t)$ be the integrated and pointwise delayed information of the reinsurer's wealth process in the past horizon $[t-h_L,t]$, respectively. Correspondingly, let $Y_i(t)$ and $Z_i(t)$ be the integrated and pointwise delayed information of insurer $i$'s wealth process in the past horizon $[t-h_i,t]$, respectively. That is, for $\forall t\in[0,T]$,
\begin{align}
Y_L(t)&=\int_{-h_L}^0e^{\alpha_L s}X_L^{\pi_L}(t+s)ds,\quad Z_L(t)=X_L^{\pi_L}(t-h_L),\\
Y_i(t)&=\int_{-h_i}^0e^{\alpha_i s}X_i^{\pi_i}(t+s)ds,\quad Z_i(t)=X_i^{\pi_i}(t-h_i), \quad i\in\{1,2\},
\end{align}
where $\alpha_L>0$ and $\alpha_i>0$ are the average parameters; $h_L>0$ and $h_i>0$ are
the delay time parameters.
Let $f_L(t,X_L(t)-Y_L(t),X_L(t)-Z_L(t))$ and $f_i(t,X_i(t)-Y_i(t),X_i(t)-Z_i(t))$ represent the capital inflow/outflow amount of the reinsurer and insurer $i$, respectively; where $X_L(t)-Y_L(t)$, $X_i(t)-Y_i(t)$ represent the
average performance and  $X_L(t)-Z_L(t)$, $X_i(t)-Z_i(t)$ represent the absolute performance.
Such capital inflow/outflow, which is related to the past performance of the wealth, may come
out in various situations.
For example, a good past performance may bring the company more gain and further the company can pay a part of the gain as dividend to its shareholders. Contrarily, a poor past performance forces the company to seek further capital injection for covering the loss so that the final performance objective is still achievable.
To make the problem solvable, we assume
\begin{align}
f_L(t,X_L(t)-Y_L(t),X_L(t)-Z_L(t))&=B_L(X_L(t)-Y_L(t))+C_L(X_L(t)-Z_L(t)),\\
f_i(t,X_i(t)-Y_i(t),X_i(t)-Z_i(t))&=B_i(X_i(t)-Y_i(t))+C_i(X_i(t)-Z_i(t)), \quad i\in\{1,2\},
\end{align}
where $B_L$, $C_L$, $B_i$ and $C_i$ are nonnegative constants. In other words, the amount of the capital inflow/outflow is the linear weighted sum of the average performance and the absolute performance. Then, considering capital inflow/outflow functions $f_L(t,X_L(t)-Y_L(t),X_L(t)-Z_L(t))$ and $f_i(t,X_i(t)-Y_i(t),X_i(t)-Z_i(t))$, the wealth processes of the reinsurer and insurer $i$ are governed by the following stochastic differential delay equations (SDDEs), respectively:
\begin{align}
dX_L^{\pi_L}(t)
=&\big[(p(t)-a_1)(1-q_1(t))+(p(t)-a_2)(1-q_2(t))+A_LX_L^{\pi_L}(t)+(r-r_0)b_L(t)+B_LY_L(t)+C_LZ_L(t)\big]dt\nonumber\\
&+(1-q_1(t))\sigma_{1}dW_{1}(t)+(1-q_2(t))\sigma_{2}dW_{2}(t)
+b_L(t)\sigma S^\beta(t) dW(t),\label{equ:XL}\\
dX_i^{\pi_i}(t)
=&\big[\theta_ia_i-(p(t)-a_i)(1-q_i(t))
+A_iX_i^{\pi_i}(t)+B_iY_i(t)+C_iZ_i(t)+(r-r_0)b_i(t)\big]dt\nonumber\\
&+q_i(t)\sigma_idW_i(t)+b_i(t)\sigma S^\beta(t) dW(t), \quad i\in\{1,2\},\label{equ:XF}
\end{align}
where $A_L=r_0-B_L-C_L$, $A_i=r_0-B_i-C_i$.
In addition, we assume that insurer $i$, $i\in\{1,2\}$, is endowed with the initial
wealth $x_i^0$ at time $-h_i$ and does not start the business (insurance/reinsurance/investment) until time $0$, i.e., $X_i(t)=x_i^0>0,\forall t\in[-h_i,0]$.
Correspondingly, suppose that $X_L(t)=x_L^0>0,\forall t\in[-h_L,0]$.
Then, $Y_L(0)=\frac{x_L^0}{\alpha_L}(1-e^{-\alpha_Lh_L})$ and $Y_i(0)=\frac{x_i^0}{\alpha_i}(1-e^{-\alpha_ih_i})$.
For any fixed $t\in[0,T]$, denote $X_L^{\pi_{L}}(t)=x_L$, $Y_L(t)= y_L$, $Z_L(t)=z_L$, $X_i^{\pi_i}(t)=x_i$, $Y_i(t)=y_i$, $Z_i(t)=z_i$ and $S(t)=s$. Then, we define the admissible strategy as follows.

\begin{definition}\label{def1}
(Admissible strategy) $\pi(\cdot)=\pi_L(\cdot)\times\pi_{1}(\cdot)\times\pi_{2}(\cdot)
=(p(\cdot),b_L(\cdot))\times(q_1(\cdot),b_{1}(\cdot))\times(q_2(\cdot),b_{2}(\cdot))$ is said to be admissible, if
\begin{enumerate}[(i)]
\item $\{\pi_L(t)\}_{t\in[0,T]}$, $\{\pi_{1}(t)\}_{t\in[0,T]}$ and $\{\pi_{2}(t)\}_{t\in[0,T]}$ are $\mathcal{F}$-progressively measurable processes, such that $p(t)\in [c_F,\bar{c}]$, $q_1(t)\in[0,1]$ and $q_2(t)\in[0,1]$ for any $t\in[0,T]$;
\item $E\left[\int_t^T[(b_L(\ell))^2+(p(\ell))^2]d\ell\right]<+\infty$ and
    $E\left[\int_t^T[(b_i(\ell))^2+(q_i(\ell))^2]d\ell\right]<+\infty$,
    $\forall\ell\in[t,T]$, $i\in\{1,2\}$;
\item the equation \eqref{equ:XL} associated with $\pi(\cdot)$ has a unique solution $X_L^{\pi_L}(\cdot)$, which satisfies $\{E_{t,x_L,y_L,s}\left[\sup|X_L^{\pi_L}(\ell)|^2\right]\}^{\frac{1}{2}}\newline<+\infty$, for $\forall(t,x_L,y_L,s)\in[0,T]\times\mathbb{R}\times\mathbb{R}\times\mathbb{R}$, $\forall\ell\in[t,T]$;
\item the equation \eqref{equ:XF} associated with $\pi(\cdot)$ has a unique solution $X_i^{\pi_i}(\cdot)$, which satisfies $\{E_{t,x_i,y_i,s}\left[\sup|X_i^{\pi_i}(\ell)|^2\right]\}^{\frac{1}{2}}\newline<+\infty$, for $\forall(t,x_i,y_i,s)\in[0,T]
    \times\mathbb{R}\times\mathbb{R}\times\mathbb{R}$,  $\forall\ell\in[t,T]$, $i\in\{1,2\}$.
\end{enumerate}
\end{definition}

Let $\Pi=\Pi_L\times\Pi_{1}\times\Pi_{2}$ be the set of all admissible strategies, where  $\Pi_L$, $\Pi_1$ and $\Pi_2$ denote the set of all admissible strategies of the reinsurer, insurer $1$ and insurer $2$, respectively.

\subsection{The hybrid game problem}\label{section 2.4}

In view of the monopoly position of the reinsurer and the competitive relationship between insurers in the market, we investigate a hybrid stochastic differential reinsurance and investment game with delay.
The reinsurer and the two insurers are three players in the hybrid game.
The game between the reinsurer and two insurers is a stochastic Stackelberg differential game, in which the reinsurer is the leader and two insurers are the followers.
The game between the two insurers is a non-zero-sum stochastic differential game, and their status is equal.
More intuitively, the relationships between these three companies are shown in Figure \ref{fig:relation}.
\footnote{The non-zero-sum game between two insurers can also be regarded as the subgame of the Stackelberg game.}
\begin{figure}[htp]
\begin{center}
  \includegraphics[width=5in]{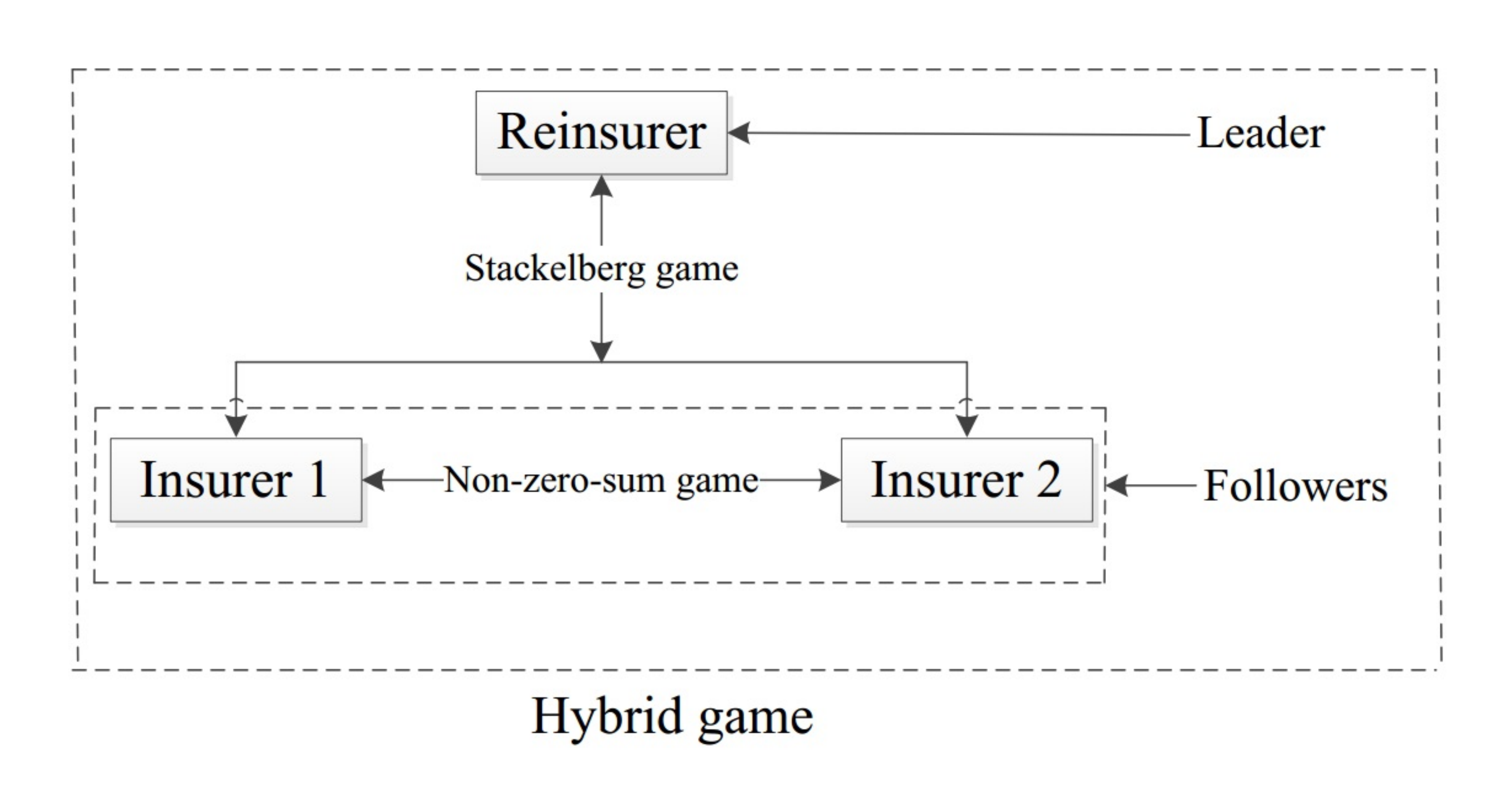}\\
  \caption{Relationships between three companies.}
  \label{fig:relation}
  \end{center}
\end{figure}

The goal of the hybrid game is to seek the equilibrium by solving the optimization problems of three parties.
Refer to \cite{Chen2018}, \cite{CHEN2019120} and \cite{Asmussen2019-7-2}, the procedure of solving the Stackelberg game is to solve the leader's and followers' optimization problems sequentially, based on the idea of backward induction.
To be more specific, the procedure can be divided into the following three steps:
\begin{itemize}
\item Step 1: The leader (i.e., the reinsurer) moves first by announcing its any admissible strategy $(p(\cdot),b_L(\cdot))\in\Pi_L$;
\item Step 2: The followers (i.e., the two insurers) observe the reinsurer's strategy and obtain their optimal strategies $q_1^{\ast}(\cdot)=\alpha_1^{\ast}(\cdot,p(\cdot),b_L(\cdot))$,
    $b_{1}^{\ast}(\cdot)=\beta_1^{\ast}(\cdot,p(\cdot),b_L(\cdot))$, $q_2^{\ast}(\cdot)=\alpha_2^{\ast}(\cdot,p(\cdot),b_L(\cdot))$,  $b_{2}^{\ast}(\cdot)=\beta_2^{\ast}(\cdot,p(\cdot),b_L(\cdot))$  by solving their optimization problems;
\item Step 3: Knowing that the two insurers would execute $\alpha_1^{\ast}(\cdot,p(\cdot),b_L(\cdot))$, $\beta_1^{\ast}(\cdot,p(\cdot),b_L(\cdot))$, $\alpha_2^{\ast}(\cdot,p(\cdot),b_L(\cdot))$ and $\beta_2^{\ast}(\cdot,p(\cdot),b_L(\cdot))$, the reinsurer then decides on its optimal strategy $(p^{\ast}(\cdot),b^{\ast}_L(\cdot))$ by solving its own optimization problem.
\end{itemize}

Due to the reinsurer's bounded memory feature, we suppose that the reinsurer is concerned about not only the terminal wealth $X_L^{\pi_L}(T)$, but also the integrated delayed information over the period $[T-h_L,T]$, i.e., $Y_L(T)$.
In other words, the objective of the reinsurer is to find the premium strategy and investment strategy to maximize the expected utility of $X_L^{\pi_L}(T)+\eta_LY_L(T)$, where the constant $\eta_L\in(0,1)$ represents the sensitivity of the reinsurer to past wealth.

Due to the fierce competition in the insurance market, insurer $i$ ($i\in\{1,2\}$) should consider not only the bounded memory feature of its own wealth, but also the wealth gap between itself and insurer $j$ ($j\neq i\in\{1,2\}$) at the terminal time $T$.
In other words, the objective of insurer $i$ is to find the optimal reinsurance strategy and investment strategy such that the expected utility of the combination of its terminal wealth and the relative performance with delay is maximized.
That is, insurer $i$ will choose a reinsurance-investment strategy
$\pi_{i}(\cdot)=(q_i(\cdot),b_{i}(\cdot))\in \Pi_{i}$ such that
\begin{align}
&E\Big[U_i\big((1-k_i)(X_i^{\pi_i}(T)+\eta_iY_i(T))+k_i((X_i^{\pi_i}(T)+\eta_iY_i(T))-(X_j^{\pi_j}(T)+\eta_jY_j(T)))\big)\Big]\nonumber\\
=&E\Big[U_i\big((X_i^{\pi_i}(T)+\eta_iY_i(T))-k_i(X_j^{\pi_j}(T)+\eta_jY_j(T))\big)\Big],\quad i\neq j\in\{1,2\},
\end{align}
is maximized.
Here, $U_i~(i\in\{1,2\})$ is the utility function of insurer $i$,
$\eta_i\in(0,1)$ values the weight of $Y_i(T)$, $k_i\in[0,1]$ measures the sensitivity of insurer $i$ to the performance of insurer $j$ ($j\neq i\in\{1,2\}$).

Refer to \cite{Bensoussan-2014-50}, \cite{Yan-2017-75} and \cite{Deng-2018-264},
the non-zero-sum game problem is to find an equilibrium reinsurance-investment strategy $(\pi_1^\ast,\pi_2^\ast)\in\Pi_1\times\Pi_2$ such that for any $(\pi_1,\pi_2)\in\Pi_1\times\Pi_2$, the following inequations are simultaneously established.
\begin{align}
&E\left[U_1\big((X_1^{\pi_1}(T)+\eta_1Y_1(T))-k_1(X_2^{\pi_2^\ast}(T)+\eta_2Y_2(T))\big)\right]\leq E\left[U_1\big((X_1^{\pi_1^\ast}(T)+\eta_1Y_1(T))-k_1(X_2^{\pi_2^\ast}(T)+\eta_2Y_2(T))\big)\right],\\
&E\left[U_2\big((X_2^{\pi_2}(T)+\eta_2Y_2(T))-k_2(X_1^{\pi_1^\ast}(T)+\eta_1Y_1(T))\big)\right]\leq E\left[U_2\big((X_2^{\pi_2^\ast}(T)+\eta_2Y_2(T))-k_2(X_1^{\pi_1^\ast}(T)+\eta_1Y_1(T))\big)\right].
\end{align}

The way of solving the non-zero-sum game is to solve the optimization problems of both two insurers at the same time.
That is, in Step 2, we have to solve the optimization problems for both two insurers simultaneously.
For convenience, we denote $\hat{X}_i^{\pi_i}(t)=X_i^{\pi_i}(t)-k_iX_j^{\pi_j}(t)$, for $i\neq j\in\{1,2\}$. Then, from \eqref{equ:XF}, we have
\begin{align}\label{equ:X^F}
d\hat{X}_i^{\pi_i}(t)
=&\Big[\theta_ia_i-k_i\theta_ja_j-(p(t)-a_i)(1-q_i(t))+k_i(p(t)-a_j)(1-q_j(t))
+A_iX_i^{\pi_i}(t)
-k_iA_jX_j^{\pi_j}(t)+B_iY_i(t)\nonumber\\
&-k_iB_jY_j(t)
+C_iZ_i(t)-k_iC_jZ_j(t)+(r-r_0)(b_i(t)-k_ib_j(t))\Big]dt
+q_i(t)\sigma_idW_i(t)-k_iq_j(t)\sigma_jdW_j(t)\nonumber\\
&
+(b_i(t)-k_ib_j(t))\sigma S^\beta(t) dW(t), \quad i\neq j\in\{1,2\},
\end{align}
with $\hat{X}_i^{\pi_i}(0)=X_i^{\pi_i}(0)-k_iX_j^{\pi_j}(0)
=x_i^0-k_ix_j^0\doteq\hat{x}_i^0$.
For any fixed $t\in[0,T]$,
let $\hat{X}_i^{\pi_i}(t)=X_i^{\pi_i}(t)-k_iX_j^{\pi_j}(t)
=x_i-k_ix_j\doteq\hat{x}_i$.

Then, the hybrid stochastic differential reinsurance and investment game problem can be described as the following problem.
\begin{problem} \label{problem}
The problem of insurer $i$ ($i\in\{1,2\}$) is the following optimization problem: for any $\pi_L(\cdot)=(p(\cdot),b_L(\cdot))\in\Pi_L$, find a map $(q_i^{\ast}(\cdot),b_i^{\ast}(\cdot))
=(\alpha_i^{\ast}(\cdot,p(\cdot),b_L(\cdot)),
\beta_i^{\ast}(\cdot,p(\cdot),b_L(\cdot)))
:[0,T]\times\Omega\times\Pi_L\rightarrow\Pi_i$ such that the following value function holds.
\begin{align}\label{equ:218}
&V^{F_i}(t,\hat{x}_i,y_i,y_j,s;p(\cdot),b_L(\cdot),
\alpha_i^{\ast}(\cdot,p(\cdot),b_L(\cdot)),\beta_i^{\ast}(\cdot,p(\cdot),b_L(\cdot)))\nonumber\\
&=\sup_{(q_i(\cdot),b_i(\cdot))\in\Pi_i}V^{F_i}(t,\hat{x}_i,y_i,y_j,s;
p(\cdot),b_L(\cdot),q_i(\cdot),b_i(\cdot))\nonumber\\
&=\sup_{(q_i(\cdot),b_i(\cdot))\in\Pi_i}E_{t,\hat{x}_i,y_i,y_j,s}
\left[U_i\big(X_i^{\pi_i}(T)+\eta_iY_i(T)-k_i(X_j^{\pi_j^\ast}(T)+\eta_jY_j(T))\big)\right],  i\neq j\in\{1,2\}.
\end{align}

The reinsurer's problem is the following optimization problem: find the optimal strategy $(p^{\ast}(\cdot),b_L^{\ast}(\cdot))\in\Pi_L$ such that the following value function holds.
\begin{align}
&V^L\big(t,x_L,y_L,s;p^{\ast}(\cdot),b_L^{\ast}(\cdot),
\alpha_1^{\ast}(\cdot,p^{\ast}(\cdot),b_L^{\ast}(\cdot)),
\beta_1^{\ast}(\cdot,p^{\ast}(\cdot),b_L^{\ast}(\cdot)),
\alpha_2^{\ast}(\cdot,p^{\ast}(\cdot),b_L^{\ast}(\cdot)),
\beta_2^{\ast}(\cdot,p^{\ast}(\cdot),b_L^{\ast}(\cdot))\big)\nonumber\\
&=\sup_{(p(\cdot),b_L(\cdot))\in\Pi_L}V^L\big(t,x_L,y_L,s;p(\cdot),b_L(\cdot),
\alpha_1^{\ast}(\cdot,p(\cdot),b_L(\cdot)),
\beta_1^{\ast}(\cdot,p(\cdot),b_L(\cdot)),\alpha_2^{\ast}(\cdot,p(\cdot),b_L(\cdot)),
\beta_2^{\ast}(\cdot,p(\cdot),b_L(\cdot))\big)\nonumber\\
&=\sup_{(p(\cdot),b_L(\cdot))\in\Pi_L}E_{t,x_L,y_L,s}
\left[U_L(X_L^{\pi_L}(T)+\eta_LY_L(T))\right],
\end{align}
where $U_L$ is the utility function of the reinsurer.
\end{problem}

\begin{definition}\label{def2}
The pair $\big(p^{\ast}(\cdot),b_L^{\ast}(\cdot),
\alpha_1^{\ast}(\cdot,p^{\ast}(\cdot),b_L^{\ast}(\cdot)),
\beta_1^{\ast}(\cdot,p^{\ast}(\cdot),b_L^{\ast}(\cdot)),
\alpha_2^{\ast}(\cdot,p^{\ast}(\cdot),b_L^{\ast}(\cdot)),
\beta_2^{\ast}(\cdot,p^{\ast}(\cdot),b_L^{\ast}(\cdot))\big)$ is called an equilibrium strategy of the hybrid game.
\end{definition}

Furthermore, if there is no risk of confusion, when the equilibrium strategy of the hybrid game is adopted,
$V^{F_i}(t,\hat{x}_i,y_i,\newline y_j,s;p^{\ast}(\cdot),b_L^{\ast}(\cdot),
\alpha_i^{\ast}(\cdot,p^{\ast}(\cdot),b_L^{\ast}(\cdot)),
\beta_i^{\ast}(\cdot,p^{\ast}(\cdot),b_L^{\ast}(\cdot)))$
is also called the value function of the insurer $i$.

\section{Solution to the hybrid game for CARA preference}\label{section 3}

Assume that both the reinsurer and insurers are constant absolute risk aversion (CARA) agents, i.e., the reinsurer and insurer $i$ ($i\in\{1,2\}$) have exponential utility functions:
\begin{align}
U_L(x_L+\eta_Ly_L)&=-\frac{1}{\gamma_L}\exp(-\gamma_L(x_L+\eta_Ly_L)),\\
U_i(\hat{x}_i+\eta_iy_i-k_i\eta_jy_j)&
=-\frac{1}{\gamma_i}\exp(-\gamma_i(\hat{x}_i+\eta_iy_i-k_i\eta_jy_j)),\quad i\neq j\in\{1,2\},
\end{align}
where $\gamma_L>0$ and $\gamma_i>0$ are the constant absolute risk aversion coefficients of the reinsurer and insurer $i$, respectively.
According to existing literatures, the optimal control problem with delay is infinite-dimensional in general. To make the problem solvable and finite-dimensional, we assume the following conditions on parameters  $C_L=\eta_Le^{-\alpha_Lh_L}$, $B_Le^{-\alpha_Lh_L}=(\alpha_L+A_L+\eta_L)C_L$, $C_i=\eta_ie^{-\alpha_ih_i}$, $B_ie^{-\alpha_ih_i}=(\alpha_i+A_i+\eta_i)C_i$, $i\in\{1,2\}$.

\subsection{Equilibrium strategy and value functions}\label{subsection3.1}

By using the idea of backward induction mentioned in Section \ref{section 2.4} and dynamic programming techniques, we solve the hybrid game problem and obtain the following theorem.

\begin{theorem}\label{Theorem1}
Suppose that $A_1+\eta_1=A_2+\eta_2$, $k_1k_2<1$ and $k_1k_2\rho^2<1$.
The equilibrium strategy of the Stackelberg game problem is $(p^{\ast}(t),b_L^{\ast}(t),q_1^{\ast}(t),b_1^{\ast}(t),q_2^{\ast}(t),b_2^{\ast}(t))$,
where
$b_L^{\ast}(t)$, $b_1^{\ast}(t)$ and $b_2^{\ast}(t)$ are given by
\begin{align}
b^{\ast}_L(t)
=&\frac{s^{-2\beta}}{\gamma_L\varphi^L(t)}\big[\frac{(r-r_0)}{\sigma^2}-2\beta g_1(t)\big],\label{equ:bL*}\\
b^{\ast}_{1}(t)=&\frac{s^{-2\beta}}{(1-k_1k_2)\varphi^{F_1}(t)}
\big(\frac{1}{\gamma_{1}}+\frac{k_1}{\gamma_{2}}\big)
\big[\frac{r-r_0}{\sigma^2}-2\beta g_1(t)\big],\label{equ:b1*}\\
b^{\ast}_{2}(t)=&\frac{s^{-2\beta}}{(1-k_1k_2)\varphi^{F_2}(t)}
\big(\frac{1}{\gamma_{2}}+\frac{k_2}{\gamma_{1}}\big)
\big[\frac{r-r_0}{\sigma^2}-2\beta g_1(t)\big];\label{equ:b2*}
\end{align}
$p^{\ast}(t)$, $q_1^{\ast}(t)$ and $q_2^{\ast}(t)$ under different cases are
given by the following:

\begin{enumerate}[{Case} (1):]
  \item If $N^{cF_1}(t)+\frac{k_1\rho\sigma_2}{\sigma_1}\geq 1$, $N^{cF_2}(t)+\frac{k_2\rho\sigma_1}{\sigma_2}\geq 1$, then
      $p^{\ast}(t)=p$,  $q_1^{\ast}(t)=1$,  $q_2^{\ast}(t)=1$,
      where $\forall p\in[c_F,\bar{c}]$;
  \item If $K[N^{\bar{c}F_1}(t)+\frac{k_1\rho\sigma_2}{\sigma_1}N^{\bar{c}F_2}(t)]\geq 1$,
    $N^{\bar{c}F_2}(t)\leq(1-\frac{k_2\rho\sigma_1}{\sigma_2})M^{F_2}(t)$, then
     $p^{\ast}(t)=\bar{c}$,  $q_1^{\ast}(t)=1$,  $q_2^{\ast}(t)=N^{\bar{c}F_2}(t)+\frac{k_2\rho \sigma_{1}}{\sigma_{2}}$;
  \item If $K[N^{cF_1}(t)+\frac{k_1\rho\sigma_2}{\sigma_1}N^{cF_2}(t)]\geq 1$,
    $(1-\frac{k_2\rho\sigma_1}{\sigma_2})M^{F_2}(t)\leq N^{cF_2}(t)< 1-\frac{k_2\rho\sigma_1}{\sigma_2}$, then
     $p^{\ast}(t)=c_F$,  $q_1^{\ast}(t)=1$,
     $q_2^{\ast}(t)=N^{cF_2}(t)+\frac{k_2\rho \sigma_{1}}{\sigma_{2}}$;
  \item If $K[N^{aF_1}(t)+K^{\tilde{F}_1}M^{F_2}(t)] \geq 1$,
    $N^{cF_2}(t)<(1-\frac{k_2\rho\sigma_1}{\sigma_2})M^{F_2}(t)<\min\{N^{\bar{c}F_2}(t),1-\frac{k_2\rho\sigma_1}{\sigma_2}\}$, then $p^{\ast}(t)=a_2+\gamma_2\sigma_2^2\varphi^{F_2}(t)
    (1-\frac{k_2\rho\sigma_1}{\sigma_2})M^{F_2}(t)$, $q_1^{\ast}(t)=1$,
    $q_2^{\ast}(t)=(1-\frac{k_2\rho\sigma_1}{\sigma_2})M^{F_2}(t)+\frac{k_2\rho \sigma_{1}}{\sigma_{2}}$;
    \item If $K[N^{\bar{c}F_2}(t)+\frac{k_2\rho\sigma_1}{\sigma_2}N^{\bar{c}F_1}(t)]\geq 1$,
    $N^{\bar{c}F_1}(t)\leq(1-\frac{k_1\rho\sigma_2}{\sigma_1})M^{F_1}(t)$, then
     $p^{\ast}(t)=\bar{c}$,
     $q_1^{\ast}(t)=N^{\bar{c}F_1}(t)+\frac{k_1\rho \sigma_{2}}{\sigma_{1}}$, $q_2^{\ast}(t)=1$;
  \item If $K[N^{cF_2}(t)+\frac{k_2\rho\sigma_1}{\sigma_2}N^{cF_1}(t)]\geq 1$,
     $(1-\frac{k_1\rho\sigma_2}{\sigma_1})M^{F_1}(t)\leq N^{cF_1}(t)< 1-\frac{k_1\rho\sigma_2}{\sigma_1}$, then
     $p^{\ast}(t)=c_F$,
     $q_1^{\ast}(t)=N^{cF_1}(t)+\frac{k_1\rho \sigma_{2}}{\sigma_{1}}$,  $q_2^{\ast}(t)=1$;
  \item If $K[N^{aF_2}(t)+K^{\tilde{F}_2}M^{F_1}(t)] \geq 1$,
    $N^{cF_1}(t)<(1-\frac{k_1\rho\sigma_2}{\sigma_1})M^{F_1}(t)<\min\{N^{\bar{c}F_1}(t),1-\frac{k_1\rho\sigma_2}{\sigma_1}\}$, then $p^{\ast}(t)=a_1+\gamma_1\sigma_1^2\varphi^{F_1}(t)
    (1-\frac{k_1\rho\sigma_2}{\sigma_1})M^{F_1}(t)$,
    $q_1^{\ast}(t)=(1-\frac{k_1\rho\sigma_2}{\sigma_1})M^{F_1}(t)+\frac{k_1\rho \sigma_{2}}{\sigma_{1}}$,
     $q_2^{\ast}(t)=1$;
  \item If $K[N^{\bar{c}F_1}(t)+\frac{k_1\rho\sigma_2}{\sigma_1}N^{\bar{c}F_2}(t)]< 1$, $K[N^{\bar{c}F_2}(t)+\frac{k_2\rho\sigma_1}{\sigma_2}N^{\bar{c}F_1}(t)]< 1$,
      $\frac{P^N(t)}{P^D(t)}\geq \bar{c}$, then  $p^{\ast}(t)=\bar{c}$,
    $q_1^{\ast}(t)=K[N^{\bar{c}F_1}(t)+\frac{k_1\rho\sigma_2}{\sigma_1}N^{\bar{c}F_2}(t)]$,
   $q_2^{\ast}(t)=K[N^{\bar{c}F_2}(t)+\frac{k_2\rho\sigma_1}{\sigma_2}N^{\bar{c}F_1}(t)]$;
  \item If $K[N^{cF_1}(t)+\frac{k_1\rho\sigma_2}{\sigma_1}N^{cF_2}(t)]< 1$, $K[N^{cF_2}(t)+\frac{k_2\rho\sigma_1}{\sigma_2}N^{cF_1}(t)]< 1$,
       $\frac{P^N(t)}{P^D(t)}\leq c_F$, then $p^{\ast}(t)=c_F$,
    $q_1^{\ast}(t)=K[N^{cF_1}(t)+\frac{k_1\rho\sigma_2}{\sigma_1}N^{cF_2}(t)]$, $q_2^{\ast}(t)=K[N^{cF_2}(t)+\frac{k_2\rho\sigma_1}{\sigma_2}N^{cF_1}(t)]$;
  \item If $K[\frac{\frac{P^N(t)}{P^D(t)}-a_1}{\gamma_{1}\sigma_{1}^2\varphi^{F_1}(t)} +\frac{k_1\rho(\frac{P^N(t)}{P^D(t)}-a_2)}{\gamma_{2}\sigma_{2}\sigma_{1}\varphi^{F_2}(t)}]<1$,
    $K[\frac{\frac{P^N(t)}{P^D(t)}-a_2}{\gamma_{2}\sigma_{2}^2\varphi^{F_2}(t)} +\frac{k_2\rho(\frac{P^N(t)}{P^D(t)}-a_1)}{\gamma_{1}\sigma_{1}\sigma_{2}\varphi^{F_1}(t)}]<1$, $c_F<\frac{P^N(t)}{P^D(t)}<\bar{c}$, then
    $p^{\ast}(t)=\frac{P^N(t)}{P^D(t)}$,
     $q_1^{\ast}(t)=K [\frac{\frac{P^N(t)}{P^D(t)}-a_1}{\gamma_{1}\sigma_{1}^2\varphi^{F_1}(t)} +\frac{k_1\rho(\frac{P^N(t)}{P^D(t)}-a_2)}{\gamma_{2}\sigma_{2}\sigma_{1}\varphi^{F_2}(t)}]$,
     $q_2^{\ast}(t)=K[\frac{\frac{P^N(t)}{P^D(t)}-a_2}{\gamma_{2}\sigma_{2}^2\varphi^{F_2}(t)} +\frac{k_2\rho(\frac{P^N(t)}{P^D(t)}-a_1)}{\gamma_{1}\sigma_{1}\sigma_{2}\varphi^{F_1}(t)}]$.
\end{enumerate}
where
$\varphi^L(t)$, $\varphi^{F_i}(t)$ and $g_1(t)$ are given by \eqref{equ:varphiL}, \eqref{equ:varphiF} and \eqref{equ:g1t}, respectively; $K$, $K^{\tilde{F}_i}$, $K^{F_i}$, $N^{cF_i}(t)$, $N^{\bar{c}F_i}(t)$, $N^{aF_i}(t)$ and $M^{F_i}(t)$ are given by \eqref{equ:KKK}; $P^{N}(t)$ and $P^{D}(t)$ are given by \eqref{equ:PNt} and \eqref{equ:PDt}, respectively.

The value function of the reinsurer is given by
\begin{align}
V^L(t,x_L,y_L,s)
=-\frac{1}{\gamma_L}\exp\{-\gamma_L\varphi^L(t)(x_L+\eta_Ly_L)+g_1(t)s^{-2\beta}+g^L_2(t)\},
\end{align}
and, the value function of insurer $1$ is given by
\begin{align}
V^{F_1}(t,\hat{x}_{1},y_{1},y_{2},s)
=&-\frac{1}{\gamma_{1}}\exp\{-\gamma_{1}\varphi^{F_1}(t)(\hat{x}_{1}
+\eta_{1}y_{1}-k_1\eta_{2}y_{2})+g_1(t)s^{-2\beta}+g^{F_1}_2(t)\},
\end{align}
and, the value function of insurer $2$ is given by
\begin{align}
V^{F_2}(t,\hat{x}_{2},y_{2},y_{1},s)
=&-\frac{1}{\gamma_{2}}\exp\{-\gamma_{2}\varphi^{F_2}(t)(\hat{x}_{2}
+\eta_{2}y_{2}-k_2\eta_{1}y_{1})+g_1(t)s^{-2\beta}+g^{F_2}_2(t)\},
\end{align}
where $g^{L}_2(t)$, $g^{F_1}_2(t)$ and $g^{F_2}_2(t)$ under different cases are given by Table \ref{g};
\begin{table}[htp]
\setlength{\abovecaptionskip}{0.cm}
\caption{$g^{L}_2(t)$, $g^{F_1}_2(t)$ and $g^{F_2}_2(t)$ under different cases.} 
\centering 
\setlength{\tabcolsep}{8mm}
\begin{tabular}
    {@{}ccccc@{}}
    \toprule
    $Cases$ & $g^{L}_2(t)$ & $g^{F_1}_2(t)$  & $g^{F_2}_2(t)$  \\ \midrule
    $Case (1)$ & $g^{La}_2(t)$ & $g^{F_1a}_2(t)$ & $g^{F_2a}_2(t)$ \\ \midrule
    $Case (2)$ & $g^{L2b1}_2(t)$ & $g^{F_1b1}_2(t)$ & $g^{\tilde{F}_2b1}_2(t)$ \\ \midrule
    $Case (3)$ & $g^{L2b2}_2(t)$ & $g^{F_1b2}_2(t)$ & $g^{\tilde{F}_2b2}_2(t)$ \\ \midrule
    $Case (4)$ & $g^{L2b3}_2(t)$ & $g^{F_1b3}_2(t)$ & $g^{\tilde{F}_2b3}_2(t)$ \\ \midrule
    $Case (5)$ & $g^{L1b1}_2(t)$ & $g^{\tilde{F}_1b1}_2(t)$ & $g^{F_2b1}_2(t)$ \\ \midrule
    $Case (6)$ & $g^{L1b2}_2(t)$ & $g^{\tilde{F}_1b2}_2(t)$ & $g^{F_2b2}_2(t)$ \\ \midrule
    $Case (7)$ & $g^{L1b3}_2(t)$ & $g^{\tilde{F}_1b3}_2(t)$ & $g^{F_2b3}_2(t)$ \\ \midrule
    $Case (8)$ & $g^{Lc1}_2(t)$ & $g^{F_1c1}_2(t)$ & $g^{F_2c1}_2(t)$ \\ \midrule
    $Case (9)$ & $g^{Lc2}_2(t)$ & $g^{F_1c2}_2(t)$ & $g^{F_2c2}_2(t)$ \\ \midrule
    $Case (10)$ & $g^{Lc3}_2(t)$ & $g^{F_1c3}_2(t)$ & $g^{F_2c3}_2(t)$ \\
\bottomrule
\end{tabular}
\label{g}
\end{table}
$g^{La}_2(t)$ is given by equation \eqref{equ:gLa}; for $j\in\{1,2\}$, $g^{Ljb1}_2(t)$, $g^{Ljb2}_2(t)$ and $g^{Ljb3}_2(t)$ are given by equations \eqref{equ:gLjb1}, \eqref{equ:gLjb2} and \eqref{equ:gLjb3}, respectively; $g^{Lc1}_2(t)$, $g^{Lc2}_2(t)$ and $g^{Lc3}_2(t)$ are given by equations \eqref{equ:gLc1}, \eqref{equ:gLc2} and \eqref{equ:gLc3}, respectively;
$g^{F_ia}_2(t)$ is given by equation \eqref{equ:g2Fia};
for $i\in\{1,2\}$, $g^{F_ib1}_2(t)$, $g^{F_ib2}_2(t)$ and $g^{F_ib3}_2(t)$ are given by equations \eqref{equ:g2Fib1}, \eqref{equ:g2Fib2} and \eqref{equ:g2Fib3},respectively;
$g^{\tilde{F}_jb1}_2(t)$, $g^{\tilde{F}_jb2}_2(t)$ and $g^{\tilde{F}_jb3}_2(t)$ are given by equations \eqref{equ:g2Fjb1}, \eqref{equ:g2Fjb2} and \eqref{equ:g2Fjb3}, respectively;
$g^{F_ic1}_2(t)$, $g^{F_ic2}_2(t)$ and $g^{F_ic3}_2(t)$ are given by equations \eqref{equ:g2Fic1}, \eqref{equ:g2Fic2} and \eqref{equ:g2Fic3},respectively.
\end{theorem}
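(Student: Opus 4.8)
We only sketch the derivation; the rigorous confirmation that the constructed feedback strategy is indeed an equilibrium is deferred to the verification theorem. The plan is to follow the backward-induction scheme of Section~\ref{section 2.4}: first solve the followers' non-zero-sum subgame for an arbitrary announced leader strategy $(p(\cdot),b_L(\cdot))\in\Pi_L$, and then substitute the followers' best responses into the reinsurer's dynamics \eqref{equ:XL} and solve the leader's problem. For the followers' step I would write down the coupled HJB system for $V^{F_1}$ and $V^{F_2}$ on the state $(\hat x_i,y_i,y_j,s)$ associated with the controlled dynamics \eqref{equ:X^F} of $\hat X_i^{\pi_i}$, using $\tfrac{d}{dt}Y_i(t)=X_i^{\pi_i}(t)-e^{-\alpha_i h_i}Z_i(t)-\alpha_i Y_i(t)$ so that the delayed variables enter the generator linearly. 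I would then try the exponential-affine ansatz
\[
V^{F_i}(t,\hat x_i,y_i,y_j,s)=-\tfrac{1}{\gamma_i}\exp\bigl\{-\gamma_i\varphi^{F_i}(t)(\hat x_i+\eta_i y_i-k_i\eta_j y_j)+g_1(t)s^{-2\beta}+g_2^{F_i}(t)\bigr\},
\]
with terminal data $\varphi^{F_i}(T)=1$, $g_1(T)=g_2^{F_i}(T)=0$ dictated by $V^{F_i}(T,\cdot)=U_i$. Substituting this ansatz, the restrictions $C_i=\eta_i e^{-\alpha_i h_i}$ and $B_i e^{-\alpha_i h_i}=(\alpha_i+A_i+\eta_i)C_i$ are precisely what force the $z_i$-terms to cancel and the $x_i$- and $y_i$-contributions to appear only through the combination $\hat x_i+\eta_i y_i-k_i\eta_j y_j$; the hypothesis $A_1+\eta_1=A_2+\eta_2$ then makes the drift of this combination equal $(A_i+\eta_i)(\hat x_i+\eta_i y_i-k_i\eta_j y_j)$, collapsing the infinite-dimensional delay problem to $\dot\varphi^{F_i}=-(A_i+\eta_i)\varphi^{F_i}$, so that $\varphi^{F_1}\equiv\varphi^{F_2}$ and a common $g_1$ is shared across all three value functions.

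Next I would carry out the pointwise maximization inside the HJB equations. The first-order condition in $b_i$ gives the interior investment, and after inverting the resulting $2\times2$ linear system (nonsingular because $k_1k_2<1$) one obtains \eqref{equ:b1*}--\eqref{equ:b2*}; the common bracket $\tfrac{r-r_0}{\sigma^2}-2\beta g_1(t)$ comes from the cross term between the CEV diffusion $\sigma^2 s^{2\beta}$ in the wealth dynamics and the factor $s^{-2\beta}$ in the ansatz. The first-order condition in $q_i$ is more delicate: because $\hat X_i^{\pi_i}$ depends on $q_j$ through the premium term $k_i(p(t)-a_j)(1-q_j(t))$ and through the common-shock diffusion $-k_i q_j\sigma_j\,dW_j$, the stationarity conditions for $(q_1,q_2)$ form a coupled linear system whose determinant is $1-k_1k_2\rho^2\neq0$; solving it yields the unconstrained optima of the form $q_i=K\bigl[N^{\,\cdot F_i}(t)+\tfrac{k_i\rho\sigma_j}{\sigma_i}N^{\,\cdot F_j}(t)\bigr]$. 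Projecting these onto $[0,1]^2$ and recording which of $q_1,q_2$ is clamped at its upper bound produces the branching that, combined with the leader's step below, yields Cases~(1)--(10).

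For the leader's step I would substitute the followers' responses $q_i^{\ast}(\cdot;p,b_L)$, $b_i^{\ast}(\cdot;p,b_L)$ into \eqref{equ:XL}, set up the reinsurer's HJB on $(x_L,y_L,s)$, and use the analogous ansatz with the same $g_1$; the restrictions $C_L=\eta_L e^{-\alpha_L h_L}$, $B_L e^{-\alpha_L h_L}=(\alpha_L+A_L+\eta_L)C_L$ again remove the delay variables and give $\dot\varphi^L=-(A_L+\eta_L)\varphi^L$, $\varphi^L(T)=1$. The first-order condition in $b_L$ yields \eqref{equ:bL*}, while maximizing the reinsurer's Hamiltonian over $p\in[c_F,\bar c]$ gives, on the interior, the stationary point $p^{\ast}=P^N(t)/P^D(t)$ and otherwise the boundary values $c_F$ or $\bar c$; each of the three branches of the followers' optimization (both $q$'s free, one clamped, the other clamped) crossed with these three possibilities for $p$ produces precisely the ten cases. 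In each case the residual functions $g_1,g_2^L,g_2^{F_1},g_2^{F_2}$ solve first-order ODEs with zero terminal value obtained by integrating the optimized Hamiltonians — $g_1$ a Riccati-type equation from the CEV elasticity, the $g_2$'s linear — whose explicit solutions are the expressions of Table~\ref{g}; back-substitution confirms the ansatz solves the HJB systems. The main obstacle is the bookkeeping of this constrained coupled optimization: one must check that the ten regions are exhaustive and mutually exclusive, that the leader's optimal $p^{\ast}$ is consistent with the region in which the followers actually respond (a fixed-point-in-cases argument), and that in each region the candidate genuinely maximizes the jointly concave Hamiltonian rather than being a spurious critical point; by contrast, the delay and CEV parts are routine once the stated parameter restrictions are imposed.
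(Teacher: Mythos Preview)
Your proposal is correct and follows essentially the same route as the paper's proof in Appendix~A: backward induction with the exponential-affine ansatz for each value function, the parameter restrictions on $(B_i,C_i,B_L,C_L)$ together with $A_1+\eta_1=A_2+\eta_2$ to collapse the delay, first-order conditions producing the coupled $2\times2$ systems for $(b_1,b_2)$ and $(q_1,q_2)$ (with determinants $1-k_1k_2$ and $1-k_1k_2\rho^2$), and then the leader's constrained maximization over $p\in[c_F,\bar c]$ with case splitting. One small slip in your bookkeeping: the ten cases arise as $1+2\times3+3$ (both $q$'s clamped at $1$ gives one case with $p$ arbitrary; exactly one $q$ clamped gives three $p$-subcases for each of the two symmetric roles; both $q$'s interior gives three $p$-subcases), not as a straight $3\times3$ product, and $g_1$ actually solves a linear first-order ODE rather than a Riccati equation.
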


\begin{proof}
  See \ref{Appendix A}.
\end{proof}

\begin{remark}
Case (1) in Theorem \ref{Theorem1} corresponds to Case (La) of \ref{Appendix A}.
In this situation, both two insurers do not sign the reinsurance contract and bear all claims themselves.
Case (2), Case (3) and Case (4) in Theorem \ref{Theorem1} correspond to the case of $i=1,j=2$ in Case (Lb) of \ref{Appendix A}.
This situation can be understood as that insurer $1$ bears all the claim risk by itself, and insurer $2$ adopts the reinsurance strategy to spread its claim risk.
Case (5), Case (6) and Case (7) in Theorem \ref{Theorem1} correspond to the case of $i=2,j=1$ in Case (Lb) of \ref{Appendix A}.
That is, insurer $2$ bears all the claim risk by itself, and insurer $1$ adopts the reinsurance strategy to spread its claim risk.
Case (8), Case (9) and Case (10) in Theorem \ref{Theorem1} correspond to the Case (Lc) of \ref{Appendix A}.
In this case, both two insurers will sign the reinsurance contract to spread their claims risk.
\end{remark}

\begin{remark}
More generally, if there's one reinsurer and $n$ insurers in the insurance market,
the optimal premium strategy and the optimal reinsurance strategies will have  $C_n^0+3(C_n^1+C_n^2+\cdots+C_n^n)=1+3(2^n-1)$ situations.
\end{remark}

\begin{corollary}\label{corollary1}
The insurer $i$'s ($i\in\{1,2\}$) optimal reinsurance strategy can be expressed by the optimal premium price strategy and the insurer $j$'s ($j\neq i\in\{1,2\}$) optimal reinsurance strategy. That is,
\begin{align}
q_i^{\ast}(t)
=\Big[\frac{p^{\ast}(t)-a_i}{\gamma_i\sigma_i^2\varphi^{F_i}(t)}+\frac{k_i\rho \sigma_jq_j^{\ast}(t)}{\sigma_i}\Big]\wedge 1, \quad i\neq j\in\{1,2\}.\label{equ:qiqj}
\end{align}

The optimal investment strategy of insurer $i$'s ($i\in\{1,2\}$) can be expressed by the optimal reinsurance strategy of insurer $j$'s ($j\neq i\in\{1,2\}$). That is
\begin{align}
b^{\ast}_i(t)
=k_ib_j^{\ast}(t)+\frac{1}{\gamma_i\varphi^{F_i}(t)s^{2\beta}}
\big[\frac{r-r_0}{\sigma^2}-2\beta g_1(t)\big], \quad i\neq j\in\{1,2\}.\label{equ:bibj}
\end{align}

Moreover, we have
\begin{align}
\frac{\partial q_i^{\ast}(t)}{\partial p^{\ast}(t)}>0,\quad
\frac{\partial q_i^{\ast}(t)}{\partial q_j^{\ast}(t)}>0,\quad
\frac{\partial b^{\ast}_i(t)}{\partial b^{\ast}_j(t)}>0,
\quad i\neq j\in\{1,2\}.\label{equ:weifen}
\end{align}
\end{corollary}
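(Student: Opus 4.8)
The plan is to recognize the identities \eqref{equ:qiqj} and \eqref{equ:bibj} as the first-order optimality conditions of the two insurers' problems already treated in the proof of Theorem \ref{Theorem1}, and then to obtain the monotonicity relations \eqref{equ:weifen} by differentiation. For \eqref{equ:qiqj}: recall from the derivation of Theorem \ref{Theorem1} (carried out in \ref{Appendix A}) that, after substituting the exponential ansatz for $V^{F_i}$, insurer $i$'s HJB equation reduces to a pointwise optimization over $(q_i,b_i)$ in which insurer $j$'s equilibrium controls $(q_j^\ast,b_j^\ast)$ are held fixed and the coupling of the two claim risks enters through $d\langle W_1,W_2\rangle=\rho\,dt$; since $W$ is independent of $W_1$ and $W_2$, this maximization separates into an unconstrained concave quadratic in $b_i$ and a concave quadratic in $q_i$ over $[0,1]$. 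The stationarity condition in $q_i$ produces the interior candidate $\tilde q_i(t)=\frac{p^\ast(t)-a_i}{\gamma_i\sigma_i^2\varphi^{F_i}(t)}+\frac{k_i\rho\sigma_j q_j^\ast(t)}{\sigma_i}$. Because $p^\ast(t)\ge c_F=\max\{c_1,c_2\}\ge c_i=(1+\theta_i)a_i>a_i$, the first summand is strictly positive, and together with $\varphi^{F_i}(t)>0$, $k_i\ge0$, $\rho>0$, $q_j^\ast(t)\ge0$ this gives $\tilde q_i(t)>0$; hence the lower constraint $q_i\ge0$ is never active, the constrained maximizer over $[0,1]$ is $q_i^\ast(t)=\tilde q_i(t)\wedge1$, and this is exactly \eqref{equ:qiqj}. (Equivalently, \eqref{equ:qiqj} can be checked directly against each of the ten cases of Theorem \ref{Theorem1}.)

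For \eqref{equ:bibj}: the control $b_i$ is unconstrained, so its stationarity condition holds with equality. Carrying it out in insurer $i$'s HJB equation -- collecting the drift term $(r-r_0)(b_i-k_ib_j)$, the diffusion contribution proportional to $(b_i-k_ib_j)^2\sigma^2 s^{2\beta}$, and the mixed second-order term coupling $b_i$ with $s$ that generates the $g_1(t)$ factor -- yields precisely \eqref{equ:bibj}. Substituting $b_j^\ast$ from \eqref{equ:b1*} and \eqref{equ:b2*} and using $\varphi^{F_1}(t)=\varphi^{F_2}(t)$, which follows from the standing hypothesis $A_1+\eta_1=A_2+\eta_2$ together with the fact that $\varphi^{F_1}$ and $\varphi^{F_2}$ solve the same equation \eqref{equ:varphiF} under the same terminal value, reproduces \eqref{equ:b1*} and \eqref{equ:b2*}, confirming that the two descriptions agree.

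The relations \eqref{equ:weifen} then follow by inspection. On the interior regime $\tilde q_i(t)<1$, differentiating \eqref{equ:qiqj} gives $\partial q_i^\ast/\partial p^\ast=1/\big(\gamma_i\sigma_i^2\varphi^{F_i}(t)\big)>0$ and $\partial q_i^\ast/\partial q_j^\ast=k_i\rho\sigma_j/\sigma_i$, which is strictly positive because $\rho=\tilde\lambda\mu_1\mu_2/\big(\sqrt{\lambda_1\lambda_2}\,\tilde\sigma_1\tilde\sigma_2\big)>0$ and $k_i>0$ in the competitive setting; on the boundary $q_i^\ast\equiv1$ both derivatives vanish. Likewise $\partial b_i^\ast/\partial b_j^\ast=k_i>0$ is immediate from \eqref{equ:bibj}.

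The step I expect to demand the most care is the uniform handling of the admissible set for the reinsurance control: one must confirm that the lower bound $q_i\ge0$ is inactive in every regime and that the truncation at $1$ built into \eqref{equ:qiqj} is precisely the one that produces the boundary values ($q_i^\ast=1$, $q_i^\ast=N^{\bar cF_i}(t)+k_i\rho\sigma_j/\sigma_i$, etc.) appearing in the ten cases of Theorem \ref{Theorem1}; and one must remember to invoke $\varphi^{F_1}\equiv\varphi^{F_2}$ when reconciling \eqref{equ:bibj} with \eqref{equ:b1*}--\eqref{equ:b2*}. The strict inequalities in \eqref{equ:weifen} are to be read on the interior regime and under $k_i>0$; for $k_i=0$, or on the truncation boundary, they degenerate to equalities or zeros.
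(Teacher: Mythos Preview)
Your proposal is correct and follows exactly the approach the paper has in mind: the identities \eqref{equ:qiqj} and \eqref{equ:bibj} are precisely the first-order conditions \eqref{equ:fq} and \eqref{equ:fb} derived in \ref{Appendix A} (with $g_1^{F_i}=g_1$), together with the observation there that $p(t)\ge c_F>a_i$ renders the lower constraint $q_i\ge0$ inactive, and the sign conditions in \eqref{equ:weifen} follow by direct differentiation. The paper simply declares these ``obviously established'' and omits the argument; your write-up is a faithful expansion of what is implicit, and your closing caveat about the boundary regime and the degenerate case $k_i=0$ is an appropriate clarification of how the strict inequalities should be read.
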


\begin{proof}
\eqref{equ:qiqj},\eqref{equ:bibj} and \eqref{equ:weifen} are obviously established, and we omit the proof here.
\end{proof}

Through Corollary \ref{corollary1}, we find that the equilibrium reinsurance-investment strategies of the two insurers interact with each other and exhibit herd effect.
That is insurer $i$'s ($i\in\{1,2\}$) will imitate insurer $j$'s ($j\neq i\in\{1,2\}$) reinsurance-investment strategy.
More specifically, the amount of insurer $i$ invested in the risky asset will increase with the amount that insurer $j$ invested in the risky asset; the reserve proportion of insurer $i$ will also increase with the increase of the reserve proportion of insurer $j$.
Furthermore, the optimal reinsurance strategies of insurers depend on the optimal premium strategy.
The reserve proportion of insurers will increase with the increase of reinsurance premium price, that is, a high reinsurance premium price will reduce the reinsurance demand.
These findings illustrate considering leader-follower relationship and competitive relationship at the same time will make decisions more rational and more realistic.


Similar to \cite{Lin-2015-2016}, we find that the reinsurer's investment strategy $b_L^\ast(t)$ in Theorem \ref{Theorem1} contains two parts.
The first part $\frac{1}{\gamma_L\varphi^L(t)s^{2\beta}}\frac{(r-r_0)}{\sigma^2}$ has an updated instantaneous volatility at the current time $t$, while the second part $\frac{-2\beta}{\gamma_L\varphi^L(t)s^{2\beta}}g_1(t)$ results from the fact that the reinsurer tries to hedge its portfolio against the additional volatility risk.
When $\beta>0$, we have $e^{-2r_0\beta(T-t)}<1$ and $\frac{-2\beta}{\gamma_L\varphi^L(t)s^{2\beta}}g_1(t)>0$, it will cause positive deviation from the classical result (i.e., the investment strategy when the price of the risky asset obeys GBM model
\footnote{If $\beta=0$, the CEV model reduces to the GBM model. Then, the optimal investment strategies of the reinsurer and the insurers are given by
$b^{\ast}_L(t)=\frac{1}{\gamma_L\varphi^L(t)}\frac{(r-r_0)}{\sigma^2}, b^{\ast}_{i}(t)=\frac{1}{(1-k_1k_2)\varphi^{F_i}(t)}
\big(\frac{1}{\gamma_{i}}+\frac{k_i}{\gamma_{j}}\big)\frac{r-r_0}{\sigma^2},
i\neq j\in\{1,2\}.$}).
Conversely, when $\beta<0$, we have $e^{-2r_0\beta(T-t)}>1$ and $\frac{-2\beta}{\gamma_L\varphi^L(t)s^{2\beta}}g_1(t)<0$, it will cause negative deviation from the classical result.
Correspondingly, the investment strategies of two insurers have similar analysis.

\begin{corollary}\label{corollary2}
If $\beta\geq 0$, some properties of $b_L^{\ast}(t)$ and $b_i^{\ast}(t)$ ($i=1,2$) are given in Table \ref{TablebLbF}, \eqref{equ:balpha} and \eqref{equ:beta}.
\begin{table}[htp]
\small
\caption{The properties of $b_L^{\ast}(t)$ and $b_i^{\ast}(t)$.\\}
\centering
\renewcommand\arraystretch{1}
\begin{tabular}{p{1.5cm}<{\centering}|p{1.5cm}<{\centering}|p{1.5cm}<{\centering}|
p{1.5cm}<{\centering}|p{1.5cm}<{\centering}}
$\frac{\partial b_L^{\ast}(t)}{\partial\gamma_L}$
& $\frac{\partial b_L^{\ast}(t)}{\partial h_L}$
& $\frac{\partial b_i^{\ast}(t)}{\partial \gamma_i}$
& $\frac{\partial b_i^{\ast}(t)}{\partial k_i}$
& $\frac{\partial b_i^{\ast}(t)}{\partial h_i}$\\
\hline  
$-$&$-$&$-$&$+$&$-$\\
\end{tabular}
\label{TablebLbF}
\end{table}
\begin{footnotesize}
\begin{equation}\label{equ:balpha}
\frac{\partial b_L^{\ast}(t)}{\partial\alpha_L}=\left\{
\begin{array}{rcl}
>0, &  {\alpha_L>-\frac{1}{h_L}ln\frac{1}{h_L};}\\
=0,&  {\alpha_L=-\frac{1}{h_L}ln\frac{1}{h_L};}\\
<0, &  {\alpha_L<-\frac{1}{h_L}ln\frac{1}{h_L}.}
\end{array} \right.
\frac{\partial b_i^{\ast}(t)}{\partial\alpha_i}=\left\{
\begin{array}{rcl}
>0, &  {\alpha_i>-\frac{1}{h_i}ln\frac{1}{h_i};}\\
=0,&  {\alpha_i=-\frac{1}{h_i}ln\frac{1}{h_i};}\\
<0, &  {\alpha_i<-\frac{1}{h_i}ln\frac{1}{h_i}.}
\end{array} \right.
\end{equation}
\end{footnotesize}
Furthermore, if $r_0+\alpha_L<1$ and $r_0+\alpha_i<1$, $i=1,2$, then
\begin{footnotesize}
\begin{equation}\label{equ:beta}
\frac{\partial b_L^{\ast}(t)}{\partial\eta_L}=\left\{
\begin{array}{rcl}
>0, & {h_L<-\frac{1}{\alpha_L}ln(1-r_0-\alpha_L);}\\
=0,&  {h_L=-\frac{1}{\alpha_L}ln(1-r_0-\alpha_L);}\\
<0, &  {h_L>-\frac{1}{\alpha_L}ln(1-r_0-\alpha_L).}
\end{array} \right.
\frac{\partial b_i^{\ast}(t)}{\partial\eta_i}=\left\{
\begin{array}{rcl}
 >0,&  {h_i<-\frac{1}{\alpha_i}ln(1-r_0-\alpha_i);}\\
 =0,&  {h_i=-\frac{1}{\alpha_i}ln(1-r_0-\alpha_i);}\\
 <0,&  {h_i>-\frac{1}{\alpha_i}ln(1-r_0-\alpha_i).}
\end{array} \right.
\end{equation}
\end{footnotesize}
\end{corollary}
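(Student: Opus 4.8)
The plan is to substitute the closed forms \eqref{equ:varphiL}, \eqref{equ:varphiF} and \eqref{equ:g1t} of $\varphi^L(t)$, $\varphi^{F_i}(t)$ and $g_1(t)$ into the formulas \eqref{equ:bL*}, \eqref{equ:b1*}, \eqref{equ:b2*} for $b^{\ast}_L(t)$, $b^{\ast}_1(t)$, $b^{\ast}_2(t)$ and then to differentiate term by term. Three structural facts do all the work. First, the parameter restrictions $C_L=\eta_Le^{-\alpha_Lh_L}$, $B_Le^{-\alpha_Lh_L}=(\alpha_L+A_L+\eta_L)C_L$, $A_L=r_0-B_L-C_L$ (and their analogues for $i=1,2$) are exactly what forces $\varphi^L(t)=e^{(A_L+\eta_L)(T-t)}$ and $\varphi^{F_i}(t)=e^{(A_i+\eta_i)(T-t)}$, i.e.\ pure exponentials in the effective rates $A_L+\eta_L$ and $A_i+\eta_i$. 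Second, $g_1(t)$ involves only the market primitives $r,r_0,\sigma,\beta,T$, so with respect to every parameter occurring in Table \ref{TablebLbF}, \eqref{equ:balpha} and \eqref{equ:beta} it is a constant. Third, by the discussion following Corollary \ref{corollary1}, when $\beta\geq0$ one has $g_1(t)\leq0$, so the common bracket $\frac{r-r_0}{\sigma^2}-2\beta g_1(t)$ is strictly positive; together with $s^{-2\beta}>0$, $\gamma_L,\gamma_i>0$, $1-k_1k_2>0$ and $\varphi^L,\varphi^{F_i}>0$ this shows $b^{\ast}_L(t),b^{\ast}_1(t),b^{\ast}_2(t)>0$. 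Throughout I fix $t\in[0,T)$, so that $T-t>0$.

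Next I would make the effective rates explicit. Eliminating $B_L$ and $C_L$ from the three restrictions gives $B_L=\eta_L(\alpha_L+A_L+\eta_L)$ and then
\begin{equation*}
A_L+\eta_L=\frac{r_0+\eta_L\bigl(1-\alpha_L-e^{-\alpha_Lh_L}\bigr)}{1+\eta_L},
\qquad
A_i+\eta_i=\frac{r_0+\eta_i\bigl(1-\alpha_i-e^{-\alpha_ih_i}\bigr)}{1+\eta_i}.
\end{equation*}
Differentiating, $\frac{\partial(A_L+\eta_L)}{\partial h_L}=\frac{\eta_L\alpha_Le^{-\alpha_Lh_L}}{1+\eta_L}>0$; $\frac{\partial(A_L+\eta_L)}{\partial\alpha_L}=\frac{\eta_L\bigl(h_Le^{-\alpha_Lh_L}-1\bigr)}{1+\eta_L}$, which is positive exactly when $h_Le^{-\alpha_Lh_L}>1$, i.e.\ when $\alpha_L<-\frac{1}{h_L}\ln\frac{1}{h_L}$; and $\frac{\partial(A_L+\eta_L)}{\partial\eta_L}=\frac{(1-r_0-\alpha_L)-e^{-\alpha_Lh_L}}{(1+\eta_L)^2}$, whose sign under $r_0+\alpha_L<1$ is that of $(1-r_0-\alpha_L)-e^{-\alpha_Lh_L}$, i.e.\ positive exactly when $h_L>-\frac{1}{\alpha_L}\ln(1-r_0-\alpha_L)$. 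The same three identities hold with the subscript $L$ replaced by $i$.

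Then I would assemble the signs. For the parameters that enter only through the algebraic prefactors this is immediate. Since $b^{\ast}_L(t)$ equals $1/\gamma_L$ times a strictly positive factor free of $\gamma_L$, we get $\partial b^{\ast}_L/\partial\gamma_L<0$. Since $b^{\ast}_i(t)$ equals $\frac{1}{1-k_1k_2}\bigl(\frac{1}{\gamma_i}+\frac{k_i}{\gamma_j}\bigr)$ times a strictly positive factor free of $\gamma_i$ and $k_i$, the factor $\frac{1}{\gamma_i}+\frac{k_i}{\gamma_j}$ is decreasing in $\gamma_i$, so $\partial b^{\ast}_i/\partial\gamma_i<0$; and for $k_1k_2<1$ both $\frac{1}{1-k_1k_2}$ and $\frac{1}{\gamma_i}+\frac{k_i}{\gamma_j}$ are positive and nondecreasing in $k_i$, with the second strictly increasing, so their product is strictly increasing in $k_i$ and $\partial b^{\ast}_i/\partial k_i>0$. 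For $\theta\in\{h_L,\alpha_L,\eta_L\}$, the parameter enters $b^{\ast}_L$ only through $1/\varphi^L(t)=e^{-(A_L+\eta_L)(T-t)}$, hence
\begin{equation*}
\frac{\partial b^{\ast}_L(t)}{\partial\theta}=-(T-t)\,b^{\ast}_L(t)\,\frac{\partial(A_L+\eta_L)}{\partial\theta},
\end{equation*}
and because $b^{\ast}_L(t)>0$ and $T-t>0$, this sign is the opposite of the sign of $\partial(A_L+\eta_L)/\partial\theta$ found above. This yields $\partial b^{\ast}_L/\partial h_L<0$, the trichotomy \eqref{equ:balpha} for $\alpha_L$, and — assuming $r_0+\alpha_L<1$ — the trichotomy \eqref{equ:beta} for $\eta_L$. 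Repeating the argument verbatim with $\varphi^{F_i}$ and $A_i+\eta_i$ in place of $\varphi^L$ and $A_L+\eta_L$ gives the corresponding statements for $b^{\ast}_i$, completing all entries of Table \ref{TablebLbF}, \eqref{equ:balpha} and \eqref{equ:beta}.

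The only point that needs care is bookkeeping rather than analysis: one must check from \eqref{equ:varphiL}, \eqref{equ:varphiF}, \eqref{equ:g1t} that $\varphi^L$, $\varphi^{F_i}$ and $g_1$ are genuinely free of the particular parameter being differentiated (clear once the effective rates are written out, since $\varphi^L$ depends on $h_L,\alpha_L,\eta_L$ but not on $\gamma_L,k_i,\gamma_i$, and $\varphi^{F_i}$ depends on $h_i,\alpha_i,\eta_i$ but not on $\gamma_i,k_i$), and one must retain the restriction $t<T$: at $t=T$ both $\varphi^L$ and $\varphi^{F_i}$ equal $1$ and all the displayed derivatives vanish, so the strict inequalities and the trichotomies are statements about $t\in[0,T)$. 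No genuine obstacle arises beyond this.
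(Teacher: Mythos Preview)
Your proposal is correct and follows essentially the same approach as the paper's own proof: both solve the parameter constraints to write $A_L+\eta_L$ (equivalently $A_L$) explicitly in terms of $r_0,\alpha_L,\eta_L,h_L$, substitute into $\varphi^L(t)=e^{(A_L+\eta_L)(T-t)}$, and read off the signs of the partial derivatives of $b_L^\ast(t)$ from the resulting expression $-(T-t)\,b_L^\ast(t)\,\partial(A_L+\eta_L)/\partial\theta$, using $b_L^\ast(t)>0$ when $\beta\geq0$. Your write-up is in fact more careful than the paper's (you explicitly verify that $g_1$ is free of the relevant parameters, handle the prefactor derivatives for $\gamma_i,k_i$ cleanly, and note the degenerate case $t=T$), but the underlying argument is identical.
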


\begin{proof}
See \ref{Appendix B}.
\end{proof}

Corollary \ref{corollary2} states that the effect of the delay weight on the optimal investment strategy depends on the delay time.
For the reinsurer, when the delay time $h_L>-\frac{1}{\alpha_L}ln(1-r_0-\alpha_L)$,
the greater the delay weight, the less money is invested in the risky asset. On the contrary, when the delay time $h_L<-\frac{1}{\alpha_L}ln(1-r_0-\alpha_L)$, the greater the delay weight, the more money is invested in the risky asset.
The optimal investment strategies of the insurers have similar rules.
In other words, when the delay time selected by the reinsurer (insurer $i$) is relatively long, the greater the weight of the integrated performance in the past, the more conservative the strategy made by the reinsurer (insurer $i$).
On the contrary, when the memory time is relatively short, the smaller the weight of  integrated performance in the past, the more conservative investment strategy will be adopted. It also illustrates that the reinsurer and insurers manage investment risk according to the relevant parameters of delay.

\subsection{Verification theorem}\label{subsection3.2}

In order to prove that the equilibrium strategy given in Theorem \ref{Theorem1} is indeed optimal for all three parties of the hybrid game, we give a verification theorem in this section. For $i\neq j\in\{1,2\}$, let
\begin{align}
&\mathcal{A}^{F_i}V^{F_i}(t,\hat{x}_i,y_i,y_j,s)\nonumber\\
&=V^{F_i}_t+V^{F_i}_{\hat{x}_i}
\big[\theta_ia_i-k_i\theta_ja_j-(p(t)-a_i)(1-q_i(t))+k_i(p(t)-a_j)(1-q_j^{\ast}(t))
+A_ix_i-k_iA_jx_j+B_iy_i-k_iB_jy_j\nonumber\\
&
+C_iz_i-k_iC_jz_j+(r-r_0)(b_i(t)-k_ib_j^{\ast}(t))\big]
+\frac{1}{2}\big[(q_i(t)\sigma_i)^2+(k_iq_j^{\ast}(t)\sigma_j)^2
-2q_i(t)\sigma_ik_iq_j^{\ast}(t)\sigma_j\rho\nonumber\\
&
+(b_i(t)-k_ib_j^{\ast}(t))^2\sigma^2s^{2\beta}\big]V^{F_i}_{\hat{x}_i\hat{x}_i}
+(x_i-\alpha_iy_i-e^{-\alpha_ih_i}z_i)V^{F_i}_{y_i}
+(x_j-\alpha_jy_j-e^{-\alpha_jh_j}z_j)V^{F_i}_{y_j}\nonumber\\
&+rsV^{F_i}_{s}
+\frac{1}{2}\sigma^2s^{2\beta+2}V^{F_i}_{ss}
+(b_i(t)-k_ib_j^{\ast}(t))\sigma^2s^{2\beta+1} V^{F_i}_{\hat{x}_is},\\
&\mathcal{A}^{L}V^{L}(t,x_L,y_L,s)\nonumber\\
&=V^L_t
+V^L_{x_L}\big[(p(t)-a_1)(1-q_1^{\ast}(t))+(p(t)-a_2)(1-q_2^{\ast}(t))+(r-r_0)b_L(t)
+A_Lx_L+B_Ly_L+C_Lz_L\big]\nonumber\\
&
+\frac{1}{2}\big[(1-q_1^{\ast}(t))^2(\sigma_1)^2+(1-q_2^{\ast}(t))^2(\sigma_2)^2
+(b_L(t))^2\sigma^2s^{2\beta}
+2(1-q_1^{\ast}(t))(1-q_2^{\ast}(t))\sigma_1\sigma_2\rho\big]V^L_{x_Lx_L}\nonumber\\
&
+(x_L-\alpha_Ly_L-e^{-\alpha_Lh_L}z_L)V^L_{y_L}
+rsV^L_{s}+\frac{1}{2}\sigma^2s^{2\beta+2}V^L_{ss}
+b_L(t)\sigma^2s^{2\beta+1} V^L_{x_Ls}.
\end{align}

We first give the following lemmas:
\begin{lemma}\label{lemma1}
Let $\mathcal{M}^{F_i}=\mathbb{R}\times\mathbb{R}^{+}\times\mathbb{R}^{+}\times\mathbb{R}^{+}$, $i\in\{1,2\}$. Take a sequence of bounded open sets $\mathcal{M}_1^{F_i},\mathcal{M}_2^{F_i},\mathcal{M}_3^{F_i},\cdots$, with $\mathcal{M}_n^{F_i}\subset\mathcal{M}_{n+1}^{F_i}\subset\mathcal{M}^{F_i},n=1,2,\cdots$, and $\mathcal{M}^{F_i}=\cup_n\mathcal{M}_{n}^{F_i}$. For $(\hat{x}_i,y_i,y_j,s)\in \mathcal{M}_{n}^{F_i}$, let $\tau_n$ be the exit time of $(\hat{X}_i(t),Y_i(t),Y_j(t),S(t))$ from $\mathcal{M}_{n}^{F_i}$. Then, for $n=1,2,\cdots$,
$E_{t,\hat{x}_i,y_i,y_j,s}\big\{\big[V^{F_i}(\tau_n\wedge T,\hat{X}_i(\tau_n\wedge T),Y_i(\tau_n\wedge T),Y_j(\tau_n\wedge T),S(\tau_n\wedge T))\big]^2\big\}<+\infty$.
\end{lemma}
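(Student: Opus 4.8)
The plan is to use the closed-form expression for the value function $V^{F_i}$ furnished by Theorem \ref{Theorem1} and thereby reduce the claim to a purely deterministic bound. Squaring that expression,
\begin{align*}
\big[V^{F_i}(t,\hat{x}_i,y_i,y_j,s)\big]^2
&=\frac{1}{\gamma_i^2}\exp\Big\{-2\gamma_i\varphi^{F_i}(t)\big(\hat{x}_i+\eta_iy_i-k_i\eta_jy_j\big)+2g_1(t)s^{-2\beta}+2g^{F_i}_2(t)\Big\},
\end{align*}
so the exponent is affine in $\hat{x}_i$, $y_i$, $y_j$ and $s^{-2\beta}$, with time-dependent coefficients $\varphi^{F_i}(\cdot)$, $g_1(\cdot)$ and $g^{F_i}_2(\cdot)$. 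Each of these three functions solves a linear (respectively Riccati-type) ordinary differential equation on $[0,T]$ with terminal datum at $t=T$ — see \eqref{equ:varphiF}, \eqref{equ:g1t} and the defining equations for $g^{F_i}_2$ attached to Theorem \ref{Theorem1} in each of its cases — and is therefore continuous, hence bounded, on the compact interval $[0,T]$. Fix finite constants $C_1,C_2,C_3$ with $|\varphi^{F_i}(u)|\le C_1$, $|g_1(u)|\le C_2$ and $|g^{F_i}_2(u)|\le C_3$ for every $u\in[0,T]$.

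Next I would pin down the exhausting sequence. Since we are free to choose the $\mathcal{M}_n^{F_i}$, I would take them so that each closure $\overline{\mathcal{M}_n^{F_i}}$ is a compact subset of $\mathcal{M}^{F_i}=\mathbb{R}\times\mathbb{R}^{+}\times\mathbb{R}^{+}\times\mathbb{R}^{+}$ — for instance $\mathcal{M}_n^{F_i}=(-n,n)\times(1/n,n)\times(1/n,n)\times(1/n,n)$, which is increasing with union $\mathcal{M}^{F_i}$. On the compact set $\overline{\mathcal{M}_n^{F_i}}$ the coordinates $\hat{x}_i,y_i,y_j$ are bounded, and the continuous function $s\mapsto s^{-2\beta}$ is bounded there as well; write $\rho_n$ for a common bound, so that on $\overline{\mathcal{M}_n^{F_i}}$ we have $|\hat{x}_i|\le\rho_n$, $0<y_i\le\rho_n$, $0<y_j\le\rho_n$ and $0\le s^{-2\beta}\le\rho_n$. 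Moreover the state process $(\hat{X}_i(t),Y_i(t),Y_j(t),S(t))$ has continuous sample paths: $\hat{X}_i$ solves the stochastic differential equation \eqref{equ:X^F} under an admissible control in the sense of Definition \ref{def1}, $S$ solves the CEV equation \eqref{equ:St}, and $Y_i,Y_j$ are Riemann integrals of continuous processes. Hence, by path-continuity, at the stopped instant $\tau_n\wedge T\in[0,T]$ the state $\big(\hat{X}_i(\tau_n\wedge T),Y_i(\tau_n\wedge T),Y_j(\tau_n\wedge T),S(\tau_n\wedge T)\big)$ lies in $\overline{\mathcal{M}_n^{F_i}}$.

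Finally I would assemble the estimate. Inserting the bounds of the previous step and of the first paragraph into the squared value function, and using $\eta_i,\eta_j\in(0,1)$ and $k_i\in[0,1]$, we obtain along every sample path
\begin{align*}
&\big[V^{F_i}\big(\tau_n\wedge T,\hat{X}_i(\tau_n\wedge T),Y_i(\tau_n\wedge T),Y_j(\tau_n\wedge T),S(\tau_n\wedge T)\big)\big]^2\\
&\qquad\le\frac{1}{\gamma_i^2}\exp\Big\{2\gamma_iC_1\big(1+\eta_i+k_i\eta_j\big)\rho_n+2C_2\rho_n+2C_3\Big\}=:\Gamma_n,
\end{align*}
a finite constant depending only on $n$ and the fixed model parameters. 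Taking $E_{t,\hat{x}_i,y_i,y_j,s}$ on both sides preserves this deterministic upper bound, so $E_{t,\hat{x}_i,y_i,y_j,s}\big\{[V^{F_i}(\tau_n\wedge T,\cdots)]^2\big\}\le\Gamma_n<+\infty$, which is the assertion. There is no genuine difficulty in this lemma; the only points that call for care are to choose the sets $\mathcal{M}_n^{F_i}$ with closures compactly contained in $\mathcal{M}^{F_i}$ — so that the term $s^{-2\beta}$, which is singular at $s=0$ when $\beta>0$, stays bounded on $\overline{\mathcal{M}_n^{F_i}}$ — and to invoke continuity of the controlled paths so that the stopped state indeed belongs to that closure.
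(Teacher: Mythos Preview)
Your argument is correct, and it takes a genuinely different route from the paper's proof. The paper does not exploit the localization at all: it applies It\^{o}'s formula to $(V^{F_i})^2$ under the optimal control, reduces $d(V^{F_i})^2/(V^{F_i})^2$ to a drift $\Theta_1^{F_i}(t)\,dt$ plus three stochastic differentials, and then invokes Novikov's condition together with a result of Zeng--Taksar (2013) on the CEV exponential martingale to conclude that $E\big[(V^{F_i}(t))^2\big]=(V^{F_i}(0))^2\exp\{\int_0^t\Theta_1^{F_i}(\iota)\,d\iota\}<\infty$ for every $t$, and only afterwards specializes $t$ to $\tau_n\wedge T$. In other words, the paper proves the stronger statement that the second moment is finite without stopping, and the bounded sets $\mathcal{M}_n^{F_i}$ play no role in obtaining the bound.

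Your approach is more elementary and more transparent for the purpose at hand: you use the explicit form of $V^{F_i}$, the boundedness of $\varphi^{F_i}$, $g_1$, $g_2^{F_i}$ on $[0,T]$, and the freedom to choose the exhausting sets with compact closure in $\mathcal{M}^{F_i}$ so that $s^{-2\beta}$ stays controlled; path-continuity then traps the stopped state in that closure and yields a deterministic pathwise bound $\Gamma_n$. This avoids Novikov's condition and the external martingale lemma entirely. The trade-off is that your argument hinges on the specific choice of the sequence $\mathcal{M}_n^{F_i}$ (which is harmless, since the verification theorem only needs some exhausting sequence), whereas the paper's martingale computation gives a global moment bound independent of $n$, which in principle delivers uniform integrability more directly.
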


\begin{proof}
See \ref{Appendix C}.
\end{proof}

\begin{lemma}\label{lemma2}
Let $\mathcal{M}^L=\mathbb{R}^{+}\times\mathbb{R}^{+}\times\mathbb{R}^{+}$. Take a sequence of bounded open sets $\mathcal{M}_1^L,\mathcal{M}_2^L,\mathcal{M}_3^L,\cdots$, with $\mathcal{M}_n^L\subset\mathcal{M}_{n+1}^L\subset\mathcal{M}^L,n=1,2,\cdots$, and $\mathcal{M}^L=\cup_n\mathcal{M}_{n}^L$. For $(x_L,y_L,s)\in \mathcal{M}_{n}^L$, let $\tau_n$ be the exit time of $(X_L(t),Y_L(t),S(t))$ from $\mathcal{M}_{n}^L$. Then, for $n=1,2,\cdots$,
 $E_{t,x_L,y_L,s}\big\{\big[V^L(\tau_n\wedge T,X_L(\tau_n\wedge T),Y_L(\tau_n\wedge T),S(\tau_n\wedge T))\big]^2\big\}<+\infty$.
\end{lemma}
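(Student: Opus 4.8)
The plan is to use the explicit closed form of the reinsurer's value function from Theorem \ref{Theorem1}, namely $V^L(t,x_L,y_L,s)=-\frac{1}{\gamma_L}\exp\{-\gamma_L\varphi^L(t)(x_L+\eta_Ly_L)+g_1(t)s^{-2\beta}+g^L_2(t)\}$, so that $\big[V^L(t,x_L,y_L,s)\big]^2=\frac{1}{\gamma_L^2}\exp\{-2\gamma_L\varphi^L(t)(x_L+\eta_Ly_L)+2g_1(t)s^{-2\beta}+2g^L_2(t)\}$. First I would record that $\varphi^L(\cdot)$, $g_1(\cdot)$ and $g^L_2(\cdot)$ are given by explicit deterministic expressions in elementary functions of $t$ (equations \eqref{equ:varphiL}, \eqref{equ:g1t}, and the relevant entries of Table \ref{g}), hence each is continuous and therefore bounded on the compact interval $[0,T]$; in particular $(V^L)^2$ is a continuous function on $[0,T]\times\mathcal{M}^L$. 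This reduces the claim to showing that the exponent stays bounded along the stopped state process.

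The crucial step is to exploit the defining property of $\tau_n$. By path-continuity of $(X_L(t),Y_L(t),S(t))$ and the fact that $\tau_n$ is the first exit time of this process from the bounded open set $\mathcal{M}_n^L$, the stopped state $(X_L(\tau_n\wedge T),Y_L(\tau_n\wedge T),S(\tau_n\wedge T))$ lies almost surely in the closure $\overline{\mathcal{M}_n^L}$. Taking, as in the standard exhaustion construction, the $\mathcal{M}_n^L$ so that $\overline{\mathcal{M}_n^L}$ is a compact subset of $\mathcal{M}^L=\mathbb{R}^{+}\times\mathbb{R}^{+}\times\mathbb{R}^{+}$, the set $[0,T]\times\overline{\mathcal{M}_n^L}$ is compact and bounded away from the hyperplane $\{s=0\}$, so $s^{-2\beta}$ is bounded there for either sign of $\beta$. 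Consequently the exponent $-2\gamma_L\varphi^L(t)(x_L+\eta_Ly_L)+2g_1(t)s^{-2\beta}+2g^L_2(t)$ is bounded above by a finite constant $C_n$ on $[0,T]\times\overline{\mathcal{M}_n^L}$ (depending on $n$, $T$ and the model parameters but not on $\omega$). Combining the two steps gives $\big[V^L(\tau_n\wedge T,X_L(\tau_n\wedge T),Y_L(\tau_n\wedge T),S(\tau_n\wedge T))\big]^2\le \frac{1}{\gamma_L^2}e^{C_n}$ almost surely, and taking expectations yields $E_{t,x_L,y_L,s}\{[V^L(\tau_n\wedge T,\dots)]^2\}\le \frac{1}{\gamma_L^2}e^{C_n}<+\infty$, which is the assertion.

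This argument is the exact analogue of the proof of Lemma \ref{lemma1} in \ref{Appendix C}, with the reinsurer's three state variables $(x_L,y_L,s)$ playing the role of the followers' four state variables $(\hat{x}_i,y_i,y_j,s)$; no genuinely new estimate is required, and indeed the whole point is that the cap at $\tau_n\wedge T$ confines the diffusion to a compact set on which the (continuous) squared value function is bounded. The only delicate point is the one flagged above: one must choose the exhausting sets $\{\mathcal{M}_n^L\}$ with closures compactly contained in $\mathcal{M}^L$ — equivalently, bounded away from $\{s=0\}$ — since otherwise the term $g_1(t)s^{-2\beta}$ could be unbounded when $\beta>0$ and the crude pointwise bound would fail. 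Such a choice of $\{\mathcal{M}_n^L\}$ is always available (e.g. $\mathcal{M}_n^L=\{(x_L,y_L,s):\,1/n<x_L<n,\ 1/n<y_L<n,\ 1/n<s<n\}$) and is the one implicitly used when the verification theorem is applied, so this causes no loss of generality.
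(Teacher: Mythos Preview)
Your proof is correct but takes a genuinely different route from the paper's. The paper (via the analogous Lemma~\ref{lemma1} in \ref{Appendix C}) applies It\^{o}'s formula to $(V^L)^2$, uses the HJB identity $\mathcal{A}^L V^L=0$ at the equilibrium control to kill the generator term, solves the resulting SDE to express $(V^L(t))^2$ as the product of a deterministic exponential and exponential martingales driven by $W_1,W_2,W$, and then invokes Novikov's condition together with a result of Zeng--Taksar (2013) on the $S^{-\beta}$-driven factor to conclude $E[(V^L(t))^2]<\infty$ for every deterministic $t$, whence the same at $\tau_n\wedge T$. Your compactness argument is more elementary and more direct for the stated conclusion: it uses only the continuity of the explicit $V^L$ and the fact that stopping at $\tau_n$ confines the process to the compact set $[0,T]\times\overline{\mathcal{M}_n^L}$, so no stochastic calculus beyond path-continuity is needed. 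The paper's approach buys a stronger intermediate fact (a finite second moment at every fixed time, not only at $\tau_n\wedge T$) that is not actually used in Theorem~\ref{Theorem2}; conversely, your route sidesteps the somewhat delicate passage from deterministic $t$ to the random time $\tau_n\wedge T$ at the end of the paper's argument. The caveat you raise about choosing the exhaustion so that $\overline{\mathcal{M}_n^L}\subset\mathcal{M}^L$ is appropriate and harmless for the application in Theorem~\ref{Theorem2}.
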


\begin{proof}
The proof of this lemma is similar to that of Lemma \ref{lemma1}.
\end{proof}

\begin{theorem}\label{Theorem2}
(Verification theorem)
The equilibrium strategy $(p^{\ast}(t),b_L^{\ast}(t),q_1^{\ast}(t),b_1^{\ast}(t), q_2^{\ast}(t),b_2^{\ast}(t))$ described in Theorem \ref{Theorem1} achieves optimality in $\Pi_L\times\Pi_{1}\times\Pi_{2}$.
\end{theorem}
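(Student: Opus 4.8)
The plan is to establish the verification theorem along the standard lines of dynamic programming for stochastic control, adapted to the nested Stackelberg / non-zero-sum structure. I would proceed by backward induction, mirroring the three-step procedure of Section~\ref{section 2.4}: first fix an arbitrary admissible leader strategy $\pi_L(\cdot)=(p(\cdot),b_L(\cdot))\in\Pi_L$ and verify that the candidate follower responses $(q_1^{\ast},b_1^{\ast},q_2^{\ast},b_2^{\ast})$ constitute a Nash equilibrium of the non-zero-sum subgame; then verify that $(p^{\ast},b_L^{\ast})$ is optimal for the reinsurer given that the insurers respond by the equilibrium map. For the follower step with $i\neq j\in\{1,2\}$, I would apply It\^o's formula to $V^{F_i}(t,\hat X_i^{\pi_i}(t),Y_i(t),Y_j(t),S(t))$ along an arbitrary admissible $\pi_i$ (holding insurer $j$ at its equilibrium strategy), stopped at $\tau_n\wedge T$. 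This yields
\begin{align}
V^{F_i}(\tau_n\wedge T,\cdot)=V^{F_i}(t,\cdot)+\int_t^{\tau_n\wedge T}\mathcal{A}^{F_i}V^{F_i}(\ell,\cdot)\,d\ell+\int_t^{\tau_n\wedge T}(\text{martingale terms})\,dW.\nonumber
\end{align}
By construction in the proof of Theorem~\ref{Theorem1}, $V^{F_i}$ solves the HJB equation $\sup_{(q_i,b_i)}\mathcal{A}^{F_i}V^{F_i}=0$ with terminal condition $V^{F_i}(T,\cdot)=U_i(\hat x_i+\eta_iy_i-k_i\eta_jy_j)$, so $\mathcal{A}^{F_i}V^{F_i}\leq 0$ for every admissible control and $=0$ at the candidate. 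Taking expectations kills the stochastic integrals (using the $L^2$ bounds in Definition~\ref{def1}(ii)--(iv) to justify that the localized integrands are genuine martingales), giving $E_{t,\hat x_i,y_i,y_j,s}[V^{F_i}(\tau_n\wedge T,\cdot)]\leq V^{F_i}(t,\cdot)$, with equality at the candidate.

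Next I would pass to the limit $n\to\infty$. This is where Lemma~\ref{lemma1} does the work: since $\tau_n\wedge T\to T$ a.s.\ and $\{V^{F_i}(\tau_n\wedge T,\cdot)\}_n$ is uniformly integrable (the second moments are uniformly bounded by Lemma~\ref{lemma1}, and the admissibility conditions on $X_i^{\pi_i}$ control the wealth), I can apply the dominated convergence theorem to conclude
\begin{align}
E_{t,\hat x_i,y_i,y_j,s}\big[U_i(X_i^{\pi_i}(T)+\eta_iY_i(T)-k_i(X_j^{\pi_j^{\ast}}(T)+\eta_jY_j(T)))\big]\leq V^{F_i}(t,\hat x_i,y_i,y_j,s),\nonumber
\end{align}
with equality when $\pi_i=(q_i^{\ast},b_i^{\ast})$; this is exactly the definition of the value function in \eqref{equ:218}, so the candidate follower map is optimal. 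I would then repeat the identical argument for the leader: apply It\^o to $V^L(t,X_L^{\pi_L}(t),Y_L(t),S(t))$ along an arbitrary $\pi_L\in\Pi_L$ (with both insurers now responding through the already-verified equilibrium maps $\alpha_i^{\ast},\beta_i^{\ast}$ evaluated at $\pi_L$), use that $V^L$ solves $\sup_{(p,b_L)}\mathcal{A}^L V^L=0$ with $V^L(T,\cdot)=U_L(x_L+\eta_Ly_L)$, invoke Lemma~\ref{lemma2} for uniform integrability, and pass to the limit to obtain optimality of $(p^{\ast},b_L^{\ast})$. Combining the two steps gives that the full tuple achieves the equilibrium of Definition~\ref{def2} in $\Pi_L\times\Pi_1\times\Pi_2$.

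The main obstacle I anticipate is the careful bookkeeping in the leader's step rather than any conceptual difficulty: once the reinsurer deviates to $\pi_L$, the insurers' responses $\alpha_i^{\ast}(\cdot,p(\cdot),b_L(\cdot)),\beta_i^{\ast}(\cdot,p(\cdot),b_L(\cdot))$ change too, so one must check that the generator identity $\mathcal{A}^L V^L\le 0$ actually holds after substituting these feedback responses — i.e.\ that the HJB equation for the leader was derived with the insurers' equilibrium reactions already plugged in, and that the resulting coefficients still satisfy the integrability needed to make the stochastic integrals martingales and $\{V^L(\tau_n\wedge T,\cdot)\}$ uniformly integrable. A secondary technical point is verifying that the piecewise (case-by-case) candidate strategies from Theorem~\ref{Theorem1} genuinely attain the supremum in each regime and that the value functions $V^L, V^{F_i}$ are $C^{1,2}$ across the case boundaries so that It\^o's formula applies globally; I would handle this by noting each $g_2$-component is continuous and the maximizers are chosen precisely as the (projected) unconstrained optima, so no smoothness is lost where it matters for the It\^o expansion in the state variables.
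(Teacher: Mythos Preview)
Your proposal is correct and follows essentially the same approach as the paper: apply It\^o's formula to $V^{F_i}$ (resp.\ $V^L$) along an arbitrary admissible strategy with localization at $\tau_n\wedge T$, use the HJB inequality $\mathcal{A}^{F_i}V^{F_i}\le 0$ (resp.\ $\mathcal{A}^{L}V^{L}\le 0$) with equality at the candidate, take expectations, and invoke Lemma~\ref{lemma1} (resp.\ Lemma~\ref{lemma2}) for uniform integrability to pass $n\to\infty$. Your write-up is in fact more carefully articulated than the paper's own proof, which treats the leader step by the phrase ``following similar derivations''; your one terminological slip is calling the limiting argument ``dominated convergence'' when it is really Vitali's theorem via the uniform integrability supplied by the lemmas.
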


\begin{proof}
See \ref{Appendix D}.
\end{proof}

\subsection{Special cases}\label{subsection3.3}

In what follows, we present several special cases of our model.

{\bfseries {Special case 1:}}
If the reinsurer and two insurers do not consider the effect of the bounded memory, i.e., $\eta_L=\eta_i=h_L=h_i=\alpha_L=\alpha_i=0$, then $B_L=B_i=C_L=C_i=0$ and $A_L=A_i=r_0$, $i\in\{1,2\}$. The optimal investment strategies, denoted as
$\hat{b}_L^{\ast}(t)$, $\hat{b}_i^{\ast}(t),i\in\{1,2\}$, are given by
\begin{align}
\hat{b}^{\ast}_L(t)
=&\frac{e^{-r_0(T-t)}s^{-2\beta}}{\gamma_L}\big[\frac{(r-r_0)}{\sigma^2}-2\beta g_1(t)\big]
=b^{\ast}_L(t)e^{\frac{\eta_L}{1+\eta_L}(1-r_0-\alpha_L-e^{-\alpha_Lh_L})(T-t)},\label{equ:bLhat}\\
\hat{b}^{\ast}_{i}(t)
=&\frac{e^{-r_0(T-t)}s^{-2\beta}}{(1-k_1k_2)}
\big(\frac{1}{\gamma_{i}}+\frac{k_i}{\gamma_{j}}\big)
\big[\frac{r-r_0}{\sigma^2}-2\beta g_1(t)\big]
=b^{\ast}_i(t)e^{\frac{\eta_i}{1+\eta_i}(1-r_0-\alpha_i-e^{-\alpha_ih_i})(T-t)},\label{equ:bihat}
\end{align}
where $g_1(t)$ is given by \eqref{equ:g1t}.
The optimal reinsurance premium strategy and the optimal reinsurance strategies in the interior case (i.e., Case (10) in Theorem 1) become
\begin{align}
\hat{p}^{\ast}(t)=\frac{\hat{P}^N(t)}{\hat{P}^D(t)},\quad
\hat{q}_i^{\ast}(t)=\frac{e^{-r_0(T-t)}}{1-k_1k_2\rho^2} \Big[\frac{\frac{\hat{P}^N(t)}{\hat{P}^D(t)}-a_i}{\gamma_{i}\sigma_{i}^2} +\frac{k_i\rho(\frac{\hat{P}^N(t)}{\hat{P}^D(t)}-a_j)}{\gamma_{j}\sigma_{j}\sigma_{i}}\Big],\quad i\neq j\in\{1,2\},\label{equ:pqhat}
\end{align}
where
\begin{align*}
\hat{P}^N(t)=&e^{r_0(T-t)}(\sigma_1\sigma_2)^2(\gamma_2\hat{D}^{F_1}+\gamma_1\hat{D}^{F_2})
+2a_1\sigma_2^2\gamma_2\hat{D}^{\tilde{F}_1}+2a_2\sigma_1^2\gamma_1\hat{D}^{\tilde{F}_2}
+(a_1+a_2)\rho\sigma_1\sigma_2\hat{D}^{F_{12}},\\
\hat{P}^D(t)=&2\sigma_2^2\gamma_2\hat{D}^{\tilde{F}_1}
+2\sigma_1^2\gamma_1\hat{D}^{\tilde{F}_2}+2\rho\sigma_1\sigma_2\hat{D}^{F_{12}},
\hat{D}^{F_i}=\gamma_i(1-k_1k_2\rho^2)+\gamma_L[1+k_j\rho^2+\frac{\sigma_j\rho}{\sigma_i}(1+k_j)], i\neq j\in\{1,2\},\nonumber\\
\hat{D}^{\tilde{F}_i}=&1+\frac{\gamma_L(1+(k_j\rho)^2+2k_j\rho^2)}{2\gamma_i(1-k_1k_2\rho^2)}, i\neq j\in\{1,2\},
\hat{D}^{F_{12}}=k_1\gamma_1+k_2\gamma_2+\frac{\gamma_L(1+k_1+k_2+k_1k_2\rho^2)}{1-k_1k_2\rho^2}.
\end{align*}

From  \eqref{equ:bLhat} and \eqref{equ:bihat}, we can find that
when $1-r_0-\alpha_L-e^{-\alpha_Lh_L}\geq 0$, then $\hat{b}^{\ast}_L(t)\geq b^{\ast}_L(t)$.
That is, when the delay time satisfies $h_L\geq-\frac{1}{\alpha_L}ln(1-r_0-\alpha_L)$, the amount of investment in the risky asset without delay is greater than that with delay, that is to say, delay makes the investment strategy more conservative in this case.
On the contrary, when $1-r_0-\alpha_L-e^{-\alpha_Lh_L}< 0$, then $\hat{b}^{\ast}_L(t)< b^{\ast}_L(t)$.
That is to say, when the reinsurer's delay time $h_L$ is less than $-\frac{1}{\alpha_L}ln(1-r_0-\alpha_L)$, the amount invested in the risky asset with delay is larger than that without delay, i.e., the delay factor stimulates the investment in this case.
Accordingly, the investment strategies of insurers have similar analysis.
Generally speaking, delay factor discourages or stimulates investment depending on the length of the delay.

\begin{corollary}\label{corollary3}
For $i\in\{1,2\}$, we have
\begin{align}
\lim_{t\rightarrow T}[\hat{b}^{\ast}_L(t)-b^{\ast}_L(t)]=0,\quad
\lim_{t\rightarrow T}[\hat{b}^{\ast}_{i}(t)-b^{\ast}_i(t)]=0, \quad
\lim_{t\rightarrow T}[\hat{p}^{\ast}(t)-p^{\ast}(t)]=0,\quad
\lim_{t\rightarrow T}[\hat{q}_i^{\ast}(t)-q_i^{\ast}(t)]=0,
\end{align}
where $p^{\ast}(t)$ and $q_i^{\ast}(t),i\in\{1,2\}$ are the optimal reinsurance premium strategy and the optimal reinsurance strategies respectively in Case (10) of Theorem \ref{Theorem1}.
\end{corollary}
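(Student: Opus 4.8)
The plan is to read all four limits off the closed-form expressions in Theorem~\ref{Theorem1} and Special case~1, reducing the whole statement to continuity and boundedness of the auxiliary functions at the terminal time. Recall that matching the value functions at $t=T$ against the utility functions forces $\varphi^L(T)=\varphi^{F_1}(T)=\varphi^{F_2}(T)=1$ and $g_1(T)=0$, while $\varphi^L(\cdot),\varphi^{F_i}(\cdot),g_1(\cdot)$ given by \eqref{equ:varphiL}, \eqref{equ:varphiF}, \eqref{equ:g1t} are continuous on $[0,T]$ with $\varphi^L,\varphi^{F_i}$ strictly positive. Since $s>0$ is fixed, formulas \eqref{equ:bL*}--\eqref{equ:b2*} then show that $b^{\ast}_L(t),b^{\ast}_1(t),b^{\ast}_2(t)$ converge, hence stay bounded, as $t\to T$.

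For the investment strategies, the identities \eqref{equ:bLhat}--\eqref{equ:bihat} obtained in Special case~1 give
\begin{align*}
\hat{b}^{\ast}_L(t)-b^{\ast}_L(t)&=b^{\ast}_L(t)\Big(e^{\frac{\eta_L}{1+\eta_L}(1-r_0-\alpha_L-e^{-\alpha_Lh_L})(T-t)}-1\Big),\\
\hat{b}^{\ast}_i(t)-b^{\ast}_i(t)&=b^{\ast}_i(t)\Big(e^{\frac{\eta_i}{1+\eta_i}(1-r_0-\alpha_i-e^{-\alpha_ih_i})(T-t)}-1\Big),\quad i\in\{1,2\};
\end{align*}
the exponents tend to $0$ as $t\to T$ and the prefactors stay bounded, so the first two limits equal $0$. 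For the premium and reinsurance strategies I would work inside Case~(10), whose delay-free specialization is precisely \eqref{equ:pqhat}. Letting $t\to T$ in \eqref{equ:PNt}--\eqref{equ:PDt} and comparing with $\hat{P}^N(t),\hat{P}^D(t)$, every delay-dependent ingredient collapses to a common constant ($\varphi^{F_i}(T)=1$, $e^{r_0(T-t)}\to1$, the exponential memory factors cancelling out), so $\lim_{t\to T}P^N(t)=\lim_{t\to T}\hat{P}^N(t)$ and $\lim_{t\to T}P^D(t)=\lim_{t\to T}\hat{P}^D(t)$. Since $k_1k_2\rho^2<1$ and the $\varphi^{F_i}$ are bounded away from $0$ near $T$, both $P^D(t)$ and $\hat{P}^D(t)$ stay bounded away from $0$, so the quotients are continuous at $T$ and $\hat p^{\ast}(t)-p^{\ast}(t)=\frac{\hat{P}^N(t)}{\hat{P}^D(t)}-\frac{P^N(t)}{P^D(t)}\to0$. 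Substituting this into the Case~(10) formula for $q_i^{\ast}(t)$ and into \eqref{equ:pqhat} for $\hat q_i^{\ast}(t)$, and using once more $\varphi^{F_i}(T)=1$, $g_1(T)=0$, $e^{-r_0(T-t)}\to1$ and the value of $K$ in \eqref{equ:KKK}, both reduce to the same algebraic function of $p^{\ast}(T)$, so $\hat q_i^{\ast}(t)-q_i^{\ast}(t)\to0$ for $i\in\{1,2\}$.

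The argument is essentially bookkeeping and contains no deep step; the one place needing care is the interchange of ``$t\to T$'' with setting the delay parameters to zero. Concretely, one has to verify that each of $\varphi^L,\varphi^{F_i},g_1$ and each building block of $P^N,P^D$ is continuous up to $t=T$ and has a $t=T$ value independent of the delay parameters, which is immediate from the closed forms \eqref{equ:varphiL}--\eqref{equ:g1t} and \eqref{equ:PNt}--\eqref{equ:PDt} once the terminal conditions are imposed. The hypotheses $k_1k_2<1$ and $k_1k_2\rho^2<1$ inherited from Theorem~\ref{Theorem1} are exactly what keeps $K$, $P^D(t)$ and $\hat{P}^D(t)$ finite and nonzero in the limit, so that all the quotients above are continuous at $T$.
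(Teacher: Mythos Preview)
Your proposal is correct and follows the same approach the paper indicates: the paper's own ``proof'' simply says the corollary is easily obtained from \eqref{equ:bL*}, \eqref{equ:b1*}, \eqref{equ:b2*}, \eqref{equ:pq*}, \eqref{equ:bLhat}, \eqref{equ:bihat} and \eqref{equ:pqhat} and omits the details, while you spell out exactly how the terminal conditions $\varphi^L(T)=\varphi^{F_i}(T)=1$, $g_1(T)=0$ collapse each pair of delay/no-delay expressions to the same limit. The only superfluous remark is the mention of $g_1(T)=0$ when handling the reinsurance strategies (since $g_1$ does not enter \eqref{equ:pq*} or \eqref{equ:pqhat}), but this is harmless.
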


\begin{proof}
  This corollary is easily obtained by \eqref{equ:bL*}, \eqref{equ:b1*}, \eqref{equ:b2*}, \eqref{equ:pq*}, \eqref{equ:bLhat}, \eqref{equ:bihat} and  \eqref{equ:pqhat}, and we omits the proof here.
\end{proof}

Corollary \ref{corollary3} indicates that when time $t$ tends to the terminal time $T$,
the equilibrium strategy with delay and without delay will tend to be consistent.
In particular,
the equilibrium strategy with delay is equal to that without delay at the terminal time $T$.

{\bfseries {Special case 2:}}
We study a stochastic differential reinsurance-investment game between one reinsurer and one insurer, i.e., $i=j=1$. Then, $\rho=1$, $c_F=c_1=(1+\theta_1)a_1$, $k_i=k_j=0$.
At this point, the hybrid game becomes a pure Stackelberg game problem.
Using the method similar to that in Section \ref{subsection3.1}, we can get the equilibrium strategy $(\tilde{p}^{\ast}(t),\tilde{b}^{\ast}_L(t),\tilde{q}^{\ast}_1(t),\tilde{b}^{\ast}_1(t))$ and value functions $\tilde{V}^L(t,x_L,y_L,s)$, $\tilde{V}^{F_1}(t,x_1,y_1,s)$.
$\tilde{b}^{\ast}_L(t)$ and $\tilde{b}^{\ast}_1(t)$ are given by
\begin{align*}
\tilde{b}^{\ast}_L(t)
=\frac{1}{\gamma_L\varphi^L(t)s^{2\beta}}\Big[\frac{(r-r_0)}{\sigma^2}-2\beta g_1(t)\Big],
\quad
\tilde{b}^{\ast}_1(t)
=\frac{1}{\gamma_1\varphi^{F_1}(t)s^{2\beta}}\Big[\frac{(r-r_0)}{\sigma^2}-2\beta g_1(t)\Big].
\end{align*}
$\tilde{p}^{\ast}(t)$ and $\tilde{q}^{\ast}_1(t)$ under different cases are given by Table \ref{Tablestrategy0},
\begin{table}[htp]
\setlength{\abovecaptionskip}{0.cm}
\caption{ The optimal premium strategy and the optimal reinsurance strategy under different cases.} 
\centering 
\setlength{\tabcolsep}{10mm}
\begin{tabular}
    {@{}cccc@{}}
    \toprule
    $Cases$ & $p^{\ast}(t)$ & $q^{\ast}_1(t)$   \\ \midrule
    $(1)~N^{cF_1}(t)\geq 1$ & $\forall p\in[c_F,\bar{c}]$ & $1$ \\ \midrule
    $(2)~N^{\bar{c}F_1}(t)\leq M^{F_1}(t)$ & $\bar{c}$ & $N^{\bar{c}F_1}(t)$ \\ \midrule
    $(3)~M^{F_1}(t)\leq N^{cF_1}(t)<1$ & $c_F$ & $N^{cF_1}(t)$ \\ \midrule
    $(4)~N^{cF_1}(t)<M^{F_1}(t)<N^{\bar{c}F_1}(t)$ & $a_1+M^{F_1}(t)\gamma_1\sigma_1^2\varphi^{F_1}(t)$ & $M^{F_1}(t)$ \\
\bottomrule
\end{tabular}
\label{Tablestrategy0}
\end{table}
where $\varphi^L(t)$, $\varphi^{F_1}(t)$ and $g_1(t)$ are given by \eqref{equ:varphiL}, \eqref{equ:varphiF} and \eqref{equ:g1t}, respectively;
$N^{cF_1}(t)$, $N^{\bar{c}F_1}(t)$ and $M^{F_1}(t)$ are given by \eqref{equ:KKK}.
The value function of the reinsurer is given by
\begin{align*}
\tilde{V}^L(t,x_L,y_L,s)
=&-\frac{1}{\gamma_L}\exp\{-\gamma_L\varphi^L(t)(x_L+\eta_Ly_L)+g_1(t)s^{-2\beta}+g^{L}_2(t)\},
\end{align*}
the value function of the insurer is given by
\begin{align*}
\tilde{V}^{F_1}(t,x_1,y_1,s)
=&-\frac{1}{\gamma_1}\exp\{-\gamma_1\varphi^{F_1}(t)(x_1+\eta_1y_1)+g_1(t)s^{-2\beta}+g^{F_1}_2(t)\},
\end{align*}
where $g^{L}_2(t)$ and $g^{F_1}_2(t)$ under different cases are given by Table
\ref{gspecial2}, $g(t)$ is given by \eqref{equ:g(t)},
\begin{table}[htp]
\setlength{\abovecaptionskip}{0.cm}
\caption{$g^{L}_2(t)$ and $g^{F_1}_2(t)$ under different cases.} 
\centering 
\setlength{\tabcolsep}{8mm}
\begin{tabular}
    {@{}ccc@{}}
    \toprule
    $Cases$ & $g^{L}_2(t)$ & $g^{F_1}_2(t)$    \\ \midrule
    $Case (1)$ & $g^{0La}_2(t)$ & $g^{0\tilde{F}_1a}_2(t)$  \\ \midrule
    $Case (2)$ & $g^{0Lb1}_2(t)$ & $g^{0\tilde{F}_1b1}_2(t)$  \\ \midrule
    $Case (3)$ & $g^{0Lb2}_2(t)$ & $g^{0\tilde{F}_1b2}_2(t)$  \\ \midrule
    $Case (4)$ & $g^{0Lb3}_2(t)$ & $g^{0\tilde{F}_1b3}_2(t)$  \\
\bottomrule
\end{tabular}
\label{gspecial2}
\end{table}
\begin{align*}
g^{0La}_2(t)&=g(t),\\
g^{0Lb1}_2(t)&=g(t)+\frac{\gamma_L^2\sigma_1^2}{4(A_L+\eta_L)}[(\varphi^L(t))^2-1]
+\int_T^t\bar{\theta}a_1\gamma_L\varphi^L(s)[1+\frac{\gamma_L\varphi^L(s)}{\gamma_1\varphi^{F_1}(s)}]ds\\
&~-\int_T^t\frac{(\bar{\theta}a_1)^2\gamma_L\varphi^L(s)}
{\sigma_1^2\gamma_1\varphi^{F_1}(s)}[1+\frac{\gamma_L\varphi^L(s)}{2\gamma_1\varphi^{F_1}(s)}]ds,\\
g^{0Lb2}_2(t)&=g(t)+\frac{\gamma_L^2\sigma_1^2}{4(A_L+\eta_L)}[(\varphi^L(t))^2-1]
+\int_T^t\theta_1a_1\gamma_L\varphi^L(s)[1+\frac{\gamma_L\varphi^L(s)}{\gamma_1\varphi^{F_1}(s)}]ds\\
&~-\int_T^t\frac{(\theta_1a_1)^2\gamma_L\varphi^L(s)}
{\sigma_1^2\gamma_1\varphi^{F_1}(s)}[1+\frac{\gamma_L\varphi^L(s)}{2\gamma_1\varphi^{F_1}(s)}]ds,\\
g^{0Lb3}_2(t)&=g(t)+\frac{\gamma_L^2\sigma_1^2}{4(A_L+\eta_L)}[(\varphi^L(t))^2-1]+\frac{\sigma_1^2}{2}\int_T^t\gamma_L\varphi^L(s)
(\gamma_1\varphi^{F_1}(s)+\gamma_L\varphi^L(s))M^{F_1}(s)ds,\\
g^{0\tilde{F}_1a}_2(t)&=g(t)-\frac{\gamma_1\theta_1a_1}{A_1+\eta_1}[\varphi^{F_1}(t)-1]
+\frac{\gamma_1^2\sigma_1^2}{4(A_1+\eta_1)}[(\varphi^{F_1}(t))^2-1],\\
g^{0\tilde{F}_1b1}_2(t)&=g(t)+\frac{\gamma_1(\bar{\theta}-\theta_1)a_1}{A_1+\eta_1}[\varphi^{F_1}(t)-1]
-\frac{(\bar{\theta}a_1)^2}{2\sigma_1^2}(T-t),\\
g^{0\tilde{F}_1b2}_2(t)&=g(t)-\frac{(\theta_1a_1)^2}{2\sigma_1^2}(T-t),\\
g^{0\tilde{F}_1b3}_2(t)&=g(t)-\frac{\gamma_1\theta_1a_1}{A_1+\eta_1}[\varphi^{F_1}(t)-1]
-\gamma_1^2\sigma_1^2\int_T^t(\varphi^{F_1}(s))^2M^{F_1}(s)ds
+\frac{1}{2}\gamma_1^2\sigma_1^2\int_T^t(\varphi^{F_1}(s))^2(M^{F_1}(s))^2ds.
\end{align*}

Similar to \cite{Chen2018}, we can get that
when the equilibrium is achieved in the interior case (i.e., Case (4) in Table \ref{Tablestrategy0}), the optimal reinsurance premium follows the variance premium principle. In other words, for every one unit of risk, the total instantaneous reinsurance premium associated with the ceded proportion $(1-\tilde{q}^{\ast}_1(t))100\%$ can be written as
\begin{align*}
\tilde{p}^{\ast}(t)(1-\tilde{q}^{\ast}_1(t))
=a_1(1-\tilde{q}^{\ast}_1(t))+[\gamma_1\varphi^{F_1}(t)+\gamma_L\varphi^L(t)]\sigma_1^2
(1-\tilde{q}^{\ast}_1(t))^2,
\end{align*}
where the first term accounts for the mean component, and the second for the variance component.

\begin{corollary}\label{corollary4}
If $\beta\geq 0$, some properties of the optimal investment strategies (i.e., $\tilde{b}^{\ast}_L(t),\tilde{b}^{\ast}_1(t)$), optimal premium strategy (i.e., $\tilde{p}^{\ast}(t)$) and optimal reinsurance strategy (i.e., $\tilde{q}^{\ast}_1(t)$) of Case 4 in Table \ref{Tablestrategy0} are given in Table \ref{Tablespecial}, \eqref{equ:balpha0}, \eqref{equ:pqalpha0}, \eqref{equ:beta0} and \eqref{equ:pqeta0}.
\begin{table}[h]
\small
\caption{The properties of $(\tilde{p}^{\ast}(t),\tilde{b}^{\ast}_L(t),\tilde{q}^{\ast}_1(t),\tilde{b}^{\ast}_1(t))$.\\}
\centering
\renewcommand\arraystretch{1}
\begin{tabular}{p{1.1cm}<{\centering}|p{1.1cm}<{\centering}|p{1.1cm}<{\centering}|
p{1.1cm}<{\centering}|p{1.1cm}<{\centering}|p{1.1cm}<{\centering}
|p{1.1cm}<{\centering}|p{1.1cm}<{\centering}}
$\frac{\partial \tilde{b}_L^{\ast}(t)}{\partial\gamma_L}$
& $\frac{\partial \tilde{b}_L^{\ast}(t)}{\partial h_L}$
& $\frac{\partial \tilde{b}_1^{\ast}(t)}{\partial \gamma_1}$
& $\frac{\partial \tilde{b}_1^{\ast}(t)}{\partial h_1}$
& $\frac{\partial \tilde{p}^{\ast}(t)}{\partial \gamma_L}$
& $\frac{\partial \tilde{p}^{\ast}(t)}{\partial h_L}$
& $\frac{\partial \tilde{q}_1^{\ast}(t)}{\partial \gamma_1}$
& $\frac{\partial \tilde{q}_1^{\ast}(t)}{\partial h_1}$\\
\hline
$-$&$-$&$-$&$-$&$+$&$+$&$-$&$-$\\
\end{tabular}
\label{Tablespecial}
\end{table}

\begin{footnotesize}
\begin{equation}\label{equ:balpha0}
\frac{\partial \tilde{b}_L^{\ast}(t)}{\partial\alpha_L}=\left\{
\begin{array}{rcl}
>0, &  {\alpha_L>-\frac{1}{h_L}ln\frac{1}{h_L};}\\
=0,&  {\alpha_L=-\frac{1}{h_L}ln\frac{1}{h_L};}\\
<0, &  {\alpha_L<-\frac{1}{h_L}ln\frac{1}{h_L}.}
\end{array} \right.
\frac{\partial \tilde{b}_1^{\ast}(t)}{\partial\alpha_1}=\left\{
\begin{array}{rcl}
>0, &  {\alpha_1>-\frac{1}{h_1}ln\frac{1}{h_1};}\\
=0,&  {\alpha_1=-\frac{1}{h_1}ln\frac{1}{h_1};}\\
<0, &  {\alpha_1<-\frac{1}{h_1}ln\frac{1}{h_1}.}
\end{array} \right.
\end{equation}
\end{footnotesize}

\begin{footnotesize}
\begin{equation}\label{equ:pqalpha0}
\frac{\partial \tilde{p}^{\ast}(t)}{\partial\alpha_L}=\left\{
\begin{array}{rcl}
<0, &  {\alpha_L>-\frac{1}{h_L}ln\frac{1}{h_L};}\\
=0,&  {\alpha_L=-\frac{1}{h_L}ln\frac{1}{h_L};}\\
>0, &  {\alpha_L<-\frac{1}{h_L}ln\frac{1}{h_L}.}
\end{array} \right.
\frac{\partial \tilde{q}_1^{\ast}(t)}{\partial\alpha_1}=\left\{
\begin{array}{rcl}
>0, &  {\alpha_1>-\frac{1}{h_1}ln\frac{1}{h_1};}\\
=0,&  {\alpha_1=-\frac{1}{h_1}ln\frac{1}{h_1};}\\
<0, &  {\alpha_1<-\frac{1}{h_1}ln\frac{1}{h_1}.}
\end{array} \right.
\end{equation}
\end{footnotesize}

If $r_0+\alpha_L<1$ and $r_0+\alpha_1<1$, then
\begin{footnotesize}
\begin{equation}\label{equ:beta0}
\frac{\partial \tilde{b}_L^{\ast}(t)}{\partial\eta_L}=\left\{
\begin{array}{rcl}
>0, &  {h_L<-\frac{1}{\alpha_L}ln(1-r_0-\alpha_L);}\\
=0,&  {h_L=-\frac{1}{\alpha_L}ln(1-r_0-\alpha_L);}\\
<0, &  {h_L>-\frac{1}{\alpha_L}ln(1-r_0-\alpha_L).}
\end{array} \right.
\frac{\partial \tilde{b}_1^{\ast}(t)}{\partial\eta_1}=\left\{
\begin{array}{rcl}
 >0,&  {h_1<-\frac{1}{\alpha_1}ln(1-r_0-\alpha_1);}\\
 =0,&  {h_1=-\frac{1}{\alpha_1}ln(1-r_0-\alpha_1);}\\
 <0,&  {h_1>-\frac{1}{\alpha_1}ln(1-r_0-\alpha_1).}
\end{array} \right.
\end{equation}
\end{footnotesize}

\begin{footnotesize}
\begin{equation}\label{equ:pqeta0}
\frac{\partial \tilde{p}^{\ast}(t)}{\partial\eta_L}=\left\{
\begin{array}{rcl}
<0, &  {h_L<-\frac{1}{\alpha_L}ln(1-r_0-\alpha_L);}\\
=0,&  {h_L=-\frac{1}{\alpha_L}ln(1-r_0-\alpha_L);}\\
>0, &  {h_L>-\frac{1}{\alpha_L}ln(1-r_0-\alpha_L).}
\end{array} \right.
\frac{\partial \tilde{q}_1^{\ast}(t)}{\partial\eta_1}=\left\{
\begin{array}{rcl}
 >0,&  {h_1<-\frac{1}{\alpha_1}ln(1-r_0-\alpha_1);}\\
 =0,&  {h_1=-\frac{1}{\alpha_1}ln(1-r_0-\alpha_1);}\\
 <0,&  {h_1>-\frac{1}{\alpha_1}ln(1-r_0-\alpha_1).}
\end{array} \right.
\end{equation}
\end{footnotesize}
\end{corollary}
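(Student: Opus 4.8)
The plan is to substitute the closed forms from Special case 2 and reduce every assertion to the monotonicity of a few elementary scalar functions. Writing $u:=\gamma_1\varphi^{F_1}(t)>0$ and $v:=\gamma_L\varphi^L(t)>0$, the Case~4 quantities are $\tilde b_L^\ast(t)=s^{-2\beta}B/v$, $\tilde b_1^\ast(t)=s^{-2\beta}B/u$, $\tilde q_1^\ast(t)=M^{F_1}(t)=\frac{u+v}{2u+v}$ and $\tilde p^\ast(t)=a_1+\sigma_1^2\,\frac{u(u+v)}{2u+v}$, where $B:=\frac{r-r_0}{\sigma^2}-2\beta g_1(t)$; the expression for $M^{F_1}$ is the one-insurer reduction of \eqref{equ:KKK} (equivalently, it is forced by the variance-premium identity displayed in Special case 2), and $\varphi^L,\varphi^{F_1},g_1$ are as in \eqref{equ:varphiL}, \eqref{equ:varphiF}, \eqref{equ:g1t}. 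First I would note that for $\beta\ge0$ the bracket $B$ is positive --- trivially $B=\frac{r-r_0}{\sigma^2}>0$ when $\beta=0$, and $-2\beta g_1(t)>0$ when $\beta>0$, exactly as in the paragraph preceding Corollary~\ref{corollary2} --- and that $B$ involves none of $\gamma_L,\gamma_1,h_L,h_1,\alpha_L,\alpha_1,\eta_L,\eta_1$. Hence $\tilde b_L^\ast,\tilde b_1^\ast>0$ and are decreasing in $v$ and in $u$ respectively; and a one-line computation gives $\partial_uM^{F_1}=-\tfrac{v}{(2u+v)^2}<0$, $\partial_vM^{F_1}=\tfrac{u}{(2u+v)^2}>0$, $\partial_u\tfrac{u(u+v)}{2u+v}=\tfrac{2u^2+2uv+v^2}{(2u+v)^2}>0$ and $\partial_v\tfrac{u(u+v)}{2u+v}=\tfrac{u^2}{(2u+v)^2}>0$, so $\tilde q_1^\ast$ is decreasing in $u$ and increasing in $v$, while $\tilde p^\ast$ is strictly increasing in both $u$ and $v$.

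Next I would isolate the scalar $\nu_L:=A_L+\eta_L$ (and $\nu_1:=A_1+\eta_1$); by \eqref{equ:varphiL}, \eqref{equ:varphiF} one has $\varphi^L(t)=e^{\nu_L(T-t)}$ and $\varphi^{F_1}(t)=e^{\nu_1(T-t)}$. Eliminating $B_L,C_L$ from the standing constraints $C_L=\eta_Le^{-\alpha_Lh_L}$, $B_Le^{-\alpha_Lh_L}=(\alpha_L+A_L+\eta_L)C_L$ (and likewise for insurer~$1$) yields $\nu_L=\frac{r_0+\eta_L(1-r_0-\alpha_L-e^{-\alpha_Lh_L})}{1+\eta_L}$ and $\nu_1=\frac{r_0+\eta_1(1-r_0-\alpha_1-e^{-\alpha_1h_1})}{1+\eta_1}$. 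Thus $v=\gamma_Le^{\nu_L(T-t)}$ is increasing in $\gamma_L$ and, for $t<T$, strictly increasing in $\nu_L$, and similarly $u=\gamma_1e^{\nu_1(T-t)}$; the elementary derivatives $\partial_{h_L}\nu_L=\frac{\eta_L}{1+\eta_L}\alpha_Le^{-\alpha_Lh_L}>0$, $\partial_{\alpha_L}\nu_L=\frac{\eta_L}{1+\eta_L}(h_Le^{-\alpha_Lh_L}-1)$, $\partial_{\eta_L}\nu_L=\frac{1-r_0-\alpha_L-e^{-\alpha_Lh_L}}{(1+\eta_L)^2}$, together with the equivalences $h_Le^{-\alpha_Lh_L}<1\iff\alpha_L>-\frac{1}{h_L}\ln\frac{1}{h_L}$ and, under $r_0+\alpha_L<1$, $e^{-\alpha_Lh_L}>1-r_0-\alpha_L\iff h_L<-\frac{1}{\alpha_L}\ln(1-r_0-\alpha_L)$ (and the analogues with subscript $1$), pin down the signs of $\partial_{\gamma}\nu$, $\partial_h\nu$, $\partial_\alpha\nu$, $\partial_\eta\nu$.

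Then I would compose. Since $\tilde b_L^\ast$ and $\tilde p^\ast$ depend on the parameters only through $v$, but with opposite monotonicity ($\tilde b_L^\ast\propto 1/v$, $\tilde p^\ast$ increasing in $v$), their sign patterns in $(\gamma_L,h_L,\alpha_L,\eta_L)$ are opposite; likewise $\tilde b_1^\ast$ and $\tilde q_1^\ast$ depend on the parameters only through $u$ and are both decreasing in $u$, so their sign patterns in $(\gamma_1,h_1,\alpha_1,\eta_1)$ coincide. Inserting the signs of $\partial_{\gamma}\nu,\partial_h\nu,\partial_\alpha\nu,\partial_\eta\nu$ then gives $\partial_{\gamma_L}\tilde b_L^\ast<0$, $\partial_{h_L}\tilde b_L^\ast<0$, $\partial_{\gamma_1}\tilde b_1^\ast<0$, $\partial_{h_1}\tilde b_1^\ast<0$, $\partial_{\gamma_L}\tilde p^\ast>0$, $\partial_{h_L}\tilde p^\ast>0$, $\partial_{\gamma_1}\tilde q_1^\ast<0$, $\partial_{h_1}\tilde q_1^\ast<0$ (Table~\ref{Tablespecial}) and the trichotomies \eqref{equ:balpha0}, \eqref{equ:pqalpha0}, \eqref{equ:beta0}, \eqref{equ:pqeta0}. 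I would also note that the first four entries of Table~\ref{Tablespecial} and the lines \eqref{equ:balpha0}, \eqref{equ:beta0} are just the $k_1=k_2=0,\rho=1$ specialisation of Corollary~\ref{corollary2}, since $\tilde b_L^\ast,\tilde b_1^\ast$ then coincide with $b_L^\ast,b_1^\ast$ of Theorem~\ref{Theorem1}.

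The hard part, such as it is, will be the bookkeeping in the middle step: deriving the explicit dependence of $\nu_L,\nu_1$ on $(\alpha,h,\eta)$ by eliminating $B,C$ via the standing parameter constraints, and confirming that $g_1(t)$ --- hence $B$ and the CEV hedging demand it multiplies --- carries no dependence on the agents' risk-aversion or delay parameters. Once these are settled, every sign is read off from the monotonicity of $x\mapsto 1/x$, of $\frac{u+v}{2u+v}$, of $\frac{u(u+v)}{2u+v}$, of $x\mapsto e^{-x(T-t)}$, and from the two logarithmic inequalities, with no estimates beyond elementary calculus. As a consistency check, $t\to T$ sends $u\to\gamma_1$, $v\to\gamma_L$ and all eight derivatives to $0$, in agreement with Corollary~\ref{corollary3}.
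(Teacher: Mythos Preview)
Your approach is correct and is essentially the same as the paper's, which simply refers back to the proof of Corollary~\ref{corollary2} (direct differentiation after solving the parameter constraints for $A_L+\eta_L$ and $A_1+\eta_1$); your organisation via $u,v$ and $\nu_L,\nu_1$ is just a clean repackaging of that computation. One minor algebraic slip: your displayed formula for $\nu_L$ should be $\nu_L=\dfrac{r_0+\eta_L(1-\alpha_L-e^{-\alpha_Lh_L})}{1+\eta_L}$ (no extra $-r_0$ inside the bracket)---with your version the $\eta_L$--derivative would come out as $\frac{1-2r_0-\alpha_L-e^{-\alpha_Lh_L}}{(1+\eta_L)^2}$ rather than the $\frac{1-r_0-\alpha_L-e^{-\alpha_Lh_L}}{(1+\eta_L)^2}$ you (correctly) state; the $h_L$-- and $\alpha_L$--derivatives are unaffected either way, so the argument goes through once this typo is fixed.
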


\begin{proof}
The proof of this corollary is similar to that of Corollary \ref{corollary2}.
\end{proof}

Further, if the reinsurer and the insurer do not consider the effect of the delay, i.e., $\eta_L=\eta_1=h_L=h_1=\alpha_L=\alpha_1=0$, then $B_L=B_1=C_L=C_1=0$ and $A_L=A_1=r_0$.
The optimal reinsurance premium and the optimal reinsurance strategy are the same as that in the case of $\rho_L=\rho_F$ in the literature \cite{Chen2018}.

\section{Sensitivity analysis}\label{section 4}

To illustrate the sensitivities of the equilibrium strategy $(p^{\ast}(\cdot),b^{\ast}_L(\cdot);q_1^{\ast}(\cdot),b_1^{\ast}(\cdot);q_2^{\ast}(\cdot),b_2^{\ast}(\cdot))$ with respect to the model parameters, we conduct numerical experiments in this section.
Throughout this section, unless stated otherwise, the basic model parameters are given in Table \ref{Tablefinance}, Table \ref{Tablereinsurer} and Table\ref{Tableinsurer}.\footnote{For $i\neq j\in\{1,2\}$, we can get that
$A_i=\frac{1}{1+\eta_i}[r_0-(\alpha_i+\eta_i)\eta_i-\eta_ie^{-\alpha_ih_i}]$ due to  $A_i=r_0-B_i-C_i$, $C_i=\eta_ie^{-\alpha_ih_i}$ and $B_ie^{-\alpha_ih_i}=(\alpha_i+A_i+\eta_i)C_i$.
From the condition in Theorem \ref{Theorem1}, $A_1+\eta_1=A_2+\eta_2$,
we can get that $\eta_j=\frac{(r_0-1+e^{-\alpha_ih_i}+\alpha_i)\eta_i}{(r_0-1+e^{-\alpha_jh_j}+\alpha_j)
+(\alpha_j-\alpha_i+e^{-\alpha_jh_j}-e^{-\alpha_ih_i})\eta_i}$.}
\begin{table}[htp]
\setlength{\abovecaptionskip}{0.cm}
\caption{The parameter values of the financial assets.\\}
\centering
\renewcommand\arraystretch{1}
\begin{tabular}
{p{0.01cm}p{1.2cm}<{\centering}p{1.2cm}<{\centering}p{1.2cm}<{\centering}p{1.2cm}<{\centering}p{1.2cm}<{\centering}p{1.2cm}<{\centering}}
\toprule
\multirow{1}*{}
& $r_0$ & $r$ & $\sigma$ & $\beta$ & $s_0$ & $T$\\
\midrule
&\tabincell{c}{$0.05$} &\tabincell{c}{$0.1$}
&\tabincell{c}{$0.4$} &\tabincell{c}{$1$} &\tabincell{c}{$1$} &\tabincell{c}{$10$}\\
\bottomrule
\end{tabular}
\label{Tablefinance}
\end{table}
\begin{table}[htp]
\setlength{\abovecaptionskip}{0.cm}
\caption{The parameter values of the reinsurer.\\}
\centering
\renewcommand\arraystretch{1}
\begin{tabular}
{p{0.01cm}p{1.2cm}<{\centering}p{1.2cm}<{\centering}p{1.2cm}<{\centering}
p{1.2cm}<{\centering}p{1.2cm}<{\centering}}
\toprule
\multirow{1}*{}
& $\bar{\theta}$ & $h_L$ & $\alpha_L$ & $\eta_L$ & $\gamma_L$ \\
\midrule
&\tabincell{c}{$2$} &\tabincell{c}{$2$}
&\tabincell{c}{$0.3$} &\tabincell{c}{$0.05$} &\tabincell{c}{$0.1$} \\
\bottomrule
\end{tabular}
\label{Tablereinsurer}
\end{table}
\begin{table}[htp]
\setlength{\abovecaptionskip}{0.cm}
\caption{The parameter values of insurers.\\}
\centering
\renewcommand\arraystretch{1}
\begin{tabular}
{p{0.00cm}p{2.5cm}<{\centering}p{2.5cm}<{\centering}p{2.5cm}<{\centering}p{2.5cm}<{\centering}}
\toprule
& \multicolumn{2}{c}{The parameter values of insurer 1} & \multicolumn{2}{c}{The parameter values of insurer 2}\\
\cmidrule(lr){2-3}\cmidrule(lr){4-5}
& Parameter & Value & Parameter & Value \\
\midrule
&\tabincell{c}{$\lambda_1$\\$\mu_1$\\$\sigma_1$\\$\theta_1$\\$h_1$\\$\alpha_1$\\$\eta_1$
\\$\gamma_1$\\$k_1$\\$\rho$} &\tabincell{c}{$0.8$\\$5$\\$3$\\$1.2$\\$2$\\$0.5$\\$0.05$\\$2$\\$0.4$\\$0.3$}
&\tabincell{c}{$\lambda_2$\\$\mu_2$\\$\sigma_2$\\$\theta_2$\\$h_2$\\$\alpha_2$\\$\eta_2$
\\$\gamma_2$\\$k_2$\\$/$}
&\tabincell{c}{$1$\\$4$\\$2$\\$1$\\$3$\\$0.3$\\$ / $\\$3$\\$0.3$\\$/$} \\
\bottomrule
\end{tabular}
\label{Tableinsurer}
\end{table}

\subsection{Sensitivity analysis of the equilibrium investment strategy}

Figure \ref{fig:bdelayt} shows the change in the risky asset price over time and the optimal investment strategies with and without delay over time.
From Theorem \ref{Theorem1}, we can find that $\frac{\partial b_L^{\ast}(t)}{\partial s}=-\frac{2\beta}{s} b_L^{\ast}(t)$ and $\frac{\partial b_i^{\ast}(t)}{\partial s}=-\frac{2\beta}{s} b_i^{\ast}(t), i\in\{1,2\}$.
It is easy to find that $\frac{\partial b_L^{\ast}(t)}{\partial s}<0$ and $\frac{\partial b_i^{\ast}(t)}{\partial s}<0$, when $\beta=1$.
That is, the wealth invested in the risky asset is negatively correlated with the price of the risky asset, which is consistent with the trend of curves in Figure \ref{fig:bdelayt}.
Under the setting of parameters in Table \ref{Tablefinance}, Table \ref{Tablereinsurer} and Table \ref{Tableinsurer}, we have $h_L\geq-\frac{1}{\alpha_L}ln(1-r_0-\alpha_L)$ and $h_i\geq-\frac{1}{\alpha_i}ln(1-r_0-\alpha_i),i\in\{1,2\}$.
According to Special case 1, the amount invested in the risky asset with delay is lower than that without delay, which is consistent with Figure \ref{fig:bdelayt}.
That is, the delay factor will urge the investor to shrink the position invested in the risky asset and make the investment strategy more conservative when the delay time considered is greater than a certain value.
Furthermore, the gap between the investment strategy with delay and the investment strategy without delay will decrease with the increase of time $t$. And they completely coincide at terminal time $T$.

\begin{figure}[htp]
\begin{center}
  \includegraphics[width=7in]{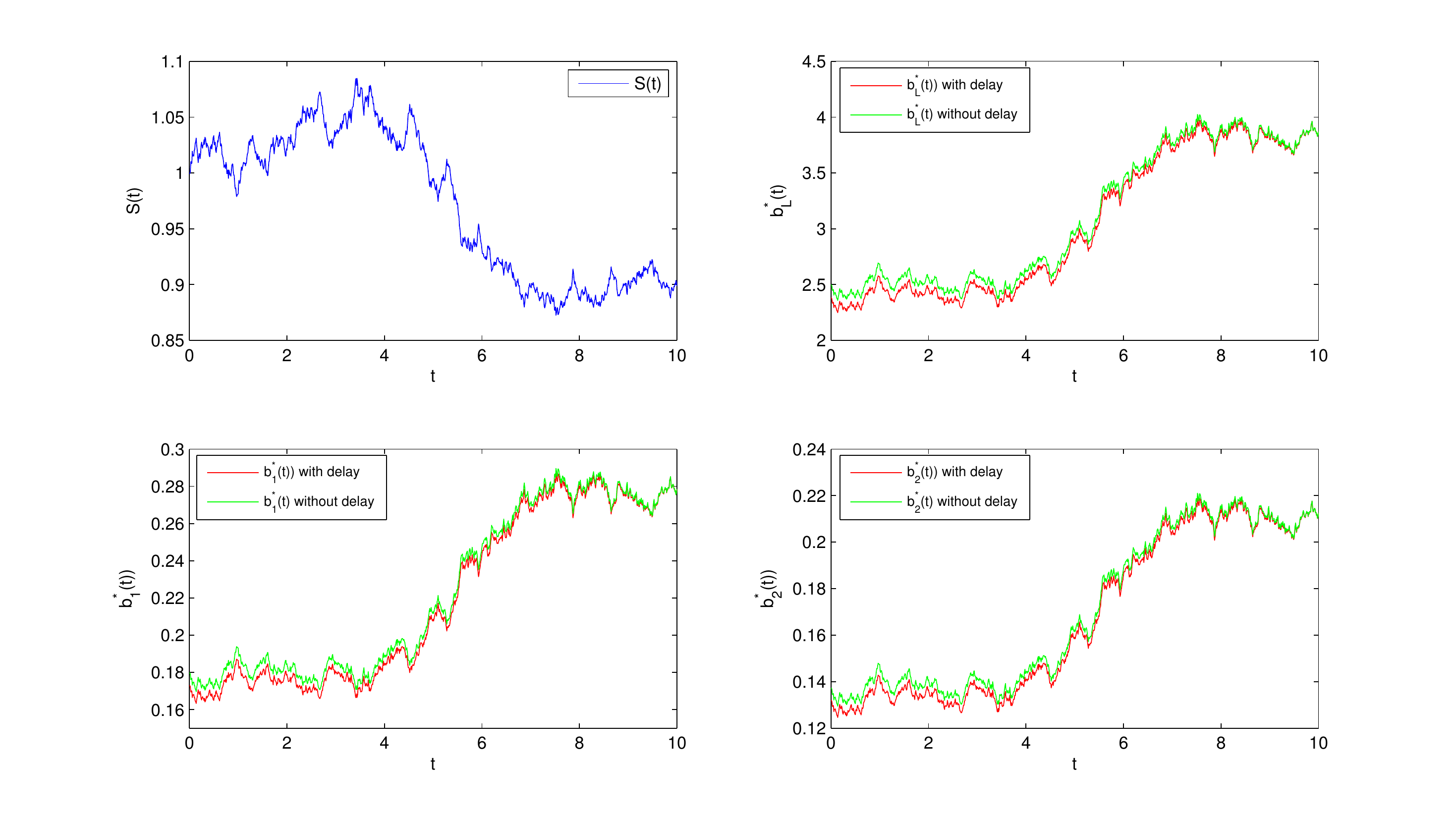}\\
  \caption{$b_L^{\ast}(t)$, $b_1^{\ast}(t)$ and $b_2^{\ast}(t)$ with and without delay.}
  \label{fig:bdelayt}
  \end{center}
\end{figure}

\begin{figure}
  \centering
  \subfigure[$\beta<0$]{
    \includegraphics[width=2.8in]{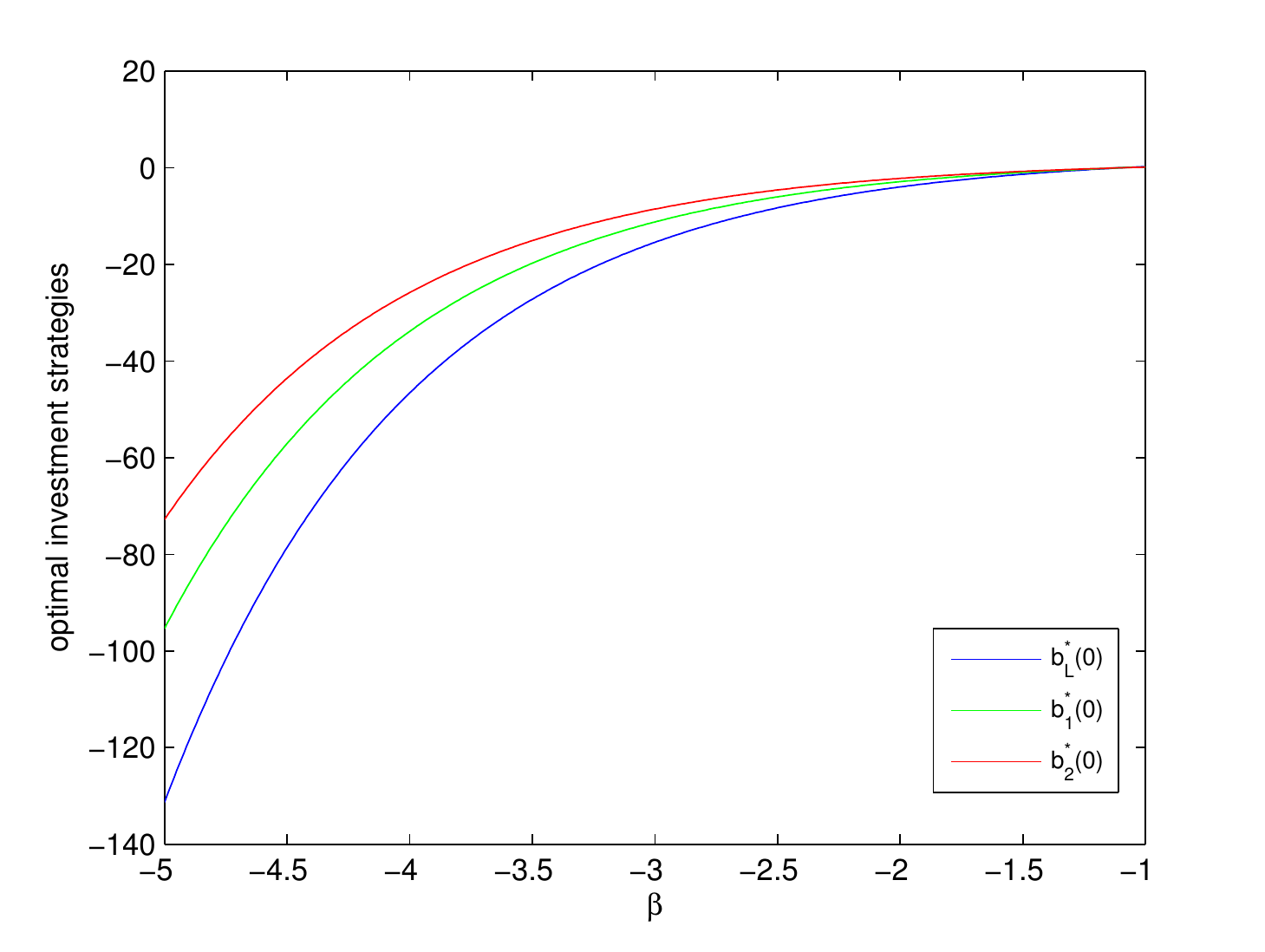}
  }
  \subfigure[$\beta>0$]{
    \includegraphics[width=2.8in]{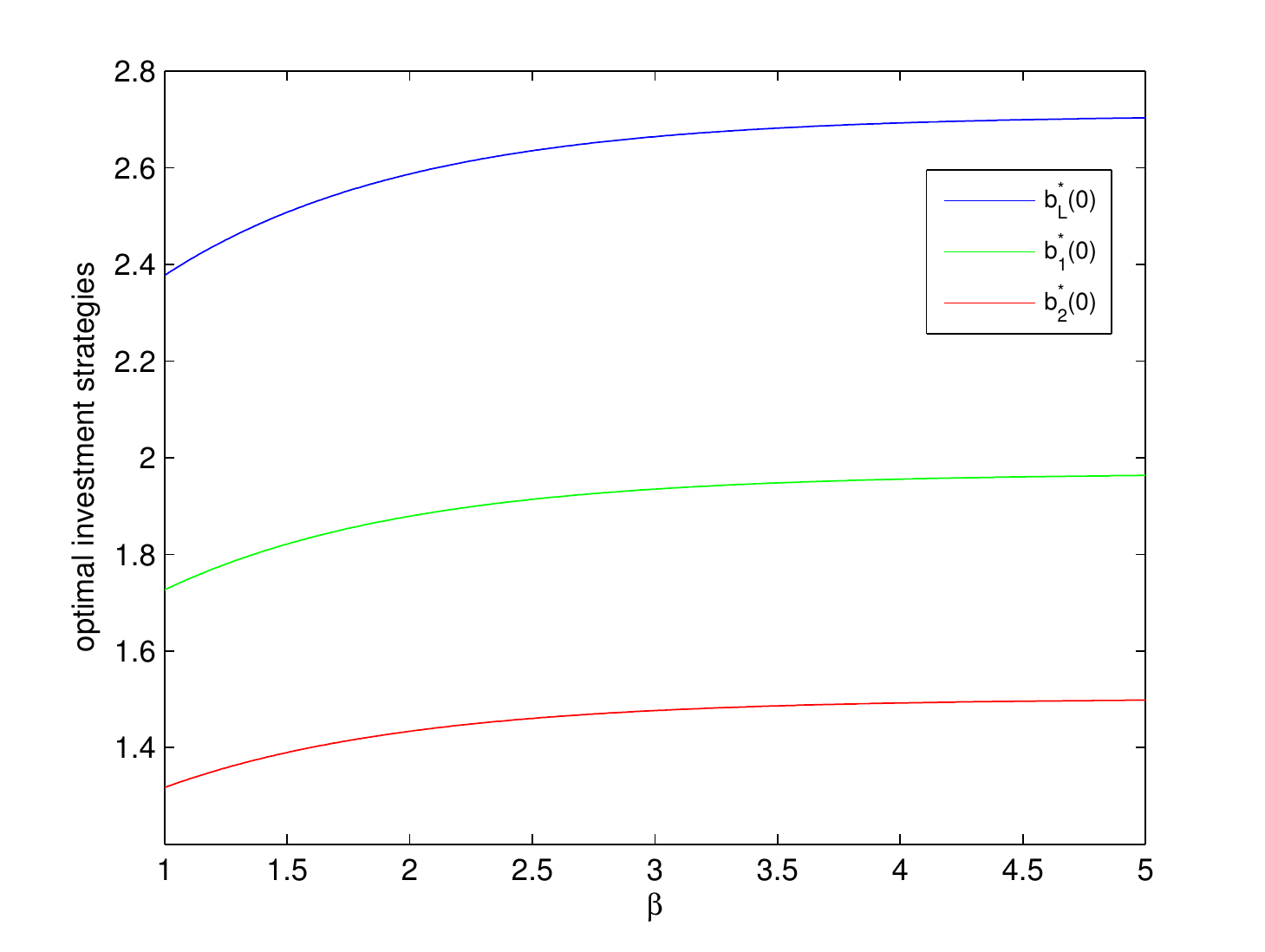}
  }
  \caption{Effects of $\beta$ on $b_L^{\ast}(0)$, $b_1^{\ast}(0)$ and $b_2^{\ast}(0)$.}
  \label{fig:bbeta} 
\end{figure}

Figure \ref{fig:bbeta} indicates the influence of the constant elasticity parameter $\beta$ on optimal investment strategies at the initial moment, including $\beta<0$ and $\beta>0$.
In Figure \ref{fig:bbeta}, we can note that both the reinsurer's and insurers' investment strategies will increase as $\beta$ increases.
The investment amount is negative when elasticity parameter $\beta<0$, and the investment amount is positive when the elasticity parameter $\beta>0$.
In other words, positive elasticity parameter results in a positive hedging demand; the hedging demand is negative for negative elasticity parameter, which is consistent with the description in Section \ref{subsection3.1}.

Figure \ref{fig:bgammak} indicates the impacts of the risk aversion coefficients (i.e., $\gamma_L$, $\gamma_1$, $\gamma_2$) and sensitivity coefficients (i.e., $k_1$, $k_2$) on optimal investment strategies at the initial moment.
Both \ref{fig:investmentgammaL}, \ref{fig:investmentgamma1} and \ref{fig:investmentgamma2} show that the greater the risk aversion coefficient, the less the amount invested in the risky asset, which is consistent with the actual situation.
Both \ref{fig:investmentgamma1} and \ref{fig:investmentgamma2} show that for insurer $i$ ($i\in\{1,2\}$), the greater the sensitivity coefficient $k_i$, the more the insurer $i$ invests in the risky asset.
Because the sensitivity coefficient $k_i$ reflects the degree to which insurer $i$ cares about the terminal wealth of its competitor (i.e., insurer $j$, $j\neq i\in\{1,2\}$), the larger the $k_i$, the more the insurer $i$ cares about the performance of its opponent.
Therefore, when the sensitivity coefficient $k_i$ is larger, insurer $i$ is more inclined to invest more money into the risky asset for increasing its wealth.

\begin{figure}
  \centering
  \subfigure[Effect of $\gamma_L$ on $b_L^{\ast}(0)$]{
    \label{fig:investmentgammaL}
    \begin{minipage}{5cm}
      \centering
      \includegraphics[width=5cm]{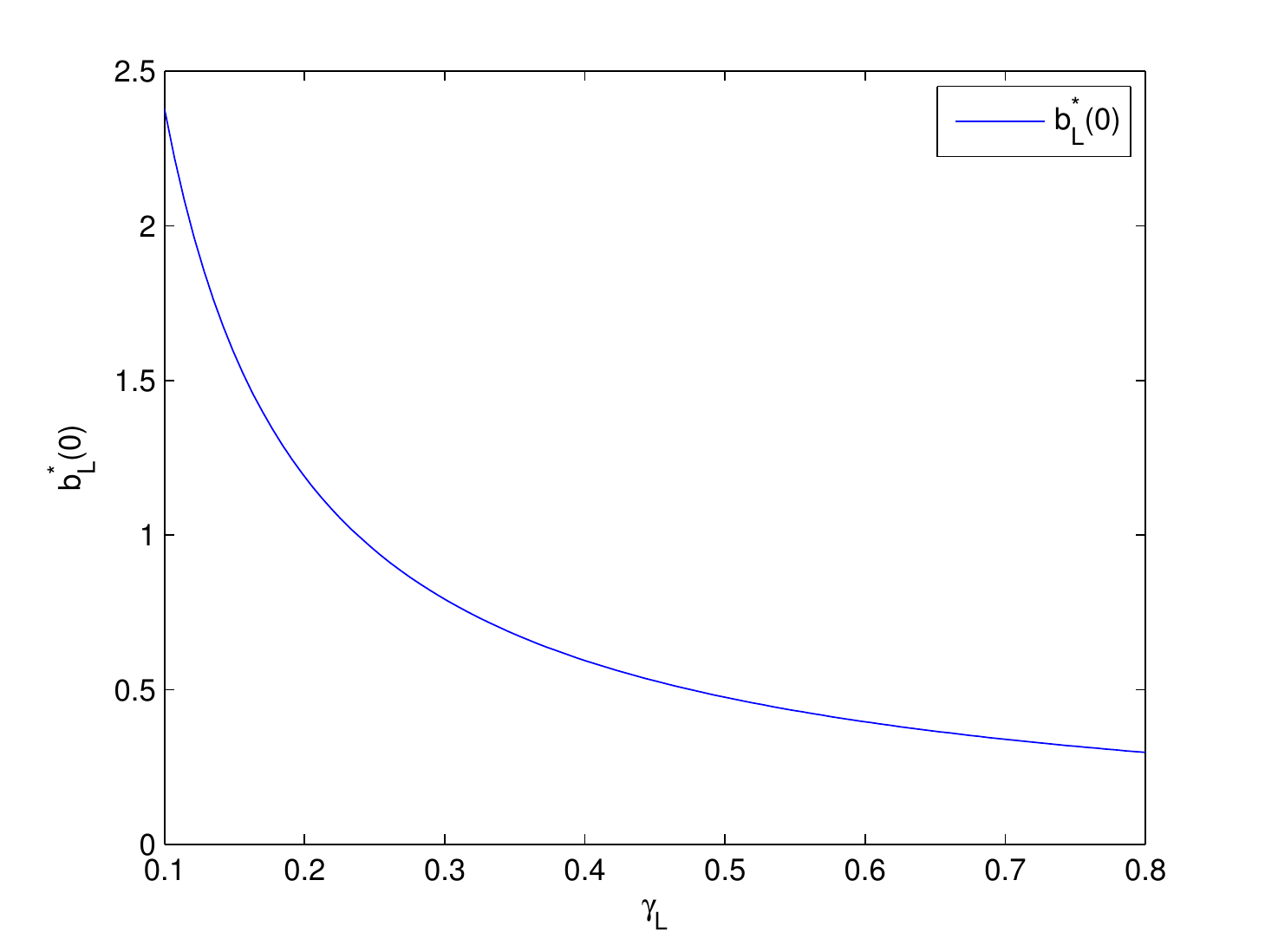}
    \end{minipage}
  }
  \subfigure[Effects of $\gamma_1$ and $k_1$ on $b_1^{\ast}(0)$]{
    \label{fig:investmentgamma1} 
    \begin{minipage}{5cm}
      \centering
    \includegraphics[width=5cm]{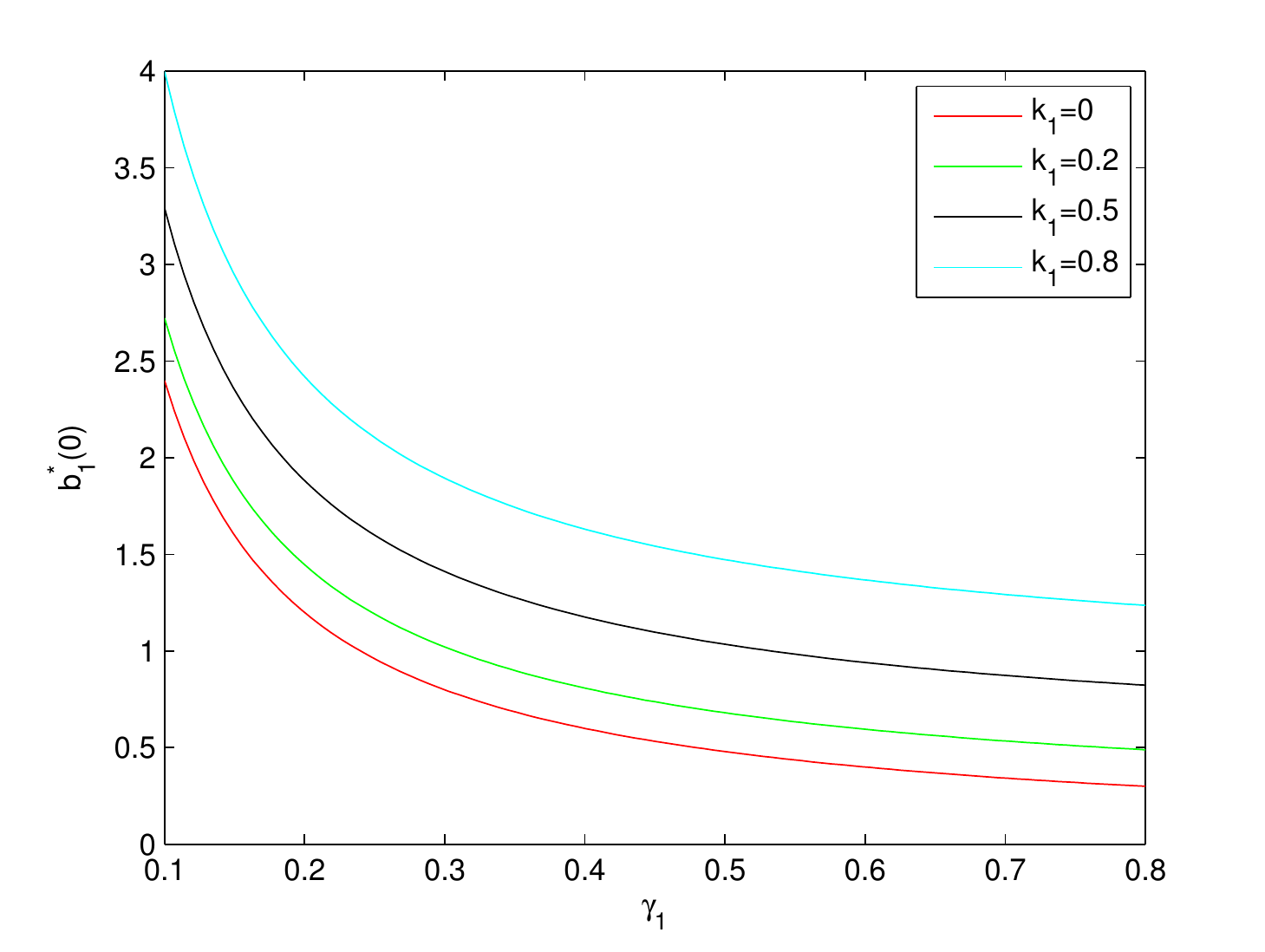}
    \end{minipage}
  }
  \subfigure[Effects of $\gamma_2$ and $k_2$ on $b_2^{\ast}(0)$]{
    \label{fig:investmentgamma2} 
     \begin{minipage}{5cm}
      \centering
    \includegraphics[width=5cm]{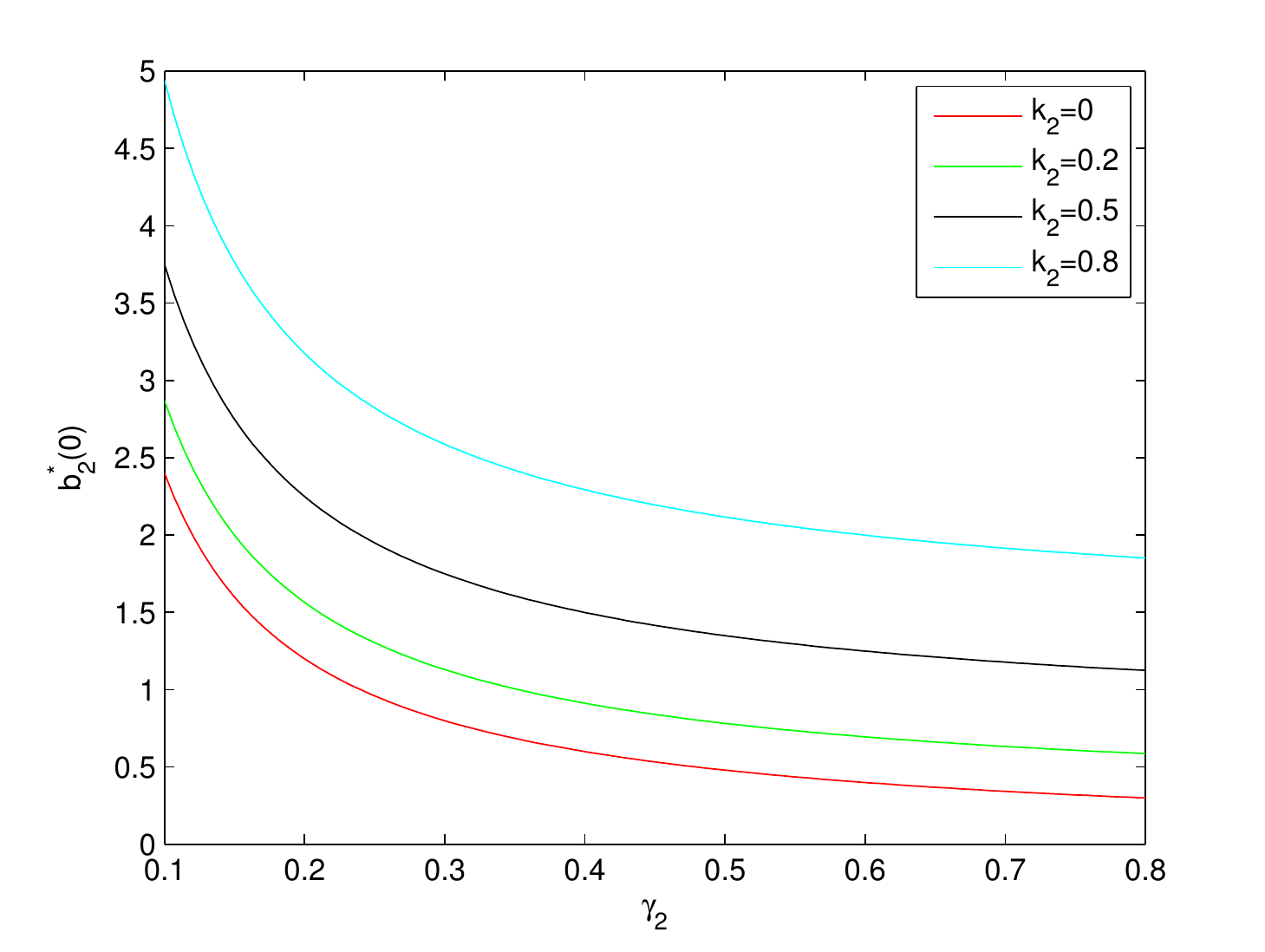}
    \end{minipage}
  }
  \caption{Effects of risk aversion coefficients  and sensitivity coefficients on optimal investment strategies.}
  \label{fig:bgammak} 
\end{figure}

\begin{figure}
  \centering
  \subfigure[Effects of $\eta_L$ and $h_L$ on $b_L^{\ast}(0)$]{
    \label{fig:betaLhL}
    \begin{minipage}{5cm}
      \centering
    \includegraphics[width=5cm]{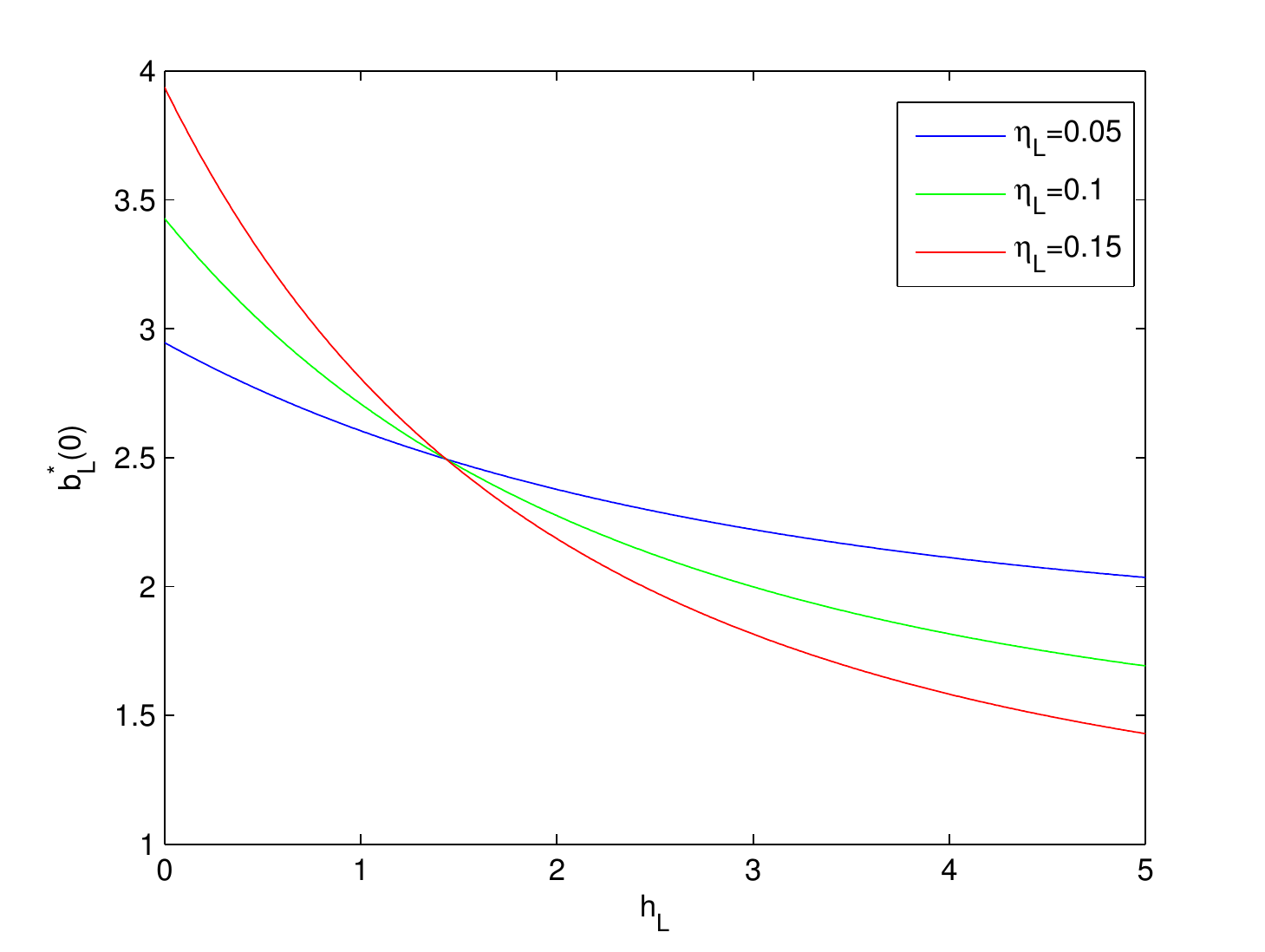}
    \end{minipage}
  }
  \subfigure[Effects of $\eta_1$ and $h_1$ on $b_1^{\ast}(0)$]{
    \label{fig:beta1h1} 
    \begin{minipage}{5cm}
      \centering
    \includegraphics[width=5cm]{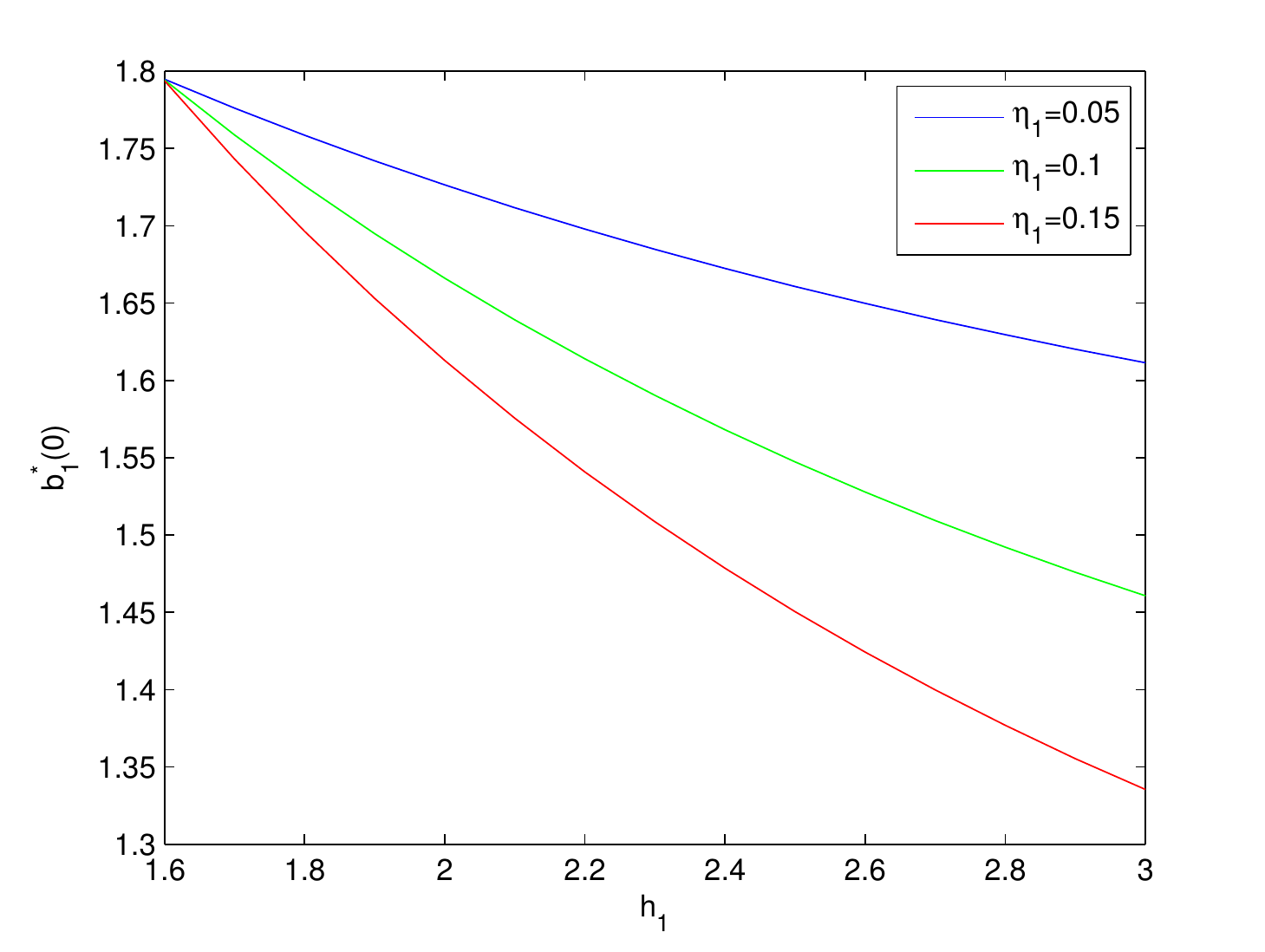}
    \end{minipage}
  }
  \subfigure[Effects of $\eta_2$ and $h_2$ on $b_2^{\ast}(0)$]{
    \label{fig:beta2h2} 
    \begin{minipage}{5cm}
      \centering
    \includegraphics[width=5cm]{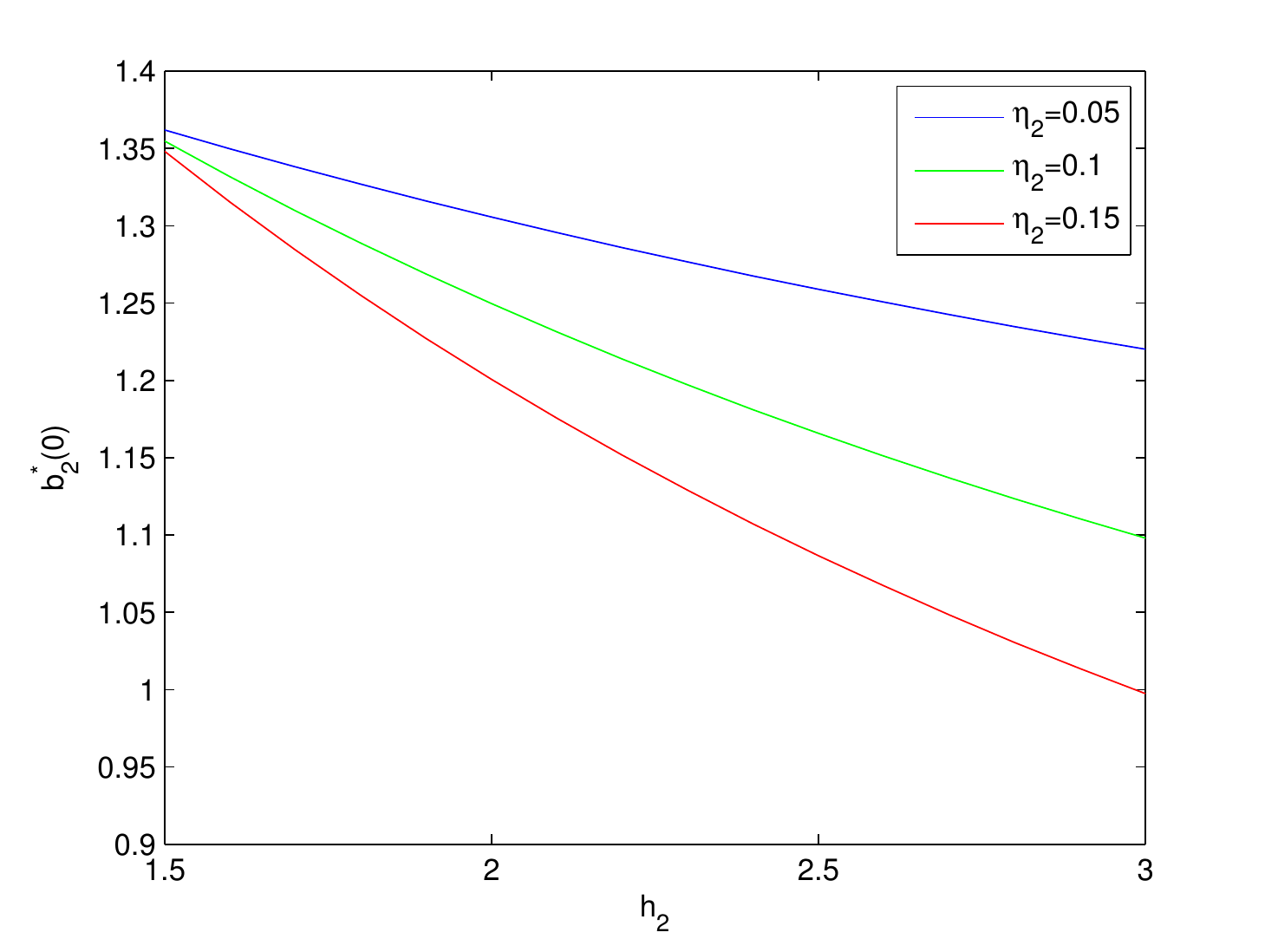}
    \end{minipage}
    }
                \quad           
    \subfigure[Effect of $\alpha_L$ on $b_L^{\ast}(0)$]{
    \label{fig:bLalphaL}
    \begin{minipage}{5cm}
      \centering
    \includegraphics[width=5cm]{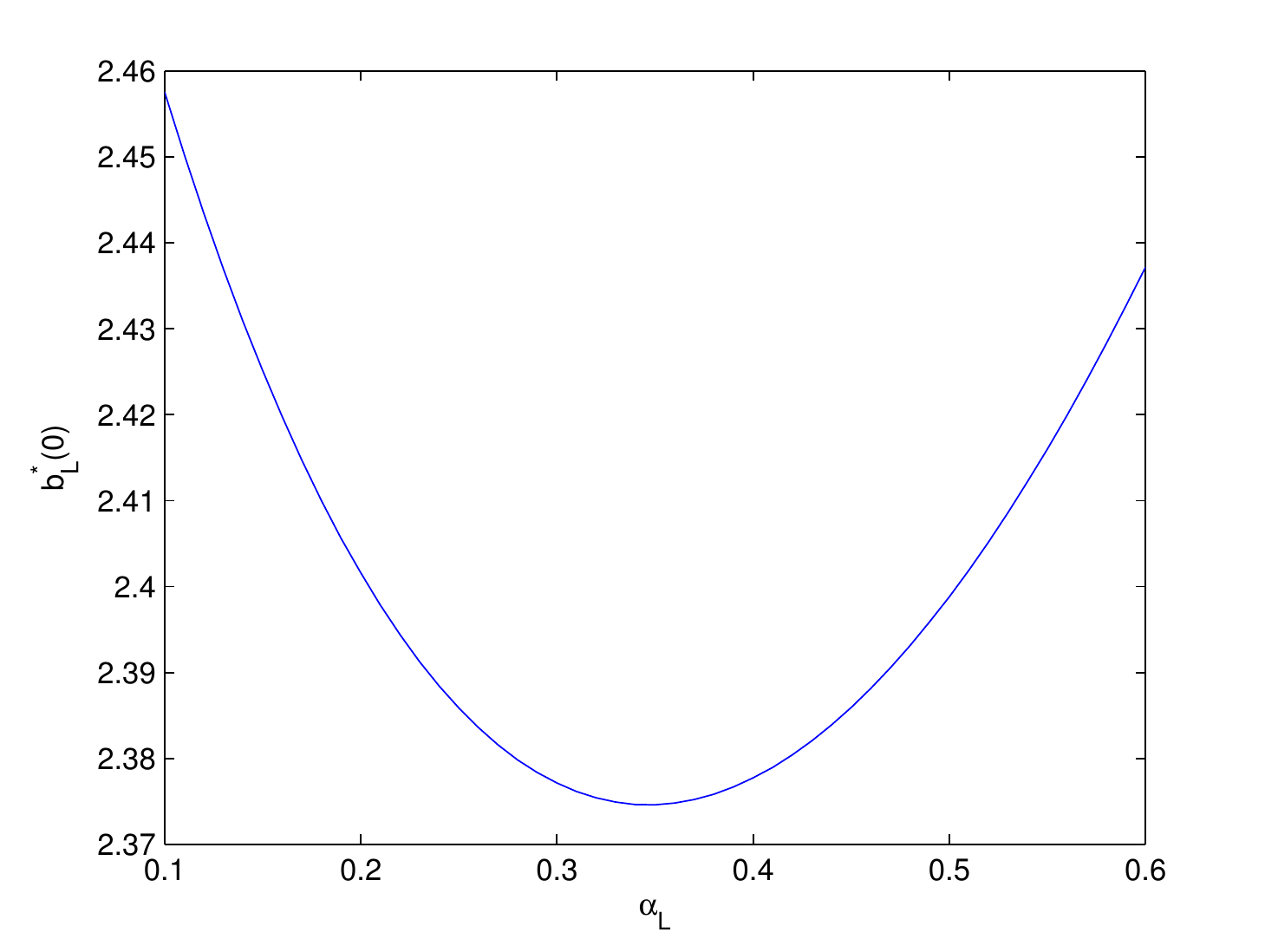}
    \end{minipage}
  }
  \subfigure[Effect of $\alpha_1$ on $b_1^{\ast}(0)$]{
    \label{fig:b1alpha1} 
    \begin{minipage}{5cm}
      \centering
    \includegraphics[width=5cm]{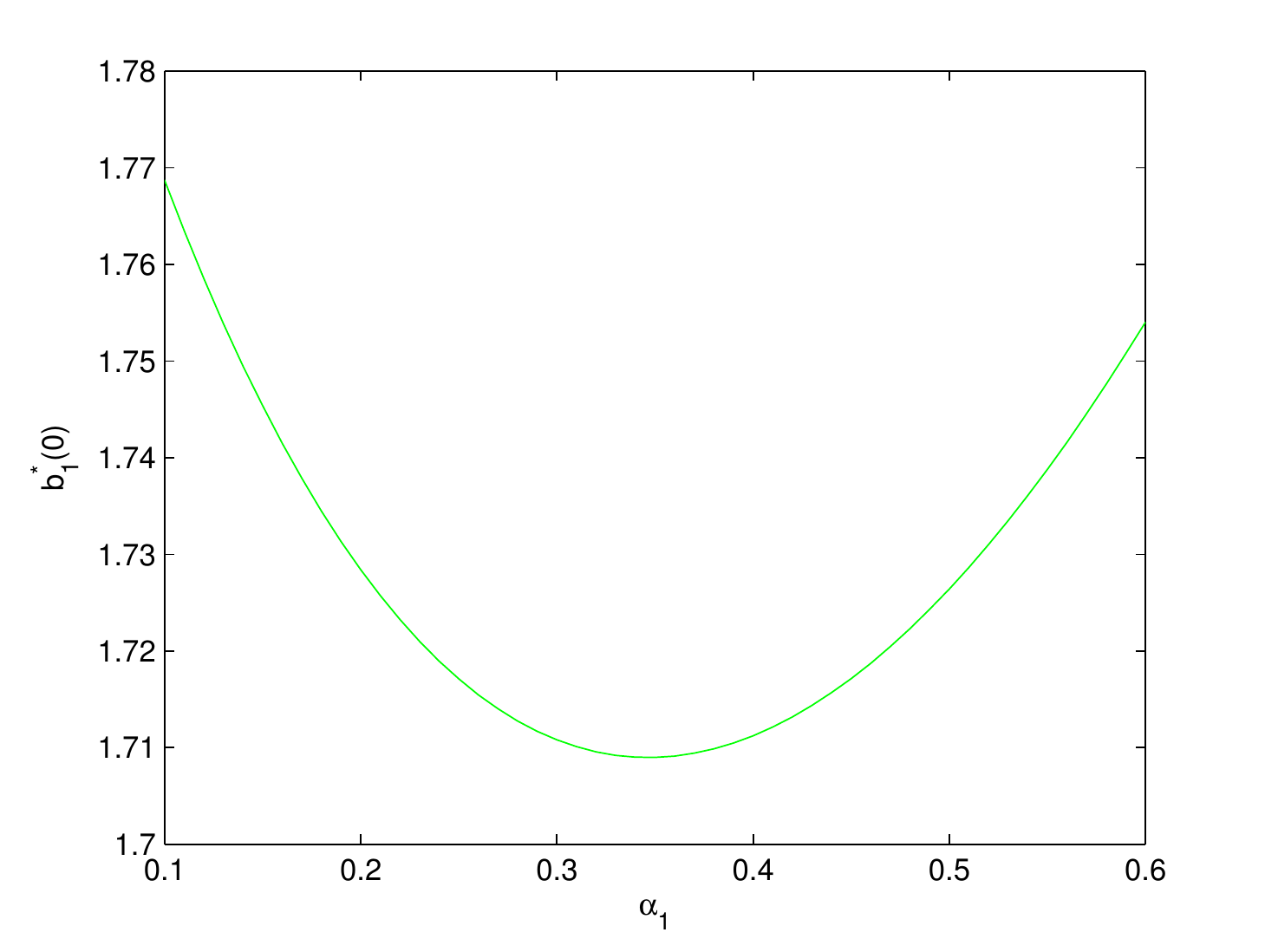}
    \end{minipage}
  }
  \subfigure[Effect of $\alpha_2$ on $b_2^{\ast}(0)$]{
    \label{fig:b2alpha2} 
    \begin{minipage}{5cm}
      \centering
    \includegraphics[width=5cm]{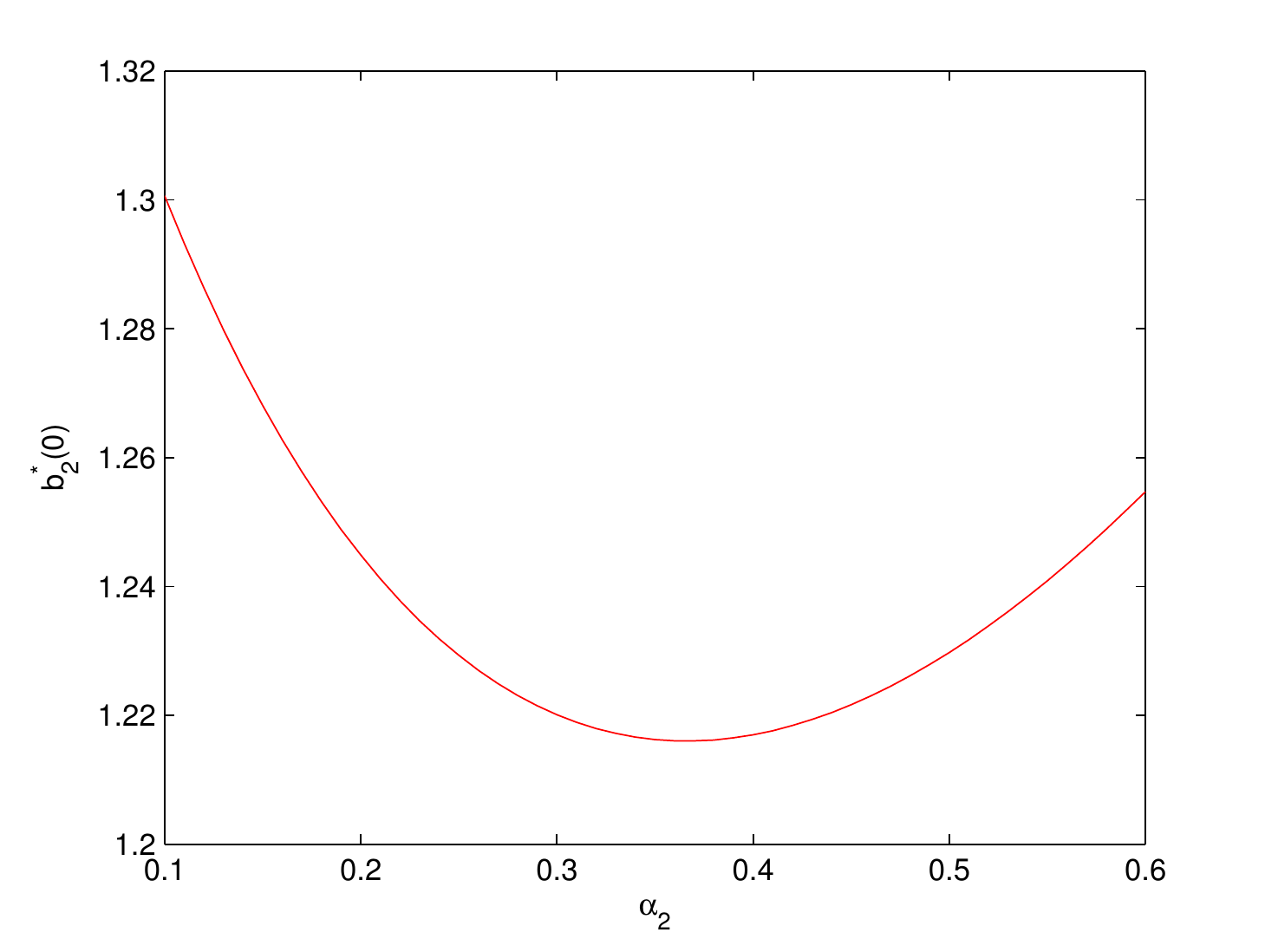}
    \end{minipage}
  }
  \caption{Effects of delay parameters on optimal investment strategies.}
  \label{fig:bdelay} 
\end{figure}

Figure \ref{fig:bdelay} indicates the effects of delay parameters (i.e., $h_L$, $\eta_L$, $\alpha_L$, $h_1$, $\eta_1$, $\alpha_1$, $h_2$, $\eta_2$ and $\alpha_2$) on optimal investment strategies (i.e., $b_L^{\ast}(0)$, $b_1^{\ast}(0)$ and $b_2^{\ast}(0)$), respectively.
Both \ref{fig:betaLhL}, \ref{fig:beta1h1} and \ref{fig:beta2h2} show that when the delay weight defined, the longer the delay time, the less money is invested in the risky asset.
That is, when the delay time is longer, the reinsurer and the insurers adopt more conservative and robust investment strategies to control their risk.
As can be seen from \ref{fig:betaLhL}, for the reinsurer, when the delay time is within a certain range, the greater the delay weight, the more money is invested in the risky asset. Otherwise, the opposite situation occurs.
From the range of abscissa in \ref{fig:beta1h1} and \ref{fig:beta2h2}, we can know that,
$h_1>-\frac{1}{\alpha_1}ln(1-r_0-\alpha_1)=1.5970$ and
$h_2>-\frac{1}{\alpha_2}ln(1-r_0-\alpha_2)=1.4359$.
\footnote{Note: Because $\eta_j=\frac{(r_0-1+e^{-\alpha_ih_i}+\alpha_i)\eta_i}{(r_0-1+e^{-\alpha_jh_j}+\alpha_j) +(\alpha_j-\alpha_i+e^{-\alpha_jh_j}-e^{-\alpha_ih_i})\eta_i}$ for $i\neq j\in\{1,2\}$, the selection of abscissa, delay time parameters and the delay weight parameters should ensure that $\eta_i\in(0,1),i\in\{1,2\}$.}
For insurer $i$, $i\in\{1,2\}$, when the delay time is greater than the certain value, the greater the delay weight $\eta_i$ is, the less insurer $i$ invests in the risky asset, i.e., $\frac{\partial b_i^{\ast}(t)}{\partial\eta_i}<0$, which is consistent with the case of $h_i>-\frac{1}{\alpha_i}ln(1-r_0-\alpha_i)$ in equation \eqref{equ:beta}.
Subfigures \ref{fig:bLalphaL}, \ref{fig:b1alpha1} and \ref{fig:b2alpha2} show the effects of $\alpha_L$, $\alpha_1$ and $\alpha_2$ on $b_L^{\ast}(0)$, $b_1^{\ast}(0)$ and $b_2^{\ast}(0)$, respectively.
These three subgraphs illustrate that the impact of the average parameter on the optimal investment strategy will change with its size, which is consistent with equation \eqref{equ:balpha}.

\subsection{Sensitivity analysis of the equilibrium reinsurance strategy}

\begin{figure}[htp]
\begin{center}
  \includegraphics[width=5in]{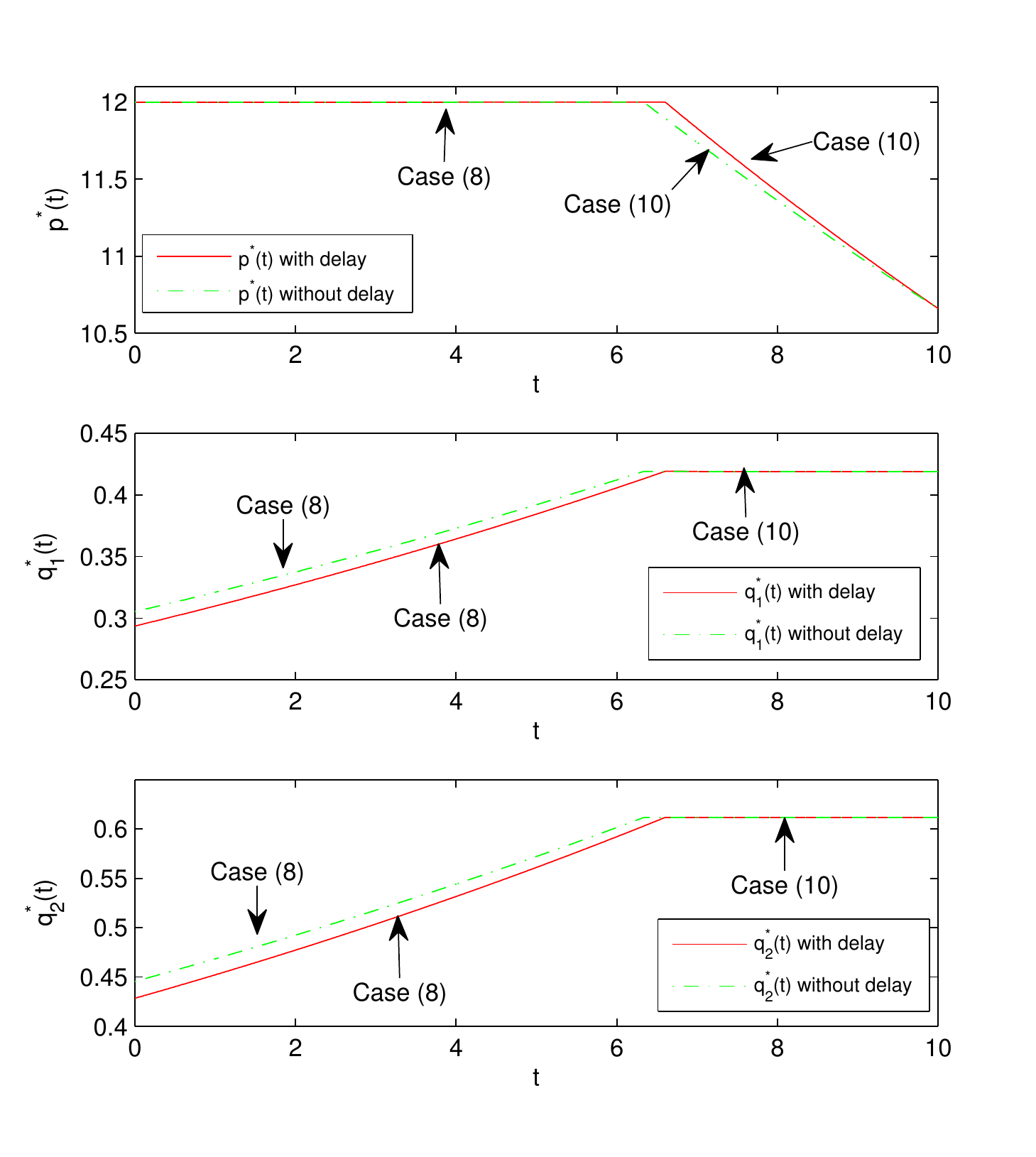}\\
  \caption{$p^{\ast}(t)$, $q_1^{\ast}(t)$ and $q_2^{\ast}(t)$ with and without delay.}
  \label{fig:pqdelayt}
  \end{center}
\end{figure}
\begin{table}[htp]
\setlength{\abovecaptionskip}{0.cm}
\small
\caption{Numerical results corresponding to Figure \ref{fig:pqdelayt}.} 
\centering 
\setlength{\tabcolsep}{1.2mm}
\begin{tabular}
    {@{}ccccccccccccccc@{}}
    \toprule
$$& $t$ & $0$ & $1$ & $2$ & $3$ & $4$& $5$& $6$& $7$& $8$& $9$& $10$\\
\midrule
\multirow{3}{*}{$with~delay$}
&$p^{\ast}(t)$
& 12 &  12  & 12  & 12  & 12 &  12  & 12 &  11.831 &  11.419  & 11.030 &  10.661\\
&$q_1^{\ast}(t)$
& 0.294 & 0.310 & 0.327 & 0.345 & 0.364 & 0.384 & 0.406 & 0.419 & 0.419 & 0.419 & 0.419 \\
&$q_2^{\ast}(t)$
& 0.429 & 0.452 & 0.477 & 0.504 & 0.532 & 0.561 & 0.592 & 0.612 & 0.612 & 0.612 & 0.612
  \\
  \midrule
\multirow{3}{*}{$without~delay$}
&$p^{\ast}(t)$
& 12 &  12  & 12  & 12  & 12 &  12  & 12 & 11.739 & 11.361 & 11.002 & 10.661\\
&$q_1^{\ast}(t)$
& 0.305 & 0.321 & 0.337 & 0.355 & 0.373 & 0.392 & 0.412 & 0.419 & 0.419 & 0.419 & 0.419\\
&$q_2^{\ast}(t)$
& 0.446 & 0.468 & 0.492 & 0.518 & 0.544 & 0.572 & 0.601 & 0.612 & 0.612 & 0.612 & 0.612\\
\bottomrule
\end{tabular}
\label{pqq}
\end{table}
Figure \ref{fig:pqdelayt} shows the changes of the reinsurer's optimal premium strategy and the insurers' optimal reinsurance strategies with and without delay over time.
Table \ref{pqq} shows numerical results corresponding to Figure \ref{fig:pqdelayt}.
When $t\leq 6$, it satisfies conditions in Case (8) of Theorem \ref{Theorem1}.
In this case, the reinsurer's optimal premium strategy takes its upper bound $p^{\ast}(t)=\bar{c}=(1+\bar{\theta})\lambda_F\mu_F=12$; the optimal reserve proportional of insurer $i$ ($i\in\{1,2\}$) is
$q_i^{\ast}(t)=K[N^{\bar{c}F_i}(t)+\frac{k_i\rho\sigma_j}{\sigma_i}N^{\bar{c}F_j}(t)]$, which is an increasing function of $t$.
When $t\geq 7$, it satisfies conditions in Case (10) of Theorem \ref{Theorem1}.
In this case, the reinsurer's optimal premium strategy is $p^{\ast}(t)=\frac{P^N(t)}{P^D(t)}$, which is a decreasing function of $t$;
the optimal reserve proportional of insurer $i$ ($i\in\{1,2\}$) is
$q_i^{\ast}(t)=K [\frac{\frac{P^N(t)}{P^D(t)}-a_i}{\gamma_{i}\sigma_{i}^2\varphi^{F_i}(t)} +\frac{k_i\rho(\frac{P^N(t)}{P^D(t)}-a_j)}{\gamma_{j}\sigma_{1}\sigma_{2}\varphi^{F_j}(t)}]$.
Furthermore, from Figure \ref{fig:pqdelayt}, we can find that the price of reinsurance premium with delay is not lower than that without delay, and the reserve proportional with delay is not higher than that without delay.
This indicates that delay factors can urge the reinsurer and insurers to hedge risk by raising the price of reinsurance premium or reducing the reserve proportional under the setting of parameters in Table \ref{Tablefinance}, Table \ref{Tablereinsurer} and Table \ref{Tableinsurer}.
Furthermore, in Case (10), as time goes on, the gap between the reinsurance premium price with delay and that without delay becomes smaller and smaller, and the two are completely equal until the terminal time $T$.

Since Case (10) in Theorem \ref{Theorem1} is the most general situation in this paper, we will analyze the premium strategy and reinsurance strategies in Case (10) below.
For convenience, we choose the strategies at t=9 for analysis.

\begin{figure}[htp]
\begin{center}
  \includegraphics[width=4in]{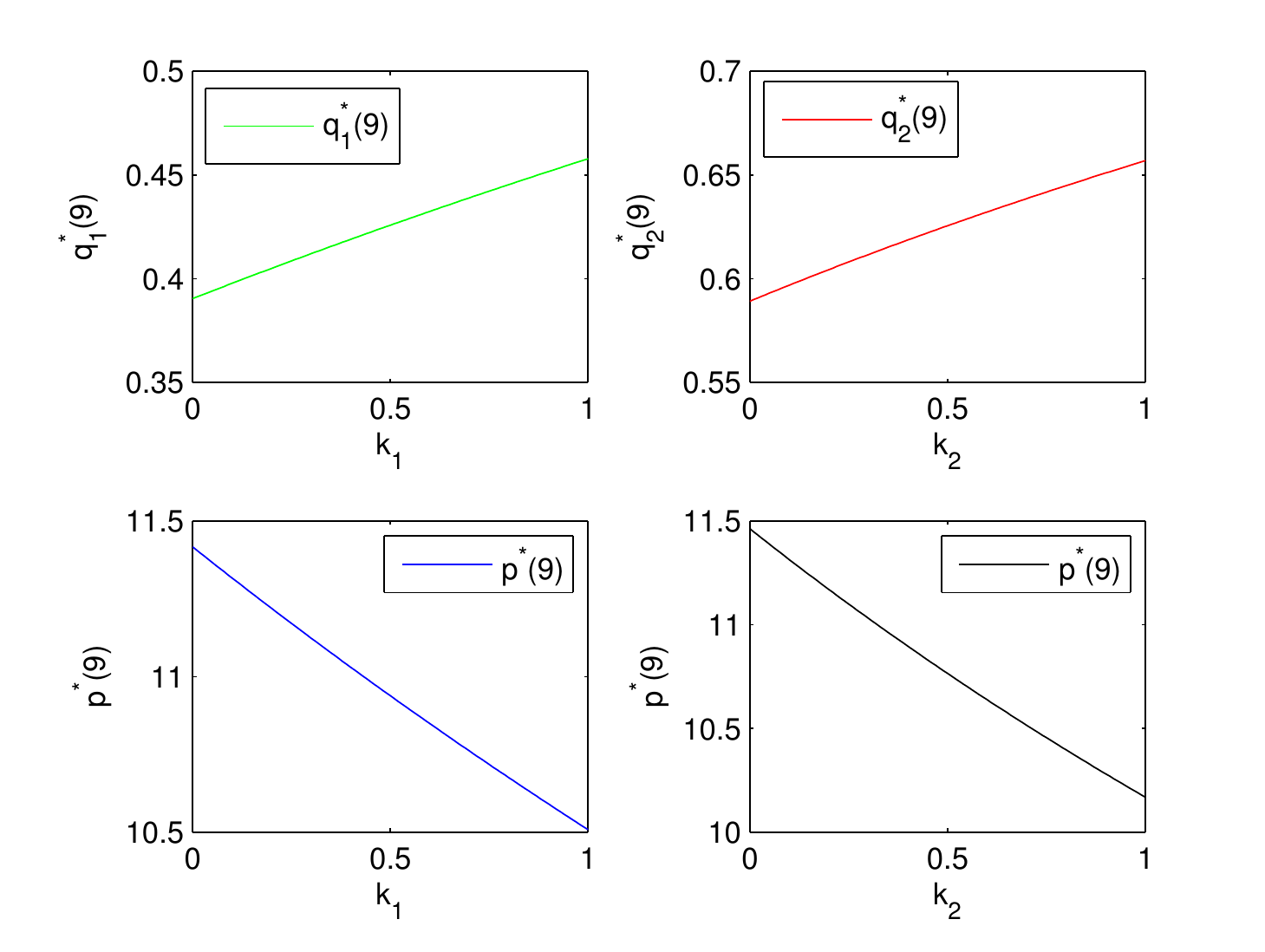}\\
  \caption{Effects of sensitivity parameters on optimal reinsurance strategies.}
  \label{fig:pqk}
  \end{center}
\end{figure}
Figure \ref{fig:pqk} illustrates the effects of sensitive parameters (i.e., $k_1,k_2$) on optimal reinsurance strategies (i.e., $q_1^{\ast}(9)$, $q_2^{\ast}(9)$) and optimal premium strategy (i.e., $p^{\ast}(9)$), respectively.
The more sensitive insurer $i$ ($i\in\{1,2\}$) is to the wealth of insurer $j$ ($j\neq i\in\{1,2\}$), (that is to say, $k_i$ is bigger), the higher the reservation ratio of insurer $i$ is, and the lower the premium price is.
This is because the more sensitive the insurer is to the wealth level of the other party, the more serious the psychology of comparison will be, which makes the insurer more eager to widen the wealth gap with its opponent.
This comparing mentality leads to its willingness to assume more risk of random claims than to spend money on reinsurance contracts.
This phenomenon in turn leads to lower the premium price.
In summary, competitive factors between the two insurers reduce the demand for reinsurance and the price of reinsurance premiums.

\begin{figure}
  \centering
  \subfigure[Effect of $\gamma_L$ on $p^{\ast}(9)$]{
    \label{fig:pgammaL} 
     \begin{minipage}{5cm}
      \centering
    \includegraphics[width=5cm]{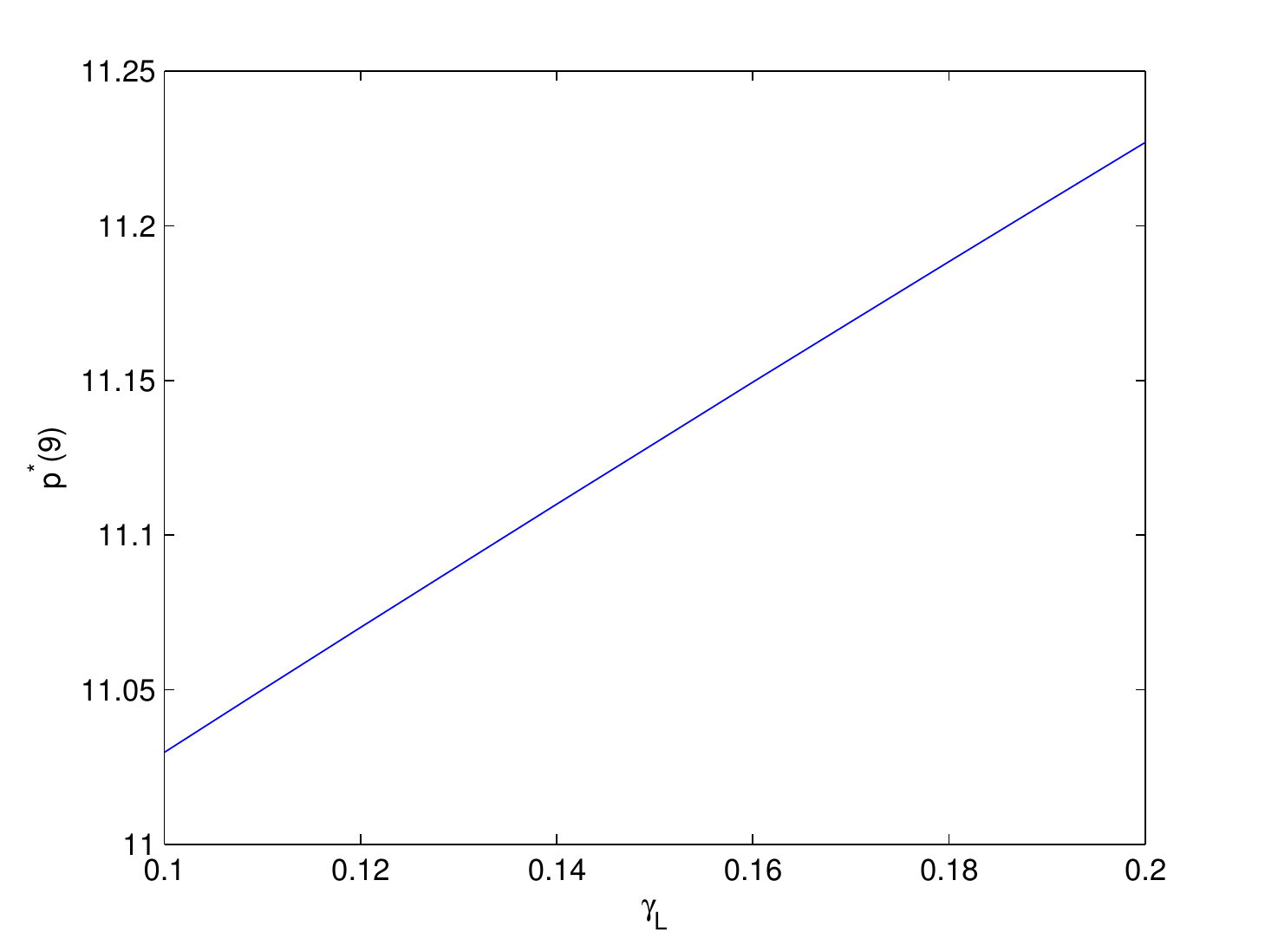}
    \end{minipage}
  }
  \subfigure[Effect of $\gamma_1$ on $p^{\ast}(9)$]{
    \label{fig:pgamma1}
     \begin{minipage}{5cm}
      \centering
    \includegraphics[width=5cm]{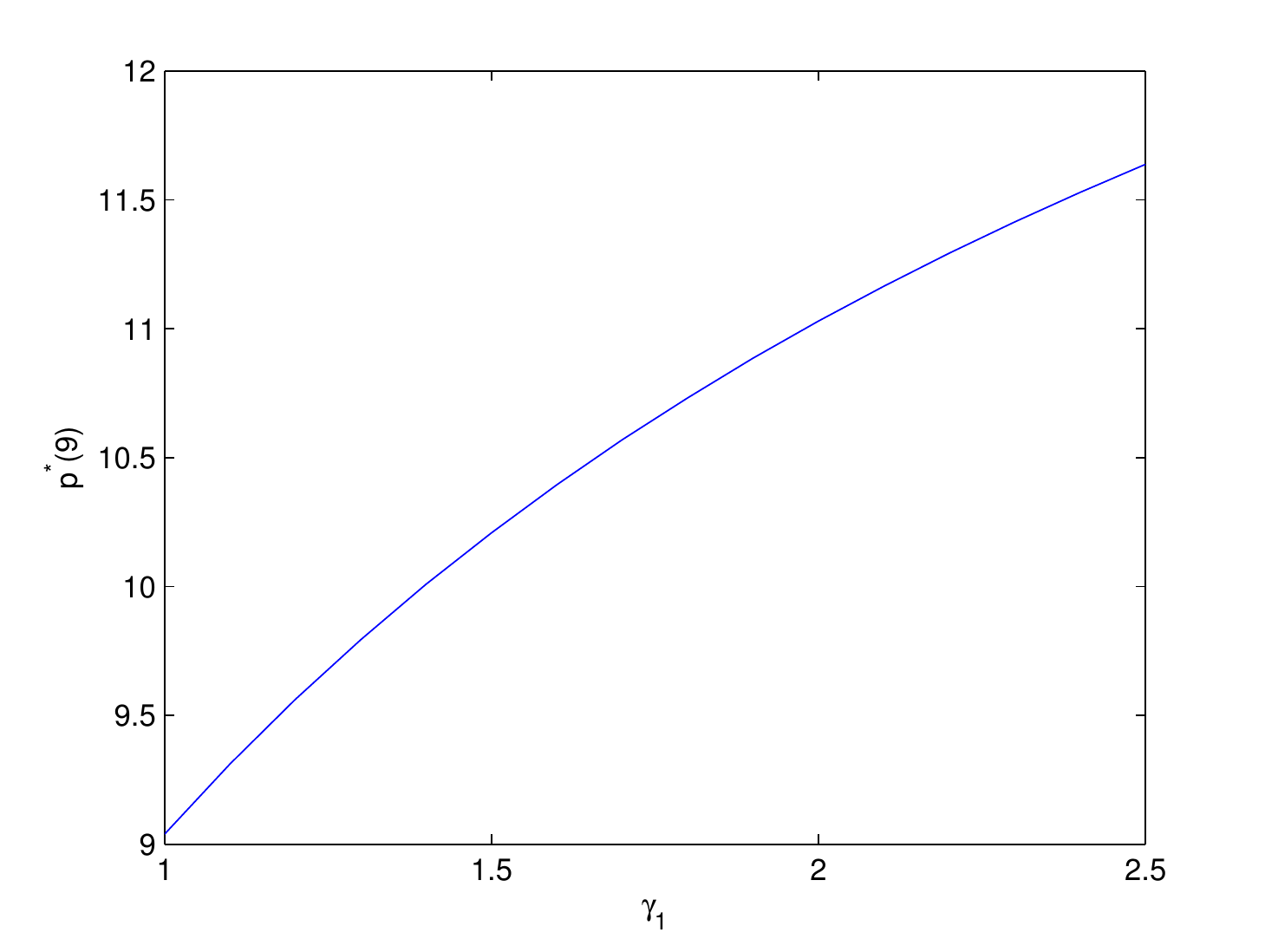}
    \end{minipage}
  }
  \subfigure[Effect of $\gamma_2$ on $p^{\ast}(9)$]{
    \label{fig:pgamma2}
     \begin{minipage}{5cm}
      \centering
    \includegraphics[width=5cm]{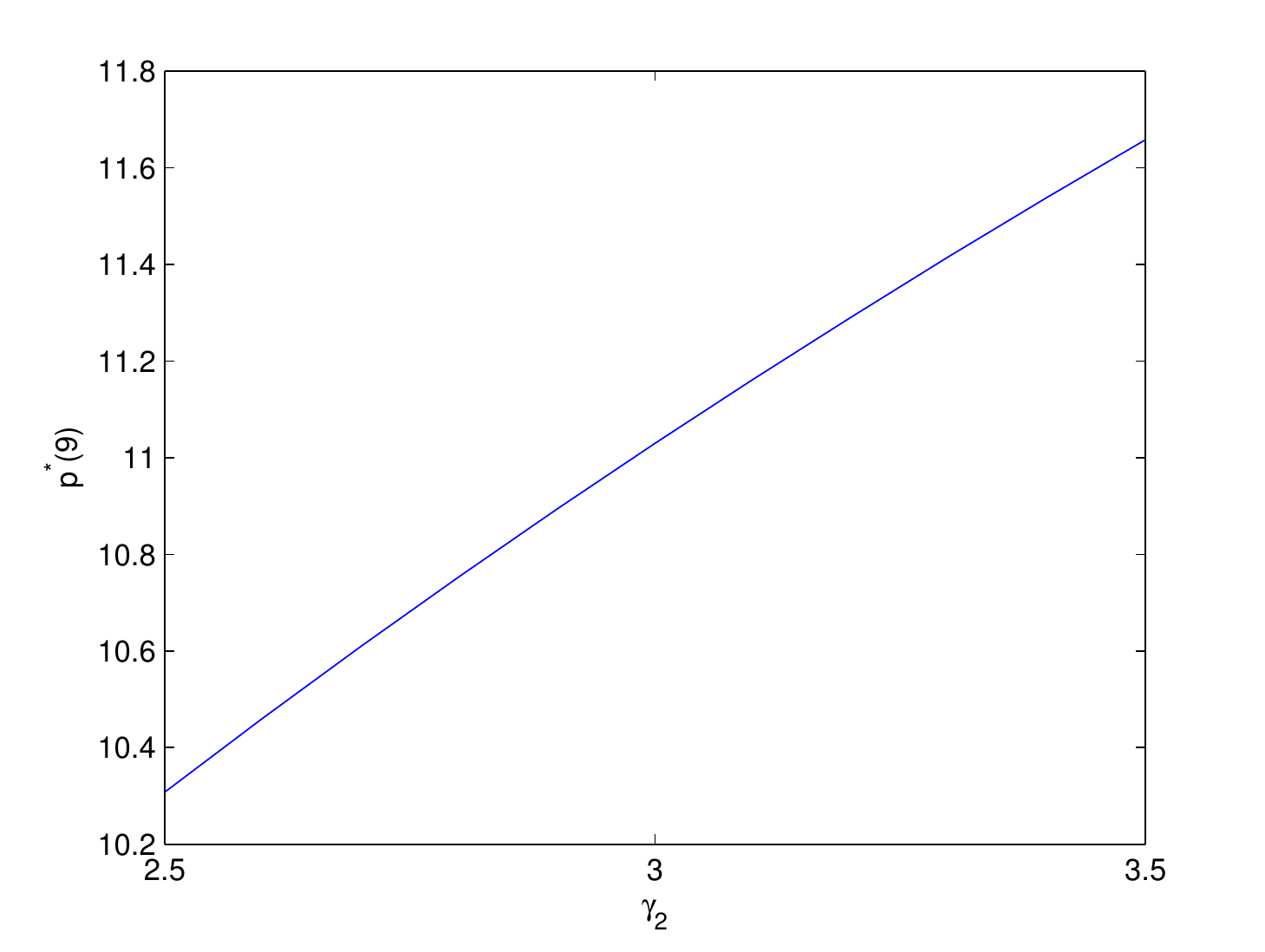}
    \end{minipage}
  }

    \subfigure[Effects of $\gamma_L$ on $q_1^{\ast}(9)$ and $q_2^{\ast}(9)$]{
    \label{fig:q1q2gammaL} 
     \begin{minipage}{5cm}
      \centering
    \includegraphics[width=5cm]{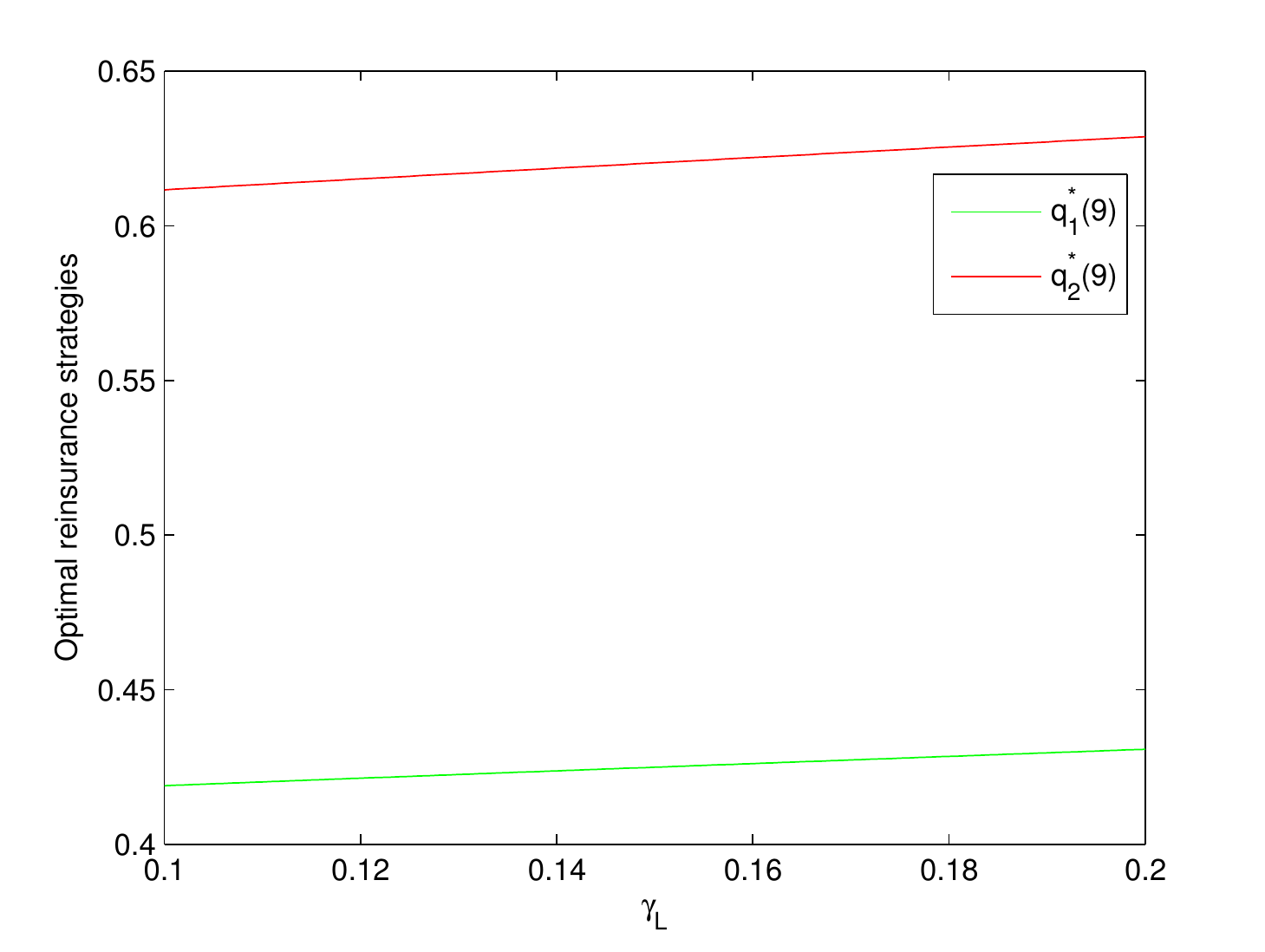}
    \end{minipage}
  }
  \subfigure[Effects of $\gamma_1$ on $q_1^{\ast}(9)$ and $q_2^{\ast}(9)$]{
    \label{fig:q1q2gamma1}
     \begin{minipage}{5cm}
      \centering
    \includegraphics[width=5cm]{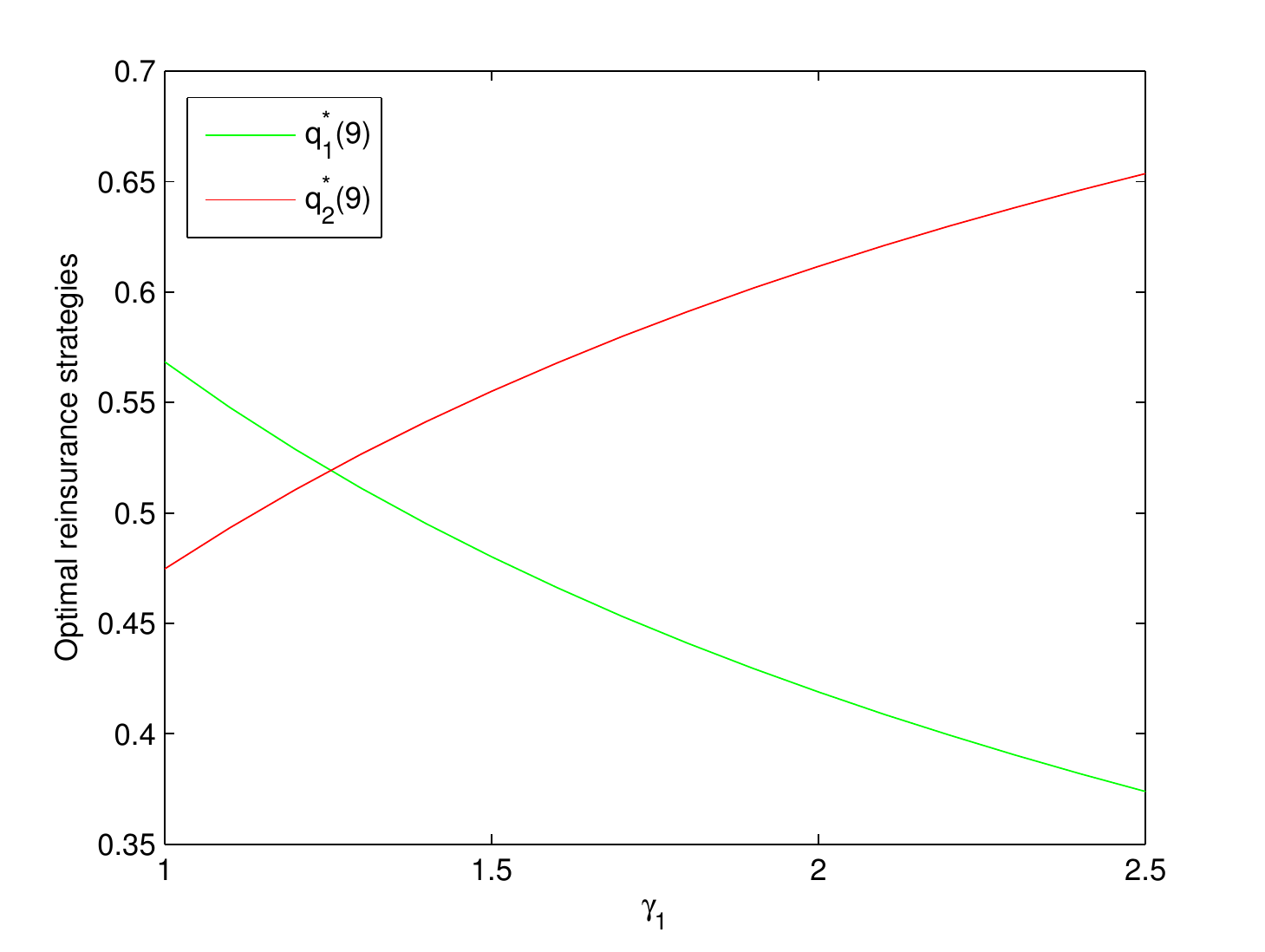}
    \end{minipage}
  }
  \subfigure[Effects of $\gamma_2$ on $q_1^{\ast}(9)$ and $q_2^{\ast}(9)$]{
    \label{fig:q1q2gamma2}
     \begin{minipage}{5cm}
      \centering
    \includegraphics[width=5cm]{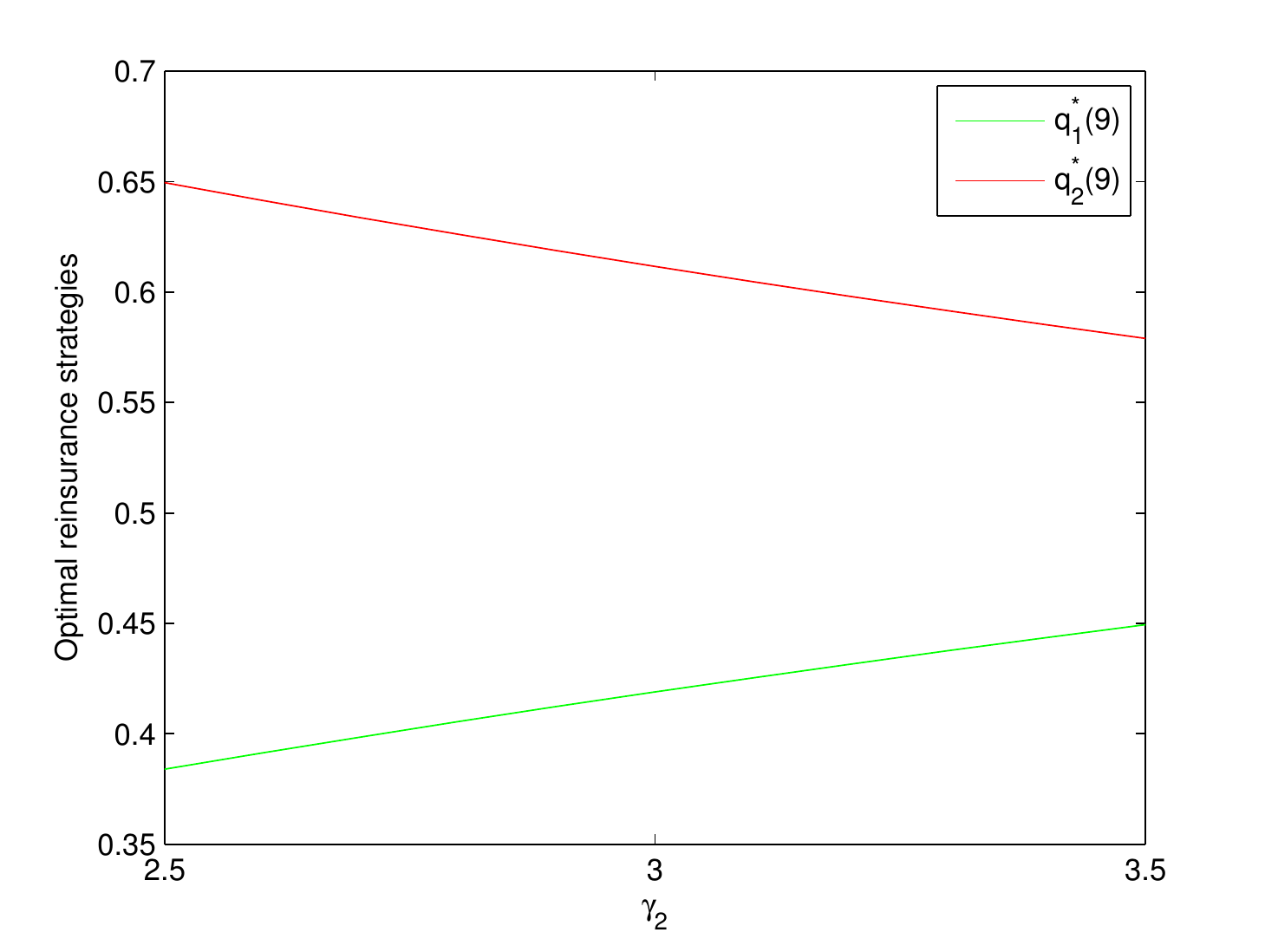}
    \end{minipage}
  }
  \caption{Effects of risk aversion coefficients on optimal strategies.}
  \label{fig:gamma} 
\end{figure}
Three subgraphs in Figure \ref{fig:gamma} illustrate the sensitivity of $p^{\ast}(9)$, $q_1^{\ast}(9)$ and $q_2^{\ast}(9)$ to the $\gamma_L$, $\gamma_1$ and $\gamma_2$, respectively.
Subfigures \ref{fig:pgammaL}, \ref{fig:pgamma1} and \ref{fig:pgamma2} indicate that the reinsurance premium price will increase with the increase of risk aversion coefficients.
This is understandable, since the more risk-averse the reinsurer is, the more inclined the reinsurer is to reduce its risk by raising the premium price; the more risk-averse insurers are, the more demand for reinsurance will increase, leading to an increase in the price of reinsurance premium.
From \ref{fig:q1q2gammaL}, we can find that the reserve proportions of two insurers will increase with the increase of the reinsurer's risk aversion coefficient.
This phenomenon is because that the more risk-averse the reinsurer is, the higher the reinsurance premium price will be, which will reduce the reinsurance demand and lead to an increase in the reserve level of insurers.
\ref{fig:q1q2gamma1} and \ref{fig:q1q2gamma2}
show that the reserve proportion of insurer $i$ ($i\in\{1,2\}$) will decrease with the increase of its risk aversion coefficient $\gamma_i$ and increase with the increase of its competitor's (i.e., insurer $j$'s, $j\neq i\in\{1,2\}$) risk aversion coefficient $\gamma_j$.
Because when the insurer is more risk-averse, it tends to sign a reinsurance contract to transfer part of its random claim risk; when its competitor is more risk-averse, it prefers to increase its claim reserve ratio, but is not willing to spend money to buy reinsurance contracts due to the psychology of comparison.
To sum up, risk aversion factors increase the price of reinsurance premiums;
the reinsurer's risk aversion factor reduces the insurers' reinsurance demand;
insurer's risk aversion factor increases its own reinsurance demand and reduces the reinsurance demand of its competitor.

\begin{figure}
  \centering
  \subfigure[Effects of $\eta_L$ and $h_L$ on $p^{\ast}(9)$]{
    \label{fig:reinsuranceetaLhL} 
     \begin{minipage}{5cm}
      \centering
    \includegraphics[width=5cm]{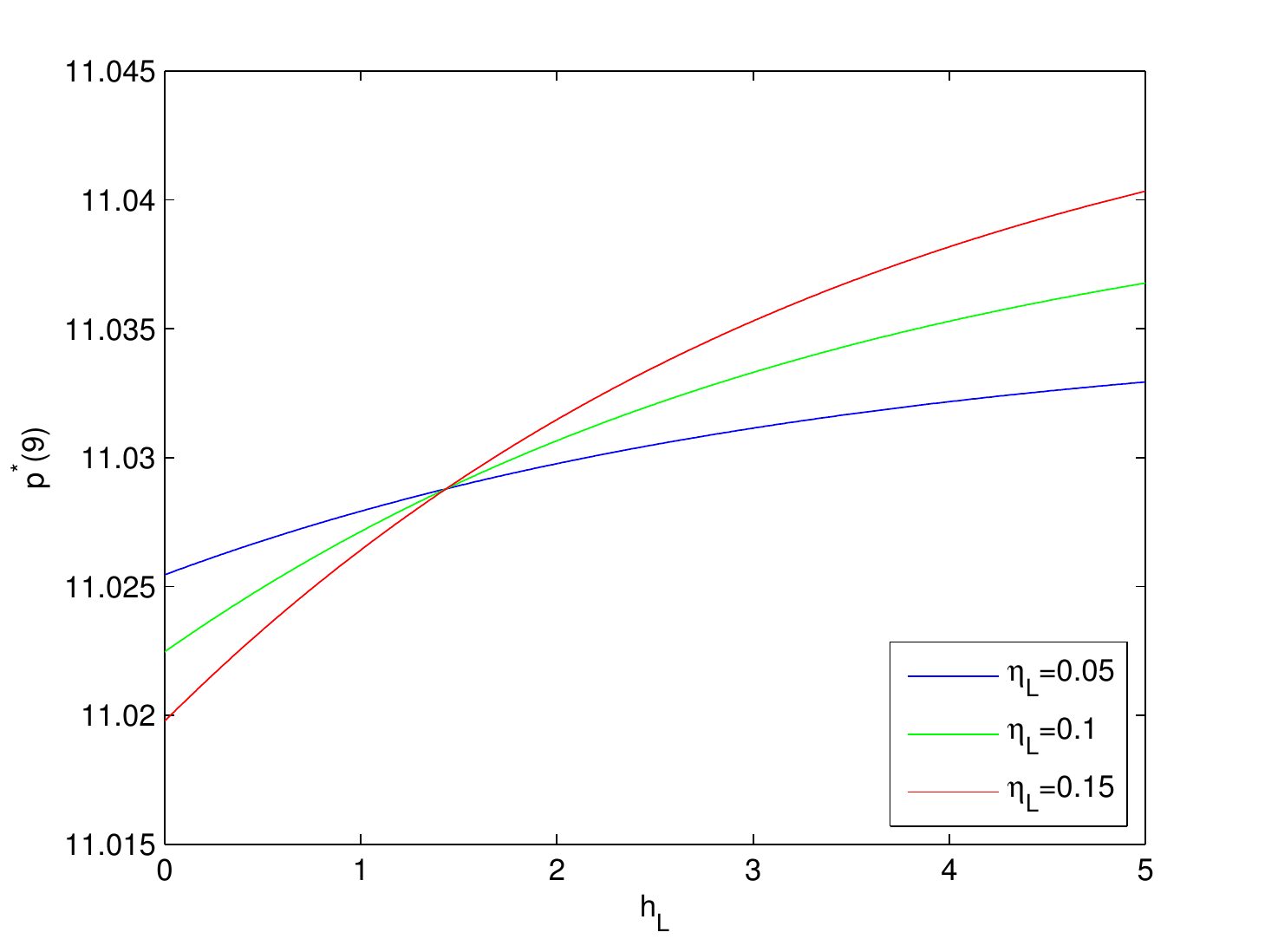}
    \end{minipage}
  }
  \subfigure[Effects of $\eta_1$ and $h_1$ on $q_1^{\ast}(9)$]{
    \label{fig:reinsuranceeta1h1}
     \begin{minipage}{5cm}
      \centering
    \includegraphics[width=5cm]{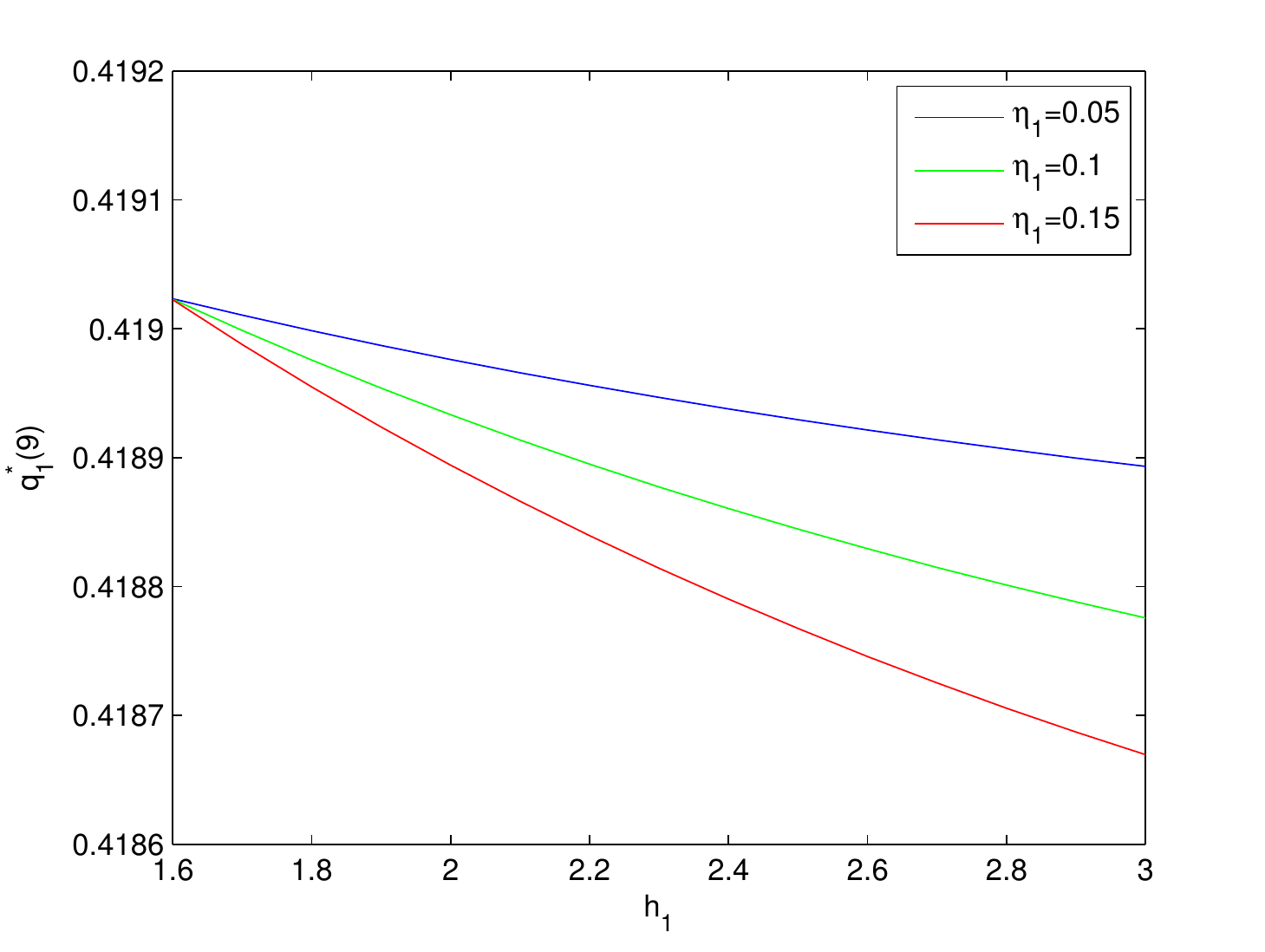}
    \end{minipage}
  }
  \subfigure[Effects of $\eta_2$ and $h_2$ on $q_2^{\ast}(9)$]{
    \label{fig:reinsuranceeta2h2}
     \begin{minipage}{5cm}
      \centering
    \includegraphics[width=5cm]{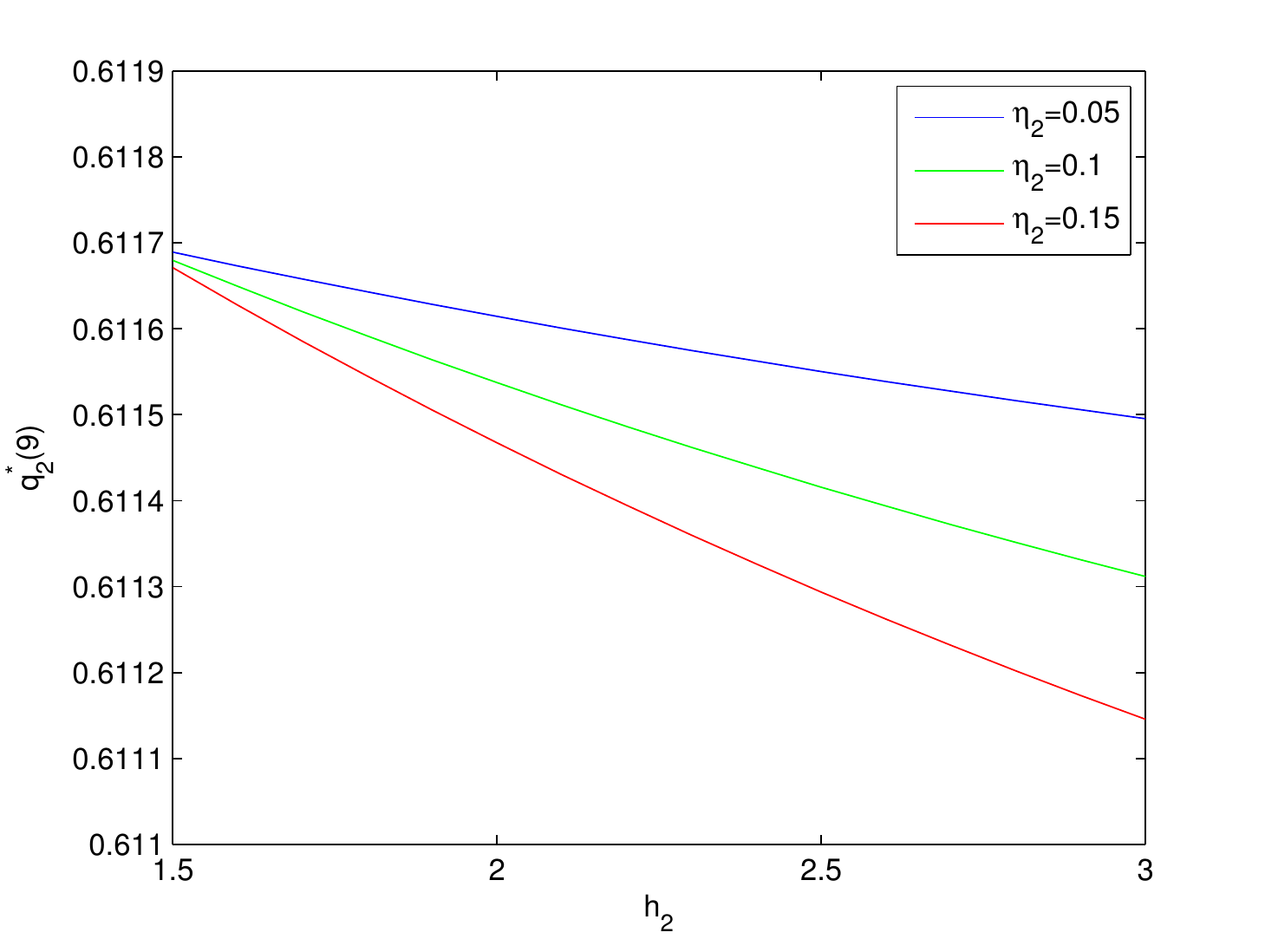}
    \end{minipage}
  }

    \subfigure[Effect of $\alpha_L$ on $p^{\ast}(9)$]{
    \label{fig:reinsurancealphaL} 
     \begin{minipage}{5cm}
      \centering
    \includegraphics[width=5cm]{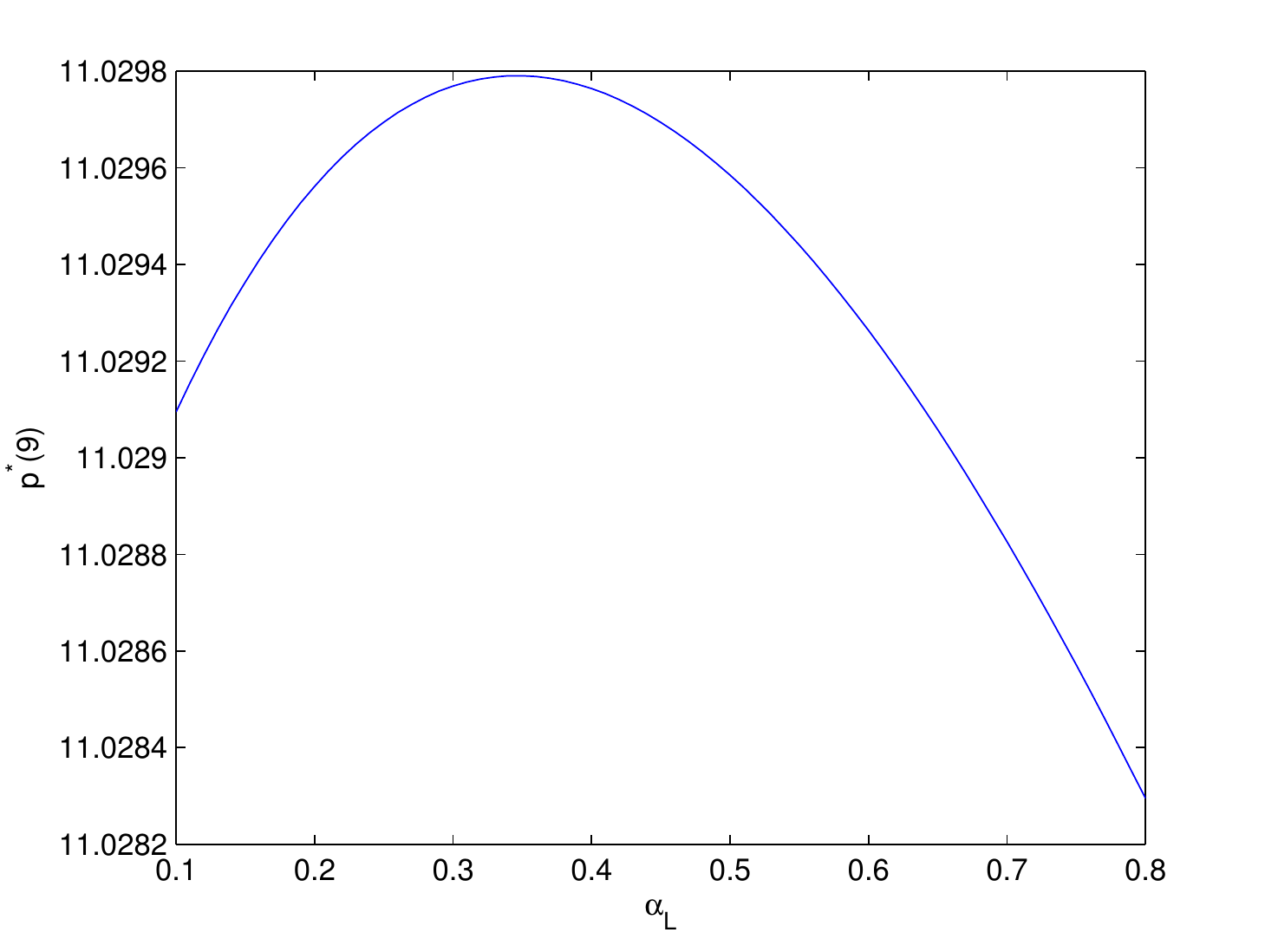}
    \end{minipage}
  }
  \subfigure[Effect of $\alpha_1$ on $q_1^{\ast}(9)$]{
    \label{fig:reinsurancealpha1}
     \begin{minipage}{5cm}
      \centering
    \includegraphics[width=5cm]{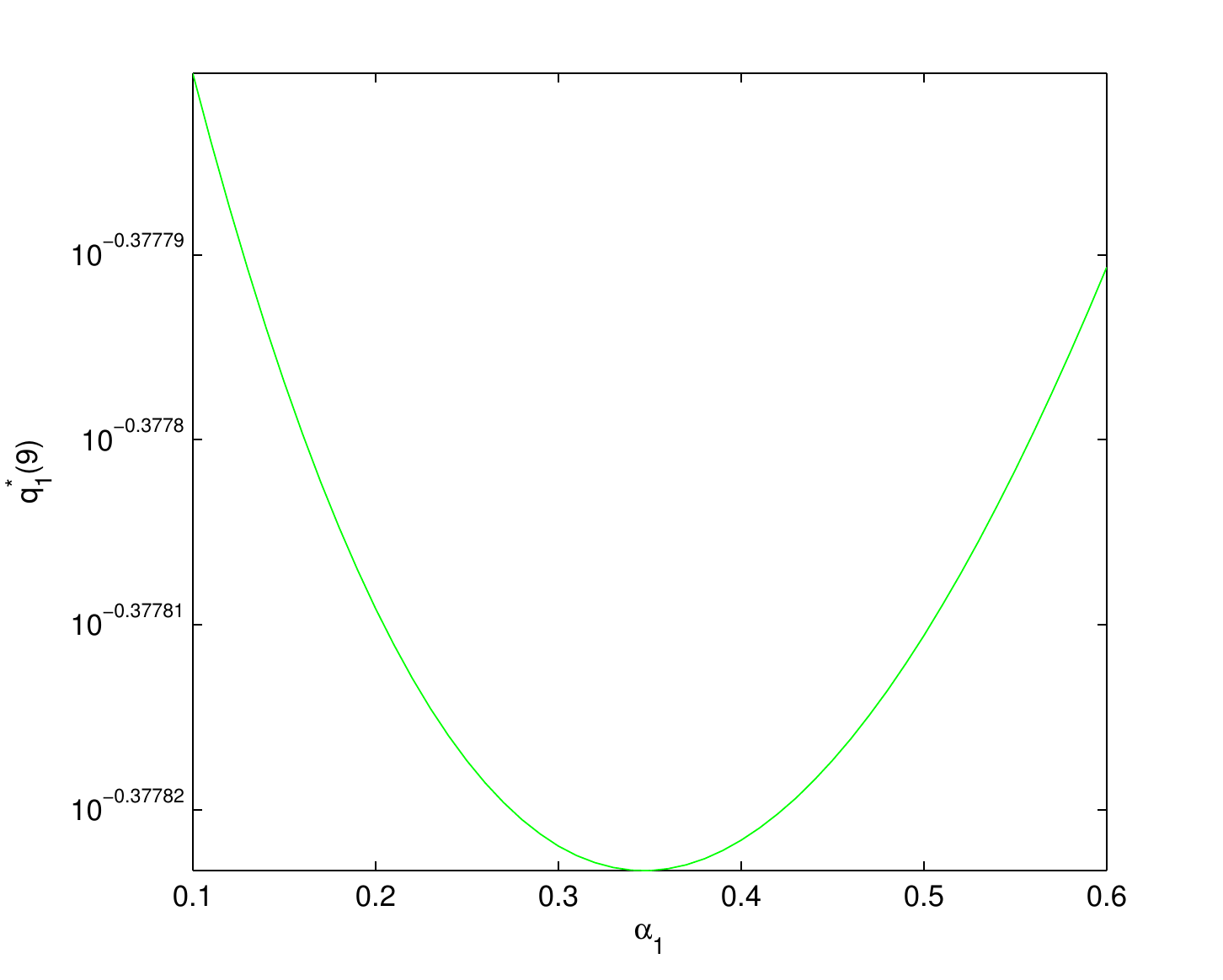}
    \end{minipage}
  }
  \subfigure[Effect of $\alpha_2$ on $q_2^{\ast}(9)$]{
    \label{fig:reinsurancealpha2}
     \begin{minipage}{5cm}
      \centering
    \includegraphics[width=5cm]{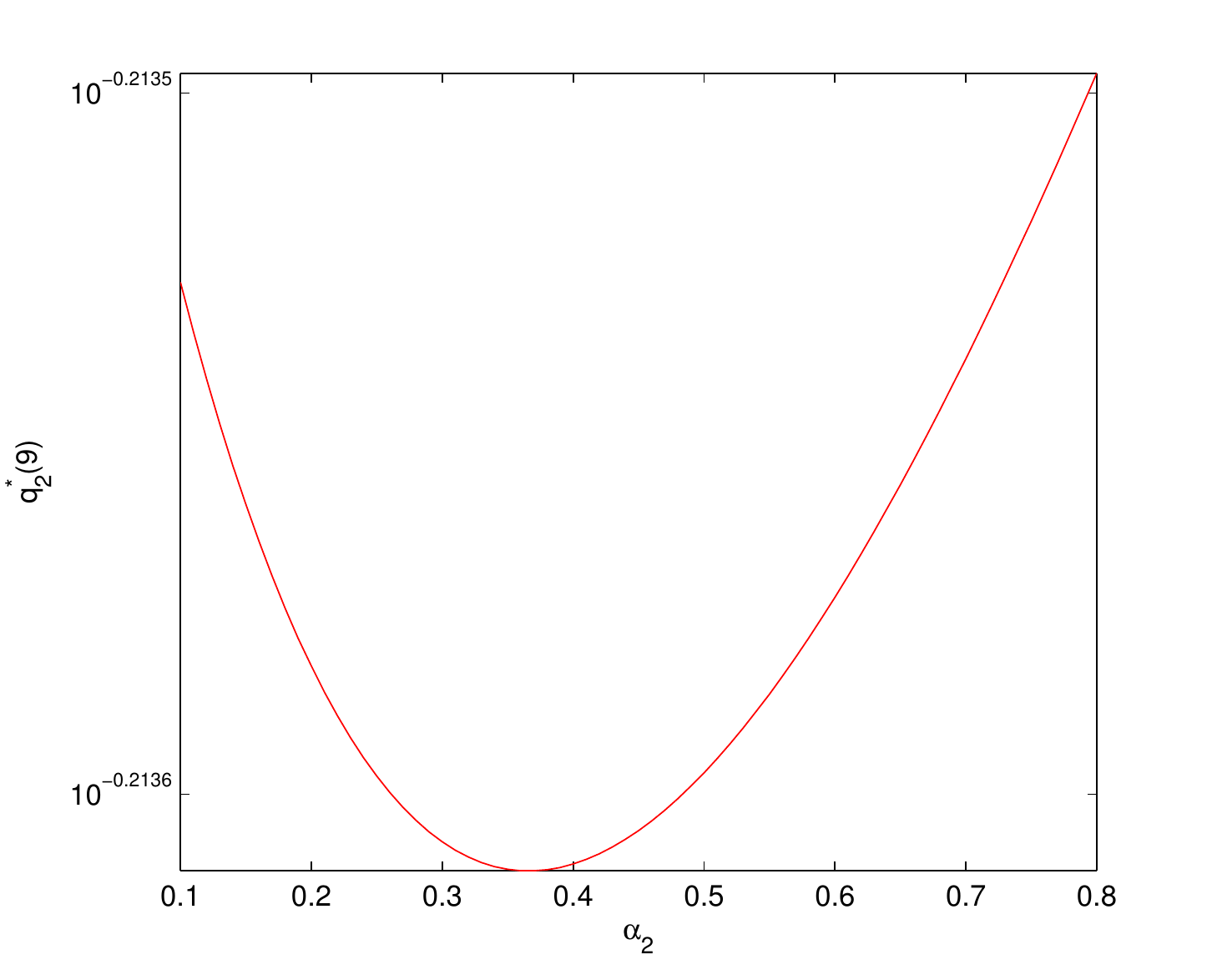}
    \end{minipage}
  }
  \caption{Effects of delay parameters on optimal strategies.}
  \label{fig:reinsurancedelay} 
\end{figure}
Figure \ref{fig:reinsurancedelay} indicates the effects of delay parameters (i.e., $h_L$, $\eta_L$, $\alpha_L$, $h_1$, $\eta_1$, $\alpha_1$, $h_2$, $\eta_2$ and $\alpha_2$) on optimal premium strategy (i.e., $p^{\ast}(9)$) and reinsurance strategies (i.e., $q_1^{\ast}(9)$, $q_2^{\ast}(9)$), respectively.
As can be seen from \ref{fig:reinsuranceetaLhL}, when the delay weight is defined, the premium price increases with the increase of delay time.
Furthermore, when the delay time is less than a certain value, the larger the delay weight is, the lower the premium price is.
When the delay time is greater than this certain value, the larger the delay weight is, the higher the premium price is.
Therefore, for the reinsurer, the effect of the delay weight on the optimal premium strategy is related to the length of the delay time in Case (10) of Theorem \ref{Theorem1}.
From \ref{fig:reinsuranceeta1h1} and \ref{fig:reinsuranceeta2h2},  we can see that for insurers, when the delay time takes a special interval, the longer the delay time, the lower the reserve level; the larger the delay weight, the lower the reserve level.
Subfigures \ref{fig:reinsurancealphaL}, \ref{fig:reinsurancealpha1} and \ref{fig:reinsurancealpha2} show the effects of $\alpha_L$, $\alpha_1$ and $\alpha_2$ on $p^{\ast}(9)$, $q_1^{\ast}(9)$ and $q_2^{\ast}(9)$, respectively.
These three subgraphs illustrate that the impact of the average parameter on the optimal premium strategy (or optimal reinsurance strategy) will change with its size.

\section{Conclusion}\label{section 5}

In this study, we investigate a hybrid stochastic differential reinsurance and investment game, including a stochastic Stackelberg differential subgame and a non-zero-sum stochastic differential subgame.
One reinsurer and two insurers are three players in the hybrid game.
In view of the monopoly position of the reinsurer and the competitive relationship between insurers in the market, we consider the reinsurer and two insurers as the leader and the followers of the Stackelberg game respectively, and model the insurers' competition relationship as the non-zero-sum game.
We investigate the reinsurer's premium pricing and investment optimization problem as well as
insurers' reinsurance and investment optimization problem.
Under the consideration of the performance-related capital inflow/outflow,
the wealth processes of the reinsurer and insurers are described by SDDEs.
We derive the equilibrium strategy and value functions explicitly by using the idea of backward induction and the dynamic programming approach.
Then, we establish a verification theorem for the optimality of the given strategy.
Furthermore, we study several special cases of our model.
Moreover, some numerical examples and sensitivity analysis are presented to demonstrate the effects of the model parameters on the equilibrium strategy.

The main findings are as follows:
(1) the optimal reinsurance-investment strategies of two insurers interact with each other and reflect the herd effect.
(2) the optimal reinsurance strategies of insurers depend on the optimal premium strategy.
(3) competitive factors between two insurers reduce the demand for reinsurance and the price of reinsurance premium.
(4) the delay factor discourages or stimulates investment depending on the length of the delay.
When the delay time is greater than a certain value, the delay factor will make the investment become more conservative.
On the contrary, when the delay time is less than this certain value, the delay factor will stimulate investment.
(5) the effect of the delay weight on the equilibrium strategy is related to the length of the delay.
When the delay time is greater than a certain value, the optimal investment strategies and the optimal reinsurance strategies are negatively correlated with the corresponding delay weight parameters; the optimal premium strategy is positively correlated with the reinsurer's delay weight.
Conversely, when the delay time is less than this value, the opposite case occurs.

Stochastic differential games between reinsurers and insurers are a very common social phenomenon and an important current research issue in the economic and financial fields.
In the future work, this study can be extended in the following directions:
one is introducing multi-asset investment, which is closer to reality;
the other is considering regime switching to better describe the stochastic market.

\bibliographystyle{apalike}
\bibliography{main}

\appendix
\renewcommand{\thesection}{Appendix~\Alph{section}}

\renewcommand{\theequation}{A.\arabic{equation}}

\section{Proof of Theorem \ref{Theorem1}}\label{Appendix A}

\begin{proof}
We solve the Stackelberg game problem by using the idea of backward induction mentioned in Section \ref{section 2.4} and standard dynamic programming techniques.

{\bfseries {Step 1}}
In the stochastic Stackelberg differential game, the reinsurer takes action first by announcing its any admissible strategy $(p(\cdot),b_L(\cdot))\in\Pi_L$.

{\bfseries {Step 2}}
Based on the reinsurer's strategy $(p(\cdot),b_L(\cdot))\in\Pi_L$,
we solve two insurers' optimization problems (i.e., \eqref{equ:218} for $i=1,2$) under the CARA preference simultaneously. $i\neq j\in\{1,2\}$ in this step.

For the value function of insurer $i$, we conjecture that
\begin{align}
V^{F_i}(t,\hat{x}_i,y_i,y_j,s)
=-\frac{1}{\gamma_i}\exp\{-\gamma_i\varphi^{F_i}(t)(\hat{x}_i
+\eta_iy_i-k_i\eta_jy_j)+g^{F_i}_1(t)s^{-2\beta}+g^{F_i}_2(t)\},\label{equ:VFi}
\end{align}
where $\varphi^{F_i}(t)$, $g^{F_i}_1(t)$ and $g^{F_i}_2(t)$ are deterministic, continuously differentiable functions with boundary conditions $\varphi^{F_i}(T)=1$, $g^{F_i}_1(T)=0$ and $g^{F_i}_2(T)=0$.
For insurer $i$, the HJB equation is
\begin{align}\label{equ:FiHJB}
0=&\sup_{(q_i(\cdot),b_i(\cdot))\in\Pi_i}\big\{V^{F_i}_t+V^{F_i}_{\hat{x}_i}
\big[\theta_ia_i-k_i\theta_ja_j-(p(t)-a_i)(1-q_i(t))+k_i(p(t)-a_j)(1-q_j^{\ast}(t))\nonumber\\
&+A_ix_i-k_iA_jx_j+B_iy_i-k_iB_jy_j
+C_iz_i-k_iC_jz_j+(r-r_0)(b_i(t)-k_ib_j^{\ast}(t))\big]
+\frac{1}{2}\big[(q_i(t)\sigma_i)^2\nonumber\\
&+(k_iq_j^{\ast}(t)\sigma_j)^2
-2q_i(t)\sigma_ik_iq_j^{\ast}(t)\sigma_j\rho
+(b_i(t)-k_ib_j^{\ast}(t))^2\sigma^2s^{2\beta}\big]V^{F_i}_{\hat{x}_i\hat{x}_i}
+(x_i-\alpha_iy_i-e^{-\alpha_ih_i}z_i)V^{F_i}_{y_i}\nonumber\\
&
+(x_j-\alpha_jy_j-e^{-\alpha_jh_j}z_j)V^{F_i}_{y_j}+rsV^{F_i}_{s}
+\frac{1}{2}\sigma^2s^{2\beta+2}V^{F_i}_{ss}
+(b_i(t)-k_ib_j^{\ast}(t))\sigma^2s^{2\beta+1} V^{F_i}_{\hat{x}_is}\big\}.
\end{align}
The first-order condition for maximizing the value in \eqref{equ:FiHJB} gives that
\begin{align}
0=&V^{F_i}_{\hat{x}_i}[p(t)-a_i]
+[q_i^{\ast}(t)\sigma_i^2-\sigma_ik_iq_j^{\ast}(t)\sigma_j\rho]V^{F_i}_{\hat{x}_i\hat{x}_i},
\label{equ:dqi}\\
0=&(r-r_0)V^{F_i}_{\hat{x}_i}+[b_i^{\ast}(t)
-k_ib_j^{\ast}(t)]\sigma^2s^{2\beta}V^{F_i}_{\hat{x}_i\hat{x}_i}
+\sigma^2s^{2\beta+1}V^{F_i}_{\hat{x}_is}\label{equ:dbi}.
\end{align}
From \eqref{equ:VFi}, \eqref{equ:dqi} and \eqref{equ:dbi}, we know that reinsurance strategy and investment strategy are independent.
The investment strategy of insurer $i$ is independent of the investment strategy of the reinsurer. Then, $\alpha_i^{\ast}(\cdot,p(\cdot),b_L(\cdot))$ and $\beta_i^{\ast}(\cdot,p(\cdot),b_L(\cdot))$ could be written as $\alpha_i^{\ast}(\cdot,p(\cdot))$ and $\beta_i^{\ast}(\cdot)$, respectively. Due to $q_i(t)\in[0,1]$, we can get that
\begin{align}
q_i^{\ast}(t,p(t))=\alpha_i^{\ast}(t,p(t))
&=\Big[\frac{-(p(t)-a_i)V^{F_i}_{\hat{x}_i}}{\sigma_i^2V^{F_i}_{\hat{x}_i\hat{x}_i}}
+\frac{k_i\sigma_j\rho q_j^{\ast}(t)}{\sigma_i}\Big]\vee 0 \wedge 1
=\Big[\frac{p(t)-a_i}{\gamma_i\sigma_i^2\varphi^{F_i}(t)}+\frac{k_i\rho \sigma_jq_j^{\ast}(t)}{\sigma_i}\Big]\vee 0 \wedge 1,\label{equ:fq}\\
b^{\ast}_i(t)=\beta^{\ast}(t)
&=k_ib_j^{\ast}(t)-\frac{sV^{F_i}_{\hat{x}_is}}{V^{F_i}_{\hat{x}_i\hat{x}_i}}
-\frac{(r-r_0)V^{F_i}_{\hat{x}_i}}{\sigma^2s^{2\beta}V^{F_i}_{\hat{x}_i\hat{x}_i}}
=k_ib_j^{\ast}(t)+\frac{1}{\gamma_i\varphi^{F_i}(t)s^{2\beta}}
\big[\frac{r-r_0}{\sigma^2}-2\beta g_1^{F_i}(t)\big].\label{equ:fb}
\end{align}
Substitute the investment strategy \eqref{equ:fb} into the HJB equation \eqref{equ:FiHJB}. Then, we have
\begin{align}
0=&V^{F_i}\big\{
\gamma_ix_i\big[-\varphi^{F_i}_t-\varphi^{F_i}(t)A_i-\varphi^{F_i}(t)\eta_i\big]
+k_i\gamma_ix_j\big[\varphi^{F_i}_t+\varphi^{F_i}(t)A_j+\varphi^{F_i}(t)\eta_j\big]
+\gamma_iy_i\big[-\varphi^{F_i}_t\eta_i
-\varphi^{F_i}(t)B_i\nonumber\\
&+\varphi^{F_i}(t)\eta_i\alpha_i\big]
+k_i\gamma_iy_j\big[\varphi^{F_i}_t\eta_j+\varphi^{F_i}(t)B_j
-\varphi^{F_i}(t)\eta_j\alpha_j\big]+\gamma_i\varphi^{F_i}(t)z_i\big[-C_i+\eta_ie^{-\alpha_ih_i}\big]
+k_i\gamma_i\varphi^{F_i}(t)z_j\big[C_j\nonumber\\
&-\eta_je^{-\alpha_jh_j}\big]+s^{-2\beta}\big(\frac{dg^{F_i}_1(t)}{dt}-2\beta r_0g^{F_i}_1(t)-\frac{1}{2}\frac{(r-r_0)^2}{\sigma^2}\big)\big\}
+V^{F_i}\big\{\frac{dg^{F_i}_2(t)}{dt}+\beta(2\beta+1)\sigma^2g^{F_i}_1(t)
\nonumber\\
&-\gamma_i\varphi^{F_i}(t)
\big[\theta_ia_i-k_i\theta_ja_j-(p(t)-a_i)(1-q_i^{\ast}(t))+k_i(p(t)-a_j)(1-q_j^{\ast}(t))\big]\nonumber\\
&
+\frac{1}{2}(\gamma_i\varphi^{F_i}(t))^2\big[(q_i^{\ast}(t)\sigma_i)^2+(k_iq_j^{\ast}(t)\sigma_j)^2
-2q_i^{\ast}(t)\sigma_ik_iq_j^{\ast}(t)\sigma_j\rho\big]\big\}.
\end{align}
Obviously, $q_i^{\ast}(t,p(t))$ does not depend on the state variables $x_i$, $x_j$, $y_i$, $y_j$ and $s^{-2\beta}$.
Due to $C_i=\eta_ie^{-\alpha_ih_i}$, $B_ie^{-\alpha_ih_i}=(\alpha_i+A_i+\eta_i)C_i$, $C_j=\eta_je^{-\alpha_jh_j}$, $B_je^{-\alpha_jh_j}=(\alpha_j+A_j+\eta_j)C_j$, $\varphi^{F_i}(T)=1$ and $g^{F_i}_1(T)=0$, we have
\begin{align}
\varphi^{F_i}(t)&=\exp\{(A_i+\eta_i)(T-t)\}=\exp\{(A_j+\eta_j)(T-t)\}=\varphi^{F_j}(t),\label{equ:varphiF}\\
g^{F_i}_1(t)&=g_1(t),\label{equ:g1Fi}
\end{align}
and
\begin{align}\label{equ:FiHJB1}
 0=&V^{F_i}\big\{-\gamma_i\varphi^{F_i}(t)
\big[\theta_ia_i-k_i\theta_ja_j-(p(t)-a_i)(1-q_i^{\ast}(t))+k_i(p(t)-a_j)(1-q_j^{\ast}(t))\big]\nonumber\\
&
+\frac{1}{2}(\gamma_i\varphi^{F_i}(t))^2\big[(q_i^{\ast}(t)\sigma_i)^2+(k_iq_j^{\ast}(t)\sigma_j)^2
-2q_i^{\ast}(t)\sigma_ik_iq_j^{\ast}(t)\sigma_j\rho\big]
+\frac{dg^{F_i}_2(t)}{dt}+\beta(2\beta+1)\sigma^2g^{F_i}_1(t)\big\},
\end{align}
where
\begin{align}
g_1(t)=-\frac{1}{4\beta r_0}(\frac{r-r_0}{\sigma})^2[1-\exp\{-2\beta r_0(T-t)\}].\label{equ:g1t}
\end{align}

Then, we have $\varphi^{F_1}(t)=\varphi^{F_2}(t)$ and $g^{F_1}_1(t)=g^{F_2}_1(t)=g_1(t)$.
Assume that $k_1k_2<1$, by \eqref{equ:fb}, we can get that
\begin{align}
b^{\ast}_i(t)=\frac{s^{-2\beta}}{(1-k_1k_2)\varphi^{F_i}(t)}
(\frac{1}{\gamma_i}+\frac{k_i}{\gamma_j})
\big[\frac{r-r_0}{\sigma^2}-2\beta g_1(t)\big].\label{equ:fb*}
\end{align}
Due to $p(t)\in[c_F,\bar{c}]$ and \eqref{equ:fq}, we know that $\frac{p(t)-a_i}{\gamma_i\sigma_i^2\varphi^{F_i}(t)}+\frac{k_i\rho \sigma_jq_j^{\ast}(t)}{\sigma_i}>0$. Then,
\begin{align}
q_i^{\ast}(t,p(t))=\alpha_i^{\ast}(t,p(t))
&=\Big[\frac{p(t)-a_i}{\gamma_i\sigma_i^2\varphi^{F_i}(t)}+\frac{k_i\rho \sigma_jq_j^{\ast}(t)}{\sigma_i}\Big]\wedge 1.
\end{align}

Assume that $k_1k_2\rho^2<1$. According to \cite{Bensoussan-2014-50} and \cite{Deng-2018-264},
let $\tilde{q}_i(t,p(t))$ be the solution of the following system of equations:
\begin{equation}
\left\{
\begin{array}{rcl}
\tilde{q}_1(t,p(t))=\frac{p(t)-a_1}{\gamma_{1}\sigma_{1}^2\varphi^{F_1}(t)}+\frac{k_1\rho \sigma_{2}\tilde{q}_2(t)}{\sigma_{1}},\\
\tilde{q}_2(t,p(t))=\frac{p(t)-a_2}{\gamma_{2}\sigma_{2}^2\varphi^{F_2}(t)}+\frac{k_2\rho \sigma_{1}\tilde{q}_1(t)}{\sigma_{2}}.
\end{array} \right.
\end{equation}
Denote
\begin{align}\label{equ:g(t)}
g(t)=-\frac{(2\beta+1)(r-r_0)^2}{4r_0}\big[(T-t)+\frac{1}{2\beta r_0}(\exp\{-2\beta r_0(T-t)\}-1)\big].
\end{align}

Then, we will discuss the following situations:
\begin{itemize}
\item {\bfseries {Case (Fa)}}
If $\tilde{q}_i(t,p(t))\geq 1$, $i\in\{1,2\}$, we have $q_i^{\ast}(t,p(t))=1$.
Substituting $q_i^{\ast}(t,p(t))$ into \eqref{equ:FiHJB1}
and integrating from $T$ to $t$ gives
\begin{align}\label{equ:g2Fia}
g^{F_i}_2(t)=g^{F_ia}_2(t)
\doteq &
g(t)-\frac{\gamma_i(\theta_ia_i-k_i\theta_ja_j)}{A_i+\eta_i}[\varphi^{F_i}(t)-1]
+\frac{\gamma_i^2\big[\sigma_i^2+k_i^2\sigma_j^2
-2\sigma_ik_i\sigma_j\rho\big]}{4(A_i+\eta_i)}[(\varphi^{F_i}(t))^2-1].
\end{align}

\item {\bfseries {Case (Fb)}}
If $\tilde{q}_i(t,p(t))\geq 1$ and $\tilde{q}_j(t,p(t))<1$, $i\neq j\in\{1,2\}$, we have $q_i^{\ast}(t,p(t))=1$ and $q_j^{\ast}(t,p(t))=\frac{p(t)-a_j}{\gamma_j\sigma_j^2\varphi^{F_j}(t)}+\frac{k_j\rho \sigma_i}{\sigma_j}$.
Substituting $q_i^{\ast}(t,p(t))$ and $q_j^{\ast}(t,p(t))$ into \eqref{equ:FiHJB1}
and integrating from $T$ to $t$ gives
\begin{align}\label{equ:g2Fib}
&g^{F_i}_2(t)=
g(t)-\frac{\gamma_i(\theta_ia_i-k_i\theta_ja_j)}{A_i+\eta_i}[\varphi^{F_i}(t)-1]
+\frac{\gamma_i^2\sigma_i^2[1+(k_1k_2\rho)^2
-2k_1k_2\rho^2]}{4(A_i+\eta_i)}[(\varphi^{F_i}(t))^2-1]\nonumber\\
&
-\frac{k_i\gamma_i}{\gamma_j\sigma_j^2}
[1+\frac{k_i\gamma_i}{2\gamma_j}]\int_T^t(p(s)-a_j)^2ds
-k_i\gamma_i[-1+\frac{k_j\rho \sigma_i}{\sigma_j}
+\frac{k_1k_2\rho\gamma_i\sigma_i}{\gamma_j\sigma_j}
-\frac{\rho\sigma_i\gamma_i}{\gamma_j\sigma_j}]\int_T^t(p(s)-a_j)\varphi^{F_i}(s)ds,
\end{align}
and
\begin{align}\label{equ:g2Fjb}
g^{F_j}_2(t)=&
g(t)-\frac{\gamma_j
(\theta_ja_j-k_j\theta_ia_i)}{A_j+\eta_j}[\varphi^{F_j}(t)-1]
+\frac{(k_j\gamma_j\sigma_i)^2(1-\rho^2)}{4(A_j+\eta_j)}[(\varphi^{F_j}(t))^2-1]
+\frac{1}{2(\sigma_j)^2}\int_T^t(p(s)-a_j)^2ds\nonumber\\
&
-\gamma_j(1-\frac{k_j\rho\sigma_i}{\sigma_j})\int_T^t(p(s)-a_j)\varphi^{F_j}(s)ds.
\end{align}

\item {\bfseries {Case (Fc)}}
If $\tilde{q}_i(t,p(t))<1$, $i\in\{1,2\}$, we have $q_i^{\ast}(t,p(t))=\tilde{q}_i(t,p(t))$.
Substituting $q_i^{\ast}(t,p(t))$ into \eqref{equ:FiHJB1} and integrating from $T$ to $t$ gives
\begin{align}\label{equ:g2Fic}
g^{F_i}_2(t)&=
g(t)-\frac{\gamma_i
(\theta_ia_i-k_i\theta_ja_j)}{A_i+\eta_i}[\varphi^{F_i}(t)-1]
-\gamma_i\int_T^t\varphi^{F_i}(s)(p(s)-a_i)ds
+k_i\gamma_i\int_T^t\varphi^{F_i}(s)(p(s)-a_j)ds\nonumber\\
&
+\frac{1-(k_1k_2\rho)^2}{2(1-k_1k_2\rho^2)^2(\sigma_i)^2}\int_T^t(p(s)-a_i)^2ds
-\frac{k_i\gamma_i[2\gamma_j(1-k_1k_2\rho^2)+k_i\gamma_i(1-\rho^2)]}
{2(1-k_1k_2\rho^2)^2(\gamma_j\sigma_j)^2}
\int_T^t(p(s)-a_j)^2ds\nonumber\\
&-\frac{k_i\rho[-\gamma_i(1-k_1k_2)+k_j\gamma_j(1-k_1k_2\rho^2)]}
{(1-k_1k_2\rho^2)^2\sigma_1\sigma_2\gamma_j}\int_T^t(p(s)-a_i)(p(s)-a_j)ds.
\end{align}

\end{itemize}

{\bfseries {Step 3}}
Knowing that insurer $i$ would execute its reinsurance strategy and investment strategy according to \eqref{equ:fq} and \eqref{equ:fb*}, the reinsurer then decides on its optimal strategy $(p^{\ast}(\cdot),b^{\ast}_L(\cdot))\in\Pi_L$.

For the value function of the reinsurer, we conjecture that
\begin{align}
V^L(t,x_L,y_L,s)
=-\frac{1}{\gamma_L}\exp\{-\gamma_L\varphi^L(t)(x_L+\eta_Ly_L)+g^L_1(t)s^{-2\beta}+g^L_2(t)\},\label{equ:VL}
\end{align}
where $\varphi^L(t)$, $g^L_1(t)$ and $g^L_2(t)$ are deterministic, continuously differentiable functions with boundary conditions $\varphi^L(T)=1$, $g^L_1(T)=0$ and $g^L_2(T)=0$.
The HJB equation of the reinsurer is
\begin{align}\label{equ:LHJB}
0=&\sup_{(p(\cdot),b_L(\cdot))\in\Pi_L}\big\{V^L_t
+V^L_{x_L}\big[(p(t)-a_1)(1-q_1^{\ast}(t))+(p(t)-a_2)(1-q_2^{\ast}(t))+(r-r_0)b_L(t)
+A_Lx_L+B_Ly_L+C_Lz_L\big]\nonumber\\
&
+\frac{1}{2}\big[(1-q_1^{\ast}(t))^2(\sigma_1)^2+(1-q_2^{\ast}(t))^2(\sigma_2)^2
+(b_L(t))^2\sigma^2s^{2\beta}
+2(1-q_1^{\ast}(t))(1-q_2^{\ast}(t))\sigma_1\sigma_2\rho\big]V^L_{x_Lx_L}
\nonumber\\
&
+(x_L-\alpha_Ly_L-e^{-\alpha_Lh_L}z_L)V^L_{y_L}
+rsV^L_{s}+\frac{1}{2}\sigma^2s^{2\beta+2}V^L_{ss}
+b_L(t)\sigma^2s^{2\beta+1} V^L_{x_Ls}
\big\}.
\end{align}
The first-order condition about $b_L(t)$ for maximizing the value in \eqref{equ:LHJB} gives that
\begin{align}\label{equ:lb}
b^{\ast}_L(t)=-\frac{(r-r_0)}{\sigma^2s^{2\beta}}\frac{V^L_{x_L}}{V^L_{x_Lx_L}}-s\frac{V^L_{x_Ls}}{V^L_{x_Lx_L}}
=\frac{s^{-2\beta}}{\gamma_L\varphi^L(t)}\left[\frac{(r-r_0)}{\sigma^2}-2\beta g_1^L(t)\right].
\end{align}
Substitute the investment strategy \eqref{equ:lb} into the HJB equation \eqref{equ:LHJB}. Then, we have
\begin{align}\label{equ:LHJB1}
0
=&\sup_{p(\cdot)\in[c_F,\bar{c}]}
V^L\big\{\gamma_Lx_L[-\varphi^L_t-\varphi^L(t)A_L-\varphi^L(t)\eta_L]
+\gamma_Ly_L[-\varphi^L_t\eta_L-\varphi^L(t)B_L+\varphi^L(t)\eta_L\alpha_L]\nonumber\\
&-\gamma_L\varphi^L(t)[C_Lz_L-\eta_Le^{-\alpha_Lh_L}z_L]
+s^{-2\beta}\big[\frac{dg^L_1(t)}{dt}-2\beta r_0g^L_1(t)-\frac{1}{2}(\frac{r-r_0}{\sigma})^2\big]+\frac{dg^L_2(t)}{dt}
\nonumber\\
&+\beta(2\beta+1)\sigma^2g^L_1(t)
-\gamma_L\varphi^L(t)[(p(t)-a_1)(1-q_1^{\ast}(t))+(p(t)-a_2)(1-q_2^{\ast}(t))]\nonumber\\
&+\frac{1}{2}\gamma_L^2(\varphi^L(t))^2
[(1-q_1^{\ast}(t))^2\sigma_1^2+(1-q_2^{\ast}(t))^2\sigma_2^2
+2(1-q_1^{\ast}(t))(1-q_2^{\ast}(t))\sigma_1\sigma_2\rho]\big\}.
\end{align}
Obviously, $p(t)$ does not depend on the state variables $x_L$, $y_L$ and $s^{-2\beta}$. Then, we have
\begin{align}
\varphi^L(t)=&\exp\{(A_L+\eta_L)(T-t)\},\label{equ:varphiL}\\
g^L_1(t)=&g_1(t),\label{equ:g1L}\\
0=&\sup_{p(\cdot)\in[c_F,\bar{c}]}\big\{\frac{dg^L_2(t)}{dt}+\beta(2\beta+1)\sigma^2g^L_1(t)
-\gamma_L\varphi^L(t)[(p(t)-a_1)(1-q_1^{\ast}(t))+(p(t)-a_2)(1-q_2^{\ast}(t))]\nonumber\\
&+\frac{1}{2}\gamma_L^2(\varphi^L(t))^2
[(1-q_1^{\ast}(t))^2\sigma_1^2+(1-q_2^{\ast}(t))^2\sigma_2^2
+2(1-q_1^{\ast}(t))(1-q_2^{\ast}(t))\sigma_1\sigma_2\rho]\big\}.\label{equ:LHJB11}
\end{align}

To simplify our presentation, for $i\neq j\in\{1,2\}$, we denote
\begin{align}\label{equ:KKK}
&K^{\tilde{F}_i}=(\frac{\gamma_j\sigma_j^2}{\gamma_i\sigma_i^2}
+\frac{k_i\rho\sigma_j}{\sigma_i})
(1-\frac{k_j\rho\sigma_i}{\sigma_j}),  \quad K^{F_i}=(1+\frac{k_i\rho\gamma_i\sigma_i}{\gamma_j\sigma_j})
(1-\frac{k_i\rho\sigma_j}{\sigma_i}),\quad
N^{cF_i}(t)=\frac{c_F-a_i}{\gamma_i\sigma_i^2\varphi^{F_i}(t)},\nonumber\\
&
N^{\bar{c}F_i}(t)=\frac{\bar{c}-a_i}{\gamma_i\sigma_i^2\varphi^{F_i}(t)},\quad
N^{aF_i}(t)=\frac{a_j-a_i}{\gamma_i\sigma_i^2\varphi^{F_i}(t)}, \quad
M^{F_i}(t)=\frac{\gamma_i\varphi^{F_i}(t)+\gamma_L\varphi^L(t)}
{2\gamma_i\varphi^{F_i}(t)+\gamma_L\varphi^L(t)}, \quad
 K=\frac{1}{1-k_1k_2\rho^2}.
\end{align}

For the premium strategy $p(t)$, we discuss it in the following situations:
\begin{itemize}
\item {\bfseries {Case (La)}} If $\tilde{q}_i(t)\geq 1$, $i=1,2$, we have $q_1^{\ast}(t,p(t))=1$ and $q_2^{\ast}(t,p(t))=1$.
Substituting $q_1^{\ast}(t,p(t))$ and $q_2^{\ast}(t,p(t))$ into \eqref{equ:LHJB11}, we can get that
\begin{align}\label{equ:dg2La}
0=&\sup_{p(\cdot)\in[c_F,\bar{c}]}\big\{\frac{dg^L_2(t)}{dt}+\beta(2\beta+1)\sigma^2g^L_1(t)
\big\}.
\end{align}
Then,
\begin{align}
p^{\ast}(t)=p, \quad  q_1^{\ast}(t,p^{\ast}(t))=1, \quad q_2^{\ast}(t,p^{\ast}(t))=1,
\end{align}
where $p$ is an arbitrary value in the interval $[c_F,\bar{c}]$.
The precondition $\tilde{q}_i(t)\geq 1$ ($i=1,2$) becomes
$N^{cF_i}(t)+\frac{k_i\rho\sigma_j}{\sigma_i}\geq 1$, $i\neq j\in\{1,2\}$.
By equation \eqref{equ:dg2La} and $g_2^L(T)=0$, we can get that
\begin{align}
g^L_2(t)=g^{La}_2(t)\doteq g(t).\label{equ:gLa}
\end{align}

\item {\bfseries {Case (Lb)}}
If $\tilde{q}_i(t)\geq 1$, $\tilde{q}_j(t)< 1$, $i\neq j\in\{1,2\}$, we have $q_i^{\ast}(t,p(t))=1$ and $q_j^{\ast}(t,p(t))=\frac{p(t)-a_j}{\gamma_j\sigma_j^2\varphi^{F_j}(t)}+\frac{k_j\rho \sigma_i}{\sigma_j}$.
Substituting $q_i^{\ast}(t,p(t))$ and $q_j^{\ast}(t,p(t))$ into \eqref{equ:LHJB11} and simplifying gives
\begin{align}\label{equ:LHJBb}
0
=&\sup_{p(\cdot)\in[c_F,\bar{c}]}\big\{\frac{dg^L_2(t)}{dt}
+\beta(2\beta+1)\sigma^2g^L_1(t)-(p(t)-a_j)\gamma_L\varphi^L(t)(1-\frac{k_j\rho\sigma_i}{\sigma_j})
[1+\frac{\gamma_L\varphi^L(t)}{\gamma_j\varphi^{F_j}(t)}]\nonumber\\
&
+\frac{1}{2}(\gamma_L\varphi^L(t))^2(\sigma_j-k_j\rho\sigma_i)^2
+(p(t)-a_j)^2\frac{\gamma_L\varphi^L(t)}{\gamma_j\sigma_j^2\varphi^{F_j}(t)}
[1+\frac{1}{2}\frac{\gamma_L\varphi^L(t)}{\gamma_j\varphi^{F_j}(t)}]\big\}.
\end{align}

The first-order condition about $p(t)$ for maximizing the value in equation \eqref{equ:LHJBb} gives
\begin{align}
p^{\ast}(t)=[a_j+\gamma_j\sigma_j^2\varphi^{F_j}(t)(1-\frac{k_j\rho\sigma_i}{\sigma_j})
M^{F_j}(t)]\vee c_F \wedge \bar{c}.
\end{align}

\begin{itemize}
\item {\bfseries {Subcase (Lb1)}} If $a_j+\gamma_j\sigma_j^2\varphi^{F_j}(t)(1-\frac{k_j\rho\sigma_i}{\sigma_j})
M^{F_j}(t)\geq \bar{c}$,
we have
\begin{align}
p^{\ast}(t)=\bar{c},\quad
q_i^{\ast}(t,p^{\ast}(t))=1,\quad
q_j^{\ast}(t,p^{\ast}(t))=N^{\bar{c}F_j}(t)+\frac{k_j\rho \sigma_i}{\sigma_j}.
\end{align}
The preconditions become
$K[N^{\bar{c}F_i}(t)+\frac{k_i\rho\sigma_j}{\sigma_i}N^{\bar{c}F_j}(t)]\geq 1$ and
$N^{\bar{c}F_j}(t)+\frac{k_j\rho\sigma_i}{\sigma_j}< 1$.
Then, substituting $p^{\ast}(t)$ into \eqref{equ:LHJBb} and integrating from $T$ to $t$ gives
\begin{align}\label{equ:gLjb1}
g^L_2(t)=g^{Ljb1}_2(t)
\doteq &g(t)+\frac{\gamma_L^2(\sigma_j-k_j\rho\sigma_i)^2}{4(A_L+\eta_L)}
[(\varphi^L(t))^2-1]+(\bar{c}-a_j)\gamma_L(1-\frac{k_j\rho\sigma_i}{\sigma_j})
\big[\int_T^t\varphi^L(s)ds\nonumber\\
&
+\frac{\gamma_L}{\gamma_j}\int_T^t\frac{(\varphi^L(s))^2}{\varphi^{F_j}(s)}ds\big]
-(\bar{c}-a_j)^2\frac{\gamma_L}{\gamma_j\sigma_j^2}
\big[\int_T^t\frac{\varphi^L(s)}{\varphi^{F_j}(s)}ds
+\frac{\gamma_L}{2\gamma_j}\int_T^t(\frac{\varphi^L(s)}{\varphi^{F_j}(s)})^2ds\big].
\end{align}
Equations \eqref{equ:g2Fib} and \eqref{equ:g2Fjb} become
\begin{align}\label{equ:g2Fib1}
g^{F_i}_2(t)=g^{F_ib1}_2(t)
\doteq &
g(t)+\frac{\gamma_i(\varphi^{F_i}(t)-1)}{A_i+\eta_i}
\big[k_i(\bar{c}-a_j)(-1+\frac{k_j\rho \sigma_i}{\sigma_j}
+\frac{k_1k_2\rho\gamma_i\sigma_i}{\gamma_j\sigma_j}
-\frac{\rho\sigma_i\gamma_i}{\gamma_j\sigma_j})-(\theta_ia_i-k_i\theta_ja_j)\big]\nonumber\\
&
+\frac{\gamma_i^2\sigma_i^2[1+(k_1k_2\rho)^2
-2k_1k_2\rho^2]}{4(A_i+\eta_i)}[(\varphi^{F_i}(t))^2-1]
+\frac{k_i\gamma_i}{\gamma_j\sigma_j^2}
[1+\frac{k_i\gamma_i}{2\gamma_j}](\bar{c}-a_j)^2(T-t),
\end{align}
and
\begin{align}\label{equ:g2Fjb1}
g^{F_j}_2(t)=g^{\tilde{F}_jb1}_2(t)
\doteq &
g(t)+\frac{\gamma_j(\varphi^{F_j}(t)-1)}{A_j+\eta_j}
\big[(1-\frac{k_j\rho\sigma_i}{\sigma_j})(\bar{c}-a_j)-(\theta_ja_j-k_j\theta_ia_i)\big]\nonumber\\
&
+\frac{(k_j\gamma_j\sigma_i)^2(1-\rho^2)}{4(A_j+\eta_j)}[(\varphi^{F_j}(t))^2-1]
-\frac{1}{2(\sigma_j)^2}(\bar{c}-a_j)^2(T-t).
\end{align}

\item {\bfseries {Subcase (Lb2)}}
If $a_j+\gamma_j\sigma_j^2\varphi^{F_j}(t)(1-\frac{k_j\rho\sigma_i}{\sigma_j})
M^{F_j}(t)\leq c_F$,
we have
\begin{align}
p^{\ast}(t)=c_F,\quad
q_i^{\ast}(t,p^{\ast}(t))=1,\quad
q_j^{\ast}(t,p^{\ast}(t))=N^{cF_j}(t)+\frac{k_j\rho \sigma_i}{\sigma_j}.
\end{align}

The preconditions become
$K[N^{cF_i}(t)+\frac{k_i\rho\sigma_j}{\sigma_i}N^{cF_j}(t)]\geq 1$ and
$N^{cF_j}(t)+\frac{k_j\rho\sigma_i}{\sigma_j}<1$.
Then, substituting $p^{\ast}(t)$ into \eqref{equ:LHJBb} and integrating from $T$ to $t$ gives
\begin{align}\label{equ:gLjb2}
g^L_2(t)=g^{Ljb2}_2(t)
\doteq&g(t)
+\frac{\gamma_L^2(\sigma_j-k_j\rho\sigma_i)^2}{4(A_L+\eta_L)}[(\varphi^L(t))^2-1]
+(c_F-a_j)\gamma_L(1-\frac{k_j\rho\sigma_i}{\sigma_j})
\big[\int_T^t\varphi^L(s)ds\nonumber\\
&
+\frac{\gamma_L}{\gamma_j}\int_T^t\frac{(\varphi^L(s))^2}{\varphi^{F_j}(s)}ds\big]
-(c_F-a_j)^2\frac{\gamma_L}{\gamma_j\sigma_j^2}
\big[\int_T^t\frac{\varphi^L(s)}{\varphi^{F_j}(s)}ds
+\frac{\gamma_L}{2\gamma_j}\int_T^t(\frac{\varphi^L(s)}{\varphi^{F_j}(s)})^2ds\big].
\end{align}
Equations \eqref{equ:g2Fib} and \eqref{equ:g2Fjb} become
\begin{align}\label{equ:g2Fib2}
g^{F_i}_2(t)=g^{F_ib2}_2(t)
\doteq &
g(t)+\frac{\gamma_i(\varphi^{F_i}(t)-1)}{A_i+\eta_i}
\big[k_i(c_F-a_j)(-1+\frac{k_j\rho \sigma_i}{\sigma_j}
+\frac{k_1k_2\rho\gamma_i\sigma_i}{\gamma_j\sigma_j}
-\frac{\rho\sigma_i\gamma_i}{\gamma_j\sigma_j})-(\theta_ia_i-k_i\theta_ja_j)\big]\nonumber\\
&
+\frac{\gamma_i^2\sigma_i^2[1+(k_1k_2\rho)^2
-2k_1k_2\rho^2]}{4(A_i+\eta_i)}[(\varphi^{F_i}(t))^2-1]
+\frac{k_i\gamma_i}{\gamma_j\sigma_j^2}
[1+\frac{k_i\gamma_i}{2\gamma_j}](c_F-a_j)^2(T-t),
\end{align}
and
\begin{align}\label{equ:g2Fjb2}
g^{F_j}_2(t)=g^{\tilde{F}_jb2}_2(t)
\doteq &
g(t)+\frac{\gamma_j(\varphi^{F_j}(t)-1)}{A_j+\eta_j}
\big[(1-\frac{k_j\rho\sigma_i}{\sigma_j})(c_F-a_j)-(\theta_ja_j-k_j\theta_ia_i)\big]\nonumber\\
&+\frac{(k_j\gamma_j\sigma_i)^2(1-\rho^2)}{4(A_j+\eta_j)}[(\varphi^{F_j}(t))^2-1]
-\frac{1}{2(\sigma_j)^2}(c_F-a_j)^2(T-t).
\end{align}

\item {\bfseries {Subcase (Lb3)}} If $c_F<a_j+\gamma_j\sigma_j^2\varphi^{F_j}(t)(1-\frac{k_j\rho\sigma_i}{\sigma_j})
M^{F_j}(t)<\bar{c}$, we have
\begin{align}
&p^{\ast}(t)=a_j+\gamma_j\sigma_j^2\varphi^{F_j}(t)(1-\frac{k_j\rho\sigma_i}
{\sigma_j})M^{F_j}(t),\nonumber\\
&q_i^{\ast}(t,p^{\ast}(t))=1,\quad
q_j^{\ast}(t,p^{\ast}(t))=(1-\frac{k_j\rho\sigma_i}{\sigma_j})M^{F_j}(t)
+\frac{k_j\rho \sigma_i}{\sigma_j}.
\end{align}
The preconditions become
$K[N^{aF_i}(t)+K^{\tilde{F}_i}M^{F_j}(t)] \geq 1$ and
$(1-\frac{k_j\rho\sigma_i}{\sigma_j})M^{F_j}(t)
+\frac{k_j\rho \sigma_i}{\sigma_j}<1$.
Substituting $p^{\ast}(t)$ into \eqref{equ:LHJBb} and integrating from $T$ to $t$ gives
\begin{align}\label{equ:gLjb3}
g^L_2(t)=g^{Ljb3}_2(t)
\doteq  & g(t)+\frac{\gamma_L^2(\sigma_j-k_j\rho\sigma_i)^2}{4(A_L+\eta_L)}[(\varphi^L(t))^2-1]\nonumber\\
&+\frac{1}{2}\gamma_L(\sigma_j-k_j\rho\sigma_i)^2
\int_T^t\varphi^L(s)(\gamma_j\varphi^{F_j}(s)+\gamma_L\varphi^L(s))M^{F_j}(s)ds.
\end{align}
Equations \eqref{equ:g2Fib} and \eqref{equ:g2Fjb} become
\begin{align}\label{equ:g2Fib3}
g^{F_i}_2(t)=g^{F_ib3}_2(t)
&\doteq
g(t)-\frac{\gamma_i(\theta_ia_i-k_i\theta_ja_j)}{A_i+\eta_i}[\varphi^{F_i}(t)-1]
+\frac{\gamma_i^2\sigma_i^2[1+(k_1k_2\rho)^2
-2k_1k_2\rho^2]}{4(A_i+\eta_i)}[(\varphi^{F_i}(t))^2-1]
\nonumber\\
&
-k_i\gamma_i(\sigma_j-k_j\rho\sigma_i)[-\gamma_j\sigma_j+k_j\rho \gamma_j\sigma_i
+k_1k_2\rho\gamma_i\sigma_i
-\rho\sigma_i\gamma_i]\int_T^t\varphi^{F_i}(s)\varphi^{F_j}(s)M^{F_j}(s)ds\nonumber\\
&
-\frac{1}{2}k_i\gamma_i(2\gamma_j+k_i\gamma_i)(\sigma_j-k_j\rho\sigma_i)^2
\int_T^t(\varphi^{F_j}(s)M^{F_j}(s))^2ds,
\end{align}
and
\begin{align}\label{equ:g2Fjb3}
g^{F_j}_2(t)=g^{\tilde{F}_jb3}_2(t)
\doteq &
g(t)-\frac{\gamma_j(\theta_ja_j-k_j\theta_ia_i)}{A_j+\eta_j}[\varphi^{F_j}(t)-1]+\frac{(k_j\gamma_j\sigma_i)^2(1-\rho^2)}{4(A_j+\eta_j)}[(\varphi^{F_j}(t))^2-1]
\nonumber\\
&-[\gamma_j(\sigma_j-k_j\rho\sigma_i)]^2
\int_T^t(\varphi^{F_j}(s))^2M^{F_j}(s)\frac{[3\gamma_j\varphi^{F_j}(s)+\gamma_L\varphi^L(s)]}
{2[2\gamma_j\varphi^{F_j}(s)+\gamma_L\varphi^L(s)]}ds.
\end{align}
\end{itemize}

\item {\bfseries {Case (Lc)}} If $\tilde{q}_i(t,p(t))<1$, $i=1,2$, we have $q_i^{\ast}(t,p(t))=\tilde{q}_i(t,p(t))$.
Substituting $q_i^{\ast}(t,p(t))$ into \eqref{equ:LHJB11} and simplifying gives
\begin{align}\label{equ:LHJBc}
0=&\sup_{p(\cdot)\in[c_F,\bar{c}]}\big\{\frac{dg^L_2(t)}{dt}+\beta(2\beta+1)\sigma^2g^L_1(t)
+\frac{1}{2}(\gamma_L\varphi^{L}(t))^2[(\sigma_1)^2+(\sigma_2)^2+2\sigma_1\sigma_2\rho]\nonumber\\
&+K\gamma_L
\big[-\frac{\varphi^{L}(t)D^{F_1}(t)}{\gamma_1\varphi^{F_1}(t)}(p(t)-a_1)
-\frac{\varphi^{L}(t)D^{F_2}(t)}{\gamma_2\varphi^{F_2}(t)}(p(t)-a_2)
+\frac{\varphi^{L}(t)D^{\tilde{F}_1}(t)}{\gamma_1\sigma_1^2\varphi^{F_1}(t)}(p(t)-a_1)^2\nonumber\\
&
+\frac{\varphi^{L}(t)D^{\tilde{F}_2}(t)}{\gamma_2\sigma_2^2\varphi^{F_2}(t)}(p(t)-a_2)^2
+\frac{\rho\varphi^{L}(t)D^{F_{12}}(t)}{\gamma_1\gamma_2
\sigma_1\sigma_2\varphi^{F_1}(t)\varphi^{F_2}(t)}(p(t)-a_1)(p(t)-a_2)\big]\big\},
\end{align}
where
\begin{align}
D^{F_i}(t)=&\frac{1}{K}\gamma_i\varphi^{F_i}(t)
+\gamma_L\varphi^{L}(t)[1+k_j\rho^2+\frac{\sigma_j\rho}{\sigma_i}(1+k_j)], \quad i\neq j\in\{1,2\},\nonumber\\
D^{\tilde{F}_i}(t)=&1
+\frac{K(1+(k_j\rho)^2+2k_j\rho^2)\gamma_L\varphi^{L}(t)}
{2\gamma_i\varphi^{F_i}(t)}, \quad i\neq j\in\{1,2\},\nonumber\\
D^{F_{12}}(t)=&k_1\gamma_1\varphi^{F_1}(t)+k_2\gamma_2\varphi^{F_2}(t)
+K(1+k_1+k_2+k_1k_2\rho^2)\gamma_L\varphi^{L}(t).
\end{align}

The first-order condition about $p(t)$ for maximizing the value in equation \eqref{equ:LHJBc} gives that
\begin{align}
p^{\ast}(t)=&\frac{P^N(t)}{P^D(t)}\vee c_F \wedge \bar{c},
\end{align}
where
\begin{align}
P^N(t)=&(\sigma_1\sigma_2)^2
[\gamma_2\varphi^{F_2}(t)D^{F_1}(t)+\gamma_1\varphi^{F_1}(t)D^{F_2}(t)]
+2a_1\sigma_2^2\gamma_2\varphi^{F_2}(t)D^{\tilde{F}_1}(t)\nonumber\\
&+2a_2\sigma_1^2\gamma_1\varphi^{F_1}(t)D^{\tilde{F}_2}(t)
+(a_1+a_2)\rho\sigma_1\sigma_2D^{F_{12}}(t),\label{equ:PNt}\\
P^D(t)=&2\sigma_2^2\gamma_2\varphi^{F_2}(t)D^{\tilde{F}_1}(t)
+2\sigma_1^2\gamma_1\varphi^{F_1}(t)D^{\tilde{F}_2}(t)+2\rho\sigma_1\sigma_2D^{F_{12}}(t)\label{equ:PDt}.
\end{align}

\begin{itemize}
\item {\bfseries {Subcase (Lc1)}} If $\frac{P^N(t)}{P^D(t)}\geq \bar{c}$, then
\begin{align}
p^{\ast}(t)=\bar{c},\quad
q_i^{\ast}(t,p^{\ast}(t))=K[N^{\bar{c}F_i}(t)+\frac{k_i\rho\sigma_j}{\sigma_i}N^{\bar{c}F_j}(t)].
\end{align}

The precondition becomes
$K[N^{\bar{c}F_i}(t)+\frac{k_i\rho\sigma_j}{\sigma_i}N^{\bar{c}F_j}(t)]<1$, $i=1,2$.
Substituting $p^{\ast}(t)$ into the equation \eqref{equ:LHJBc} and integrating from $T$ to $t$ gives
\begin{align}\label{equ:gLc1}
g^L_2(t)=g^{Lc1}_2(t)
\doteq &
g(t)+\frac{\gamma_L^2[(\sigma_1)^2+(\sigma_2)^2+2\sigma_1\sigma_2\rho]}
{4(A_L+\eta_L)}[(\varphi^{L}(t))^2-1]+K\gamma_L\Big\{
\frac{(\bar{c}-a_1)}{\gamma_1}
\int_T^t\frac{\varphi^{L}(s)D^{F_1}(s)}{\varphi^{F_1}(s)}ds\nonumber\\
&
+\frac{(\bar{c}-a_2)}{\gamma_2}
\int_T^t\frac{\varphi^{L}(s)D^{F_2}(s)}{\varphi^{F_2}(s)}ds
-\frac{(\bar{c}-a_1)^2}{\gamma_1\sigma_1^2}
\int_T^t\frac{\varphi^{L}(s)D^{\tilde{F}_1}(s)}{\varphi^{F_1}(s)}ds
\nonumber\\
&-\frac{(\bar{c}-a_2)^2}{\gamma_2\sigma_2^2}
\int_T^t\frac{\varphi^{L}(s)D^{\tilde{F}_2}(s)}{\varphi^{F_2}(s)}ds
-\frac{\rho(\bar{c}-a_1)(\bar{c}-a_2)}{\gamma_1\gamma_2\sigma_1\sigma_2}
\int_T^t\frac{\varphi^{L}(s)D^{F_{12}}(s)}{\varphi^{F_1}(s)\varphi^{F_2}(s)}ds\Big\}.
\end{align}
Equation \eqref{equ:g2Fic} becomes
\begin{align}\label{equ:g2Fic1}
g^{F_i}_2(t)=g^{F_ic1}_2(t)
\doteq &
g(t)+\frac{\gamma_i}{A_i+\eta_i}
[\bar{c}-a_i-k_i(\bar{c}-a_j)-\theta_ia_i+k_i\theta_ja_j]
[\varphi^{F_i}(t)-1]\nonumber\\
&+\frac{K^2(T-t)}{2}\big\{
-\frac{[1-(k_1k_2\rho)^2]}{(\sigma_i)^2}(\bar{c}-a_i)^2
+\frac{k_i\gamma_i[2\gamma_j-2k_1k_2\gamma_j\rho^2+k_i\gamma_i(1-\rho^2)]}
{(\gamma_j\sigma_j)^2}(\bar{c}-a_j)^2\nonumber\\
&+\frac{2k_i\rho[-\gamma_i(1-k_1k_2)+k_j\gamma_j(1-k_1k_2\rho^2)]}
{\sigma_i\sigma_j\gamma_j}(\bar{c}-a_i)(\bar{c}-a_j)\big\}.
\end{align}

\item {\bfseries {Subcase (Lc2)}}  If $\frac{P^N(t)}{P^D(t)}\leq c_F$, then
\begin{align}
p^{\ast}(t)=c_F, \quad
q_i^{\ast}(t,p^{\ast}(t))=K[N^{cF_i}(t)+\frac{k_i\rho\sigma_j}{\sigma_i}N^{cF_j}(t)].
\end{align}
The precondition becomes
$K[N^{cF_i}(t)+\frac{k_i\rho\sigma_j}{\sigma_i}N^{cF_j}(t)]<1$, $i=1,2$.
Substituting $p^{\ast}(t)$ into the equation\eqref{equ:LHJBc} and integrating from $T$ to $t$ gives
\begin{align}\label{equ:gLc2}
g^L_2(t)=g^{Lc2}_2(t)
\doteq & g(t)+\frac{\gamma_L^2[(\sigma_1)^2+(\sigma_2)^2+2\sigma_1\sigma_2\rho]}
{4(A_L+\eta_L)}[(\varphi^{L}(t))^2-1]+K\gamma_L\Big\{\frac{(c_F-a_1)}{\gamma_1}
\int_T^t\frac{\varphi^{L}(s)D^{F_1}(s)}{\varphi^{F_1}(s)}ds\nonumber\\
&
+\frac{(c_F-a_2)}{\gamma_2}
\int_T^t\frac{\varphi^{L}(s)D^{F_2}(s)}{\varphi^{F_2}(s)}ds-
\frac{(c_F-a_1)^2}{\gamma_1\sigma_1^2}
\int_T^t\frac{\varphi^{L}(s)D^{\tilde{F}_1}(s)}{\varphi^{F_1}(s)}ds\nonumber\\
&
-\frac{(c_F-a_2)^2}{\gamma_2\sigma_2^2}
\int_T^t\frac{\varphi^{L}(s)D^{\tilde{F}_2}(s)}{\varphi^{F_2}(s)}ds
-\frac{\rho(c_F-a_1)(c_F-a_2)}{\gamma_1\gamma_2\sigma_1\sigma_2}
\int_T^t
\frac{\varphi^{L}(s)D^{F_{12}}(s)}{\varphi^{F_1}(s)\varphi^{F_2}(s)}ds\Big\}.
\end{align}
Equation \eqref{equ:g2Fic} becomes
\begin{align}\label{equ:g2Fic2}
g^{F_i}_2(t)=g^{F_ic2}_2(t)
&\doteq
g(t)+\frac{\gamma_i}{A_i+\eta_i}
[c_F-a_i-k_i(c_F-a_j)-\theta_ia_i+k_i\theta_ja_j]
[\varphi^{F_i}(t)-1]\nonumber\\
&+\frac{K^2(T-t)}{2}\big\{
-\frac{[1-(k_1k_2\rho)^2]}{(\sigma_i)^2}(c_F-a_i)^2
+\frac{k_i\gamma_i[2\gamma_j-2k_1k_2\gamma_j\rho^2+k_i\gamma_i(1-\rho^2)]}
{(\gamma_j\sigma_j)^2}(c_F-a_j)^2\nonumber\\
&+\frac{2k_i\rho[-\gamma_i(1-k_1k_2)+k_j\gamma_j(1-k_1k_2\rho^2)]}
{\sigma_i\sigma_j\gamma_j}(c_F-a_i)(c_F-a_j)\big\}.
\end{align}

\item {\bfseries {Subcase (Lc3)}}
 If $c_F<\frac{P^N(t)}{P^D(t)}<\bar{c}$, then
\begin{align}
p^{\ast}(t)=\frac{P^N(t)}{P^D(t)},\quad
q_i^{\ast}(t,p^{\ast}(t))=K
\Big[\frac{\frac{P^N(t)}{P^D(t)}-a_i}{\gamma_i\sigma_i^2\varphi^{F_i}(t)}
+\frac{k_i\rho(\frac{P^N(t)}{P^D(t)}-a_j)}{\gamma_j\sigma_j\sigma_i\varphi^{F_j}(t)}\Big].\label{equ:pq*}
\end{align}
The precondition becomes
$K[\frac{\frac{P^N(t)}{P^D(t)}-a_i}{\gamma_i\sigma_i^2\varphi^{F_i}(t)}
+\frac{k_i\rho(\frac{P^N(t)}{P^D(t)}-a_j)}{\gamma_j\sigma_j\sigma_i\varphi^{F_j}(t)}]<1$, $i=1,2$.
Substituting $p^{\ast}(t)$ into the equation\eqref{equ:LHJBc} and integrating from $T$ to $t$ gives
\begin{align}\label{equ:gLc3}
&g^L_2(t)=g^{Lc3}_2(t)\nonumber\\
&\doteq g(t)+\frac{\gamma_L^2[(\sigma_1)^2+(\sigma_2)^2+2\sigma_1\sigma_2\rho]}{4(A_L+\eta_L)}
[(\varphi^{L}(t))^2-1]+K\gamma_L\big\{
\int_T^t\frac{\varphi^{L}(s)D^{F_1}(s)}{\gamma_1\varphi^{F_1}(s)}(\frac{P^N(s)}{P^D(s)}-a_1)ds\nonumber\\
&
+\int_T^t\frac{\varphi^{L}(s)D^{F_2}(s)}{\gamma_2\varphi^{F_2}(s)}(\frac{P^N(s)}{P^D(s)}-a_2)ds
-\int_T^t\frac{\varphi^{L}(s)D^{\tilde{F}_1}(s)}{\gamma_1\sigma_1^2\varphi^{F_1}(s)}
(\frac{P^N(s)}{P^D(s)}-a_1)^2ds
\nonumber\\
&-\int_T^t\frac{\varphi^{L}(s)D^{\tilde{F}_2}(s)}{\gamma_2\sigma_2^2\varphi^{F_2}(s)}
(\frac{P^N(s)}{P^D(s)}-a_2)^2ds
-\int_T^t
\frac{\rho\varphi^{L}(s)D^{F_{12}}(s)}{\gamma_1\gamma_2\sigma_1\sigma_2\varphi^{F_1}(s)\varphi^{F_2}(s)}
(\frac{P^N(s)}{P^D(s)}-a_1)(\frac{P^N(s)}{P^D(s)}-a_2)ds\big\}.
\end{align}
Equation \eqref{equ:g2Fic} becomes
\begin{align}\label{equ:g2Fic3}
&g^{F_i}_2(t)=g^{F_ic3}_2(t)\nonumber\\
&\doteq
g(t)-\frac{\gamma_i
(\theta_ia_i-k_i\theta_ja_j)}{A_i+\eta_i}[\varphi^{F_i}(t)-1]
-\gamma_i\int_T^t\varphi^{F_i}(s)(\frac{P^N(s)}{P^D(s)}-a_i)ds
+k_i\gamma_i\int_T^t\varphi^{F_i}(s)(\frac{P^N(s)}{P^D(s)}-a_j)ds\nonumber\\
&+\frac{K^2}{2}\big\{\frac{1-(k_1k_2\rho)^2}{(\sigma_i)^2}
\int_T^t(\frac{P^N(s)}{P^D(s)}-a_i)^2ds
-\frac{k_i\gamma_i[2\gamma_j-2k_1k_2\gamma_j\rho^2+k_i\gamma_i(1-\rho^2)]}
{(\gamma_j\sigma_j)^2}
\int_T^t(\frac{P^N(s)}{P^D(s)}-a_j)^2ds\nonumber\\
&-\frac{2k_i\rho[-\gamma_i(1-k_1k_2)+k_j\gamma_j(1-k_1k_2\rho^2)]}
{\sigma_i\sigma_j\gamma_j}]\int_T^t(\frac{P^N(s)}{P^D(s)}-a_i)
(\frac{P^N(s)}{P^D(s)}-a_j)ds\big\}.
\end{align}
\end{itemize}
\end{itemize}

Summing up the above processes, we can get Theorem \ref{Theorem1}.
\end{proof}

\renewcommand{\theequation}{B.\arabic{equation}}
\section{Proof of Corollary \ref{corollary2}}\label{Appendix B}

\begin{proof}
The conclusions in Table \ref{TablebLbF} can be obtained by taking partial derivatives of $b_L^{\ast}(t)$ and $b_i^{\ast}(t)$ with corresponding variables, respectively.
From $A_L=r_0-B_L-C_L$, $C_L=\eta_Le^{-\alpha_Lh_L}$, $B_Le^{-\alpha_Lh_L}=(\alpha_L+A_L+\eta_L)C_L$, we can get that
\begin{align*}
A_L=\frac{1}{1+\eta_L}[r_0-(\alpha_L+\eta_L)\eta_L-\eta_Le^{-\alpha_Lh_L}].
\end{align*}
Put $A_L$ into the equation \eqref{equ:bL*}, and take the derivative with respect to $\eta_L$ and $\alpha_L$ respectively. We can get that
\begin{align*}
\frac{\partial b_L^{\ast}(t)}{\partial\eta_L}=b_L^{\ast}(t)(T-t)\frac{1}{(1+\eta_L)^2}[r_0+\alpha_L+e^{-\alpha_Lh_L}-1],\quad
\frac{\partial b_L^{\ast}(t)}{\partial\alpha_L}=b_L^{\ast}(t)(T-t)\frac{\eta_L}{1+\eta_L}[1-h_Le^{-\alpha_Lh_L}].
\end{align*}
$b_L^{\ast}(t)>0$ because of $\beta\geq 0$.
Thus, the left side of equation \eqref{equ:balpha} and equation \eqref{equ:beta} is established. Following similar derivations, we can get that the right side of equation \eqref{equ:balpha} and equation \eqref{equ:beta} is true.
\end{proof}

\renewcommand{\theequation}{C.\arabic{equation}}
\section{Proof of Lemma \ref{lemma1}}\label{Appendix C}

\begin{proof}
According to \eqref{equ:St} and \eqref{equ:X^F}, we apply It\^{o}'s formula
\begin{align*}
d(V^{F_i}(t))^2
&=2V^{F_i}(t)\big\{\mathcal{A}^{{F_i}}V^{F_i}(t,\hat{x}_i,y_i,y_j,s)|_{(q_1^{\ast}(\cdot),
b_1^{\ast}(\cdot),q_2^{\ast}(\cdot),b_2^{\ast}(\cdot))}dt
+V^{F_i}_{\hat{x}_i}[q_i^{\ast}(t)\sigma_idW_i(t)-k_iq_j^{\ast}(t)\sigma_jdW_j(t)]\\
&
+[V^{F_i}_{\hat{x}_i}(b_i^{\ast}(t)-k_ib_j^{\ast}(t))+sV^{F_i}_{s}]\sigma s^{\beta} dW(t)\big\}+\big\{(V^{F_i}_{\hat{x}_i})^2\big[(q_i^{\ast}(t)\sigma_i)^2+(k_iq_j^{\ast}(t)\sigma_j)^2
+(b_i^{\ast}(t)-k_ib_j^{\ast}(t))^2\sigma^2s^{2\beta}\\
&
-2q_i^{\ast}(t)\sigma_ik_iq_j^{\ast}(t)\sigma_j\rho\big]+\sigma^2s^{2\beta+2}(V^{F_i}_{s})^2
+2V^{F_i}_{\hat{x}_i}V^{F_i}_{s}(b_i^{\ast}(t)-k_ib_j^{\ast}(t))\sigma^2 s^{2\beta+1}\big\}dt.
\end{align*}
From Section \ref{subsection3.1}, we know that $\mathcal{A}^{{F_i}}V^{F_i}(t,\hat{x}_i,y_i,y_j,s)|_{(q_1^{\ast}(\cdot),
b_1^{\ast}(\cdot),q_2^{\ast}(\cdot),b_2^{\ast}(\cdot))}=0$. By plugging the expressions of $V^{F_i}_{\hat{x}_i}$, $V^{F_i}_{s}$,
$b_1^{\ast}(\cdot)$ and $b_2^{\ast}(\cdot)$ into the above equation, we obtain
\begin{align}\label{equ:dVFT}
\frac{d(V^{F_i}(t))^2}{(V^{F_i}(t))^2}
&=-2\{\gamma_i\varphi^{F_i}(t)[q_i^{\ast}(t)\sigma_idW_i(t)-k_iq_j^{\ast}(t)\sigma_jdW_j(t)]
+\frac{r-r_0}{\sigma s^{\beta}}dW(t)\}\nonumber\\
&~~+\{(\gamma_i\varphi^{F_i}(t))^2[(q_i^{\ast}(t)\sigma_i)^2+(k_iq_j^{\ast}(t)\sigma_j)^2
-2q_i^{\ast}(t)\sigma_ik_iq_j^{\ast}(t)\sigma_j\rho]
+(\frac{r-r_0}{\sigma s^{\beta}})^2\}dt\nonumber\\
&=\Theta_1^{F_i}(t)dt
+\Theta_2^{F_i}(t)dW_i(t)+\Theta_3^{F_i}(t)dW_j(t)-2\frac{r-r_0}{\sigma s^{\beta}}dW(t),
\end{align}
where $\Theta_1^{F_i}(t)=(\gamma_i\varphi^{F_i}(t))^2[(q_i^{\ast}(t)\sigma_i)^2+(k_iq_j^{\ast}(t)\sigma_j)^2
-2q_i^{\ast}(t)\sigma_ik_iq_j^{\ast}(t)\sigma_j\rho]
+(\frac{r-r_0}{\sigma s^{\beta}})^2$,
$\Theta_2^{F_i}(t)=-2\gamma_i\varphi^{F_i}(t)q_i^{\ast}(t)\sigma_i$, $\Theta_3^{F_i}(t)=2\gamma_i\varphi^{F_i}(t)k_iq_j^{\ast}(t)\sigma_j$.
The forms of $q_i^{\ast}(t)$ and $q_j^{\ast}(t)$ in different cases are given by Theorem \ref{Theorem1}.
Thus, the solution to the equation \eqref{equ:dVFT} is
\begin{align}\label{equ:VFT}
\frac{(V^{F_i}(t))^2}{(V^{F_i}(0))^2}
=&\exp\big\{\int_0^t\Theta_1^{F_i}(\iota)d\iota\big\}
+\exp\big\{\int_0^t[
-\frac{1}{2}\Theta_2^{F_i}(\iota)^2
-\frac{1}{2}\Theta_3^{F_i}(\iota)^2]d\iota
+\int_0^t\Theta_2^{F_i}(\iota)dW_i(\iota)
+\int_0^t\Theta_3^{F_i}(\iota)dW_j(\iota)\big\}\nonumber\\
&+\exp\big\{-\frac{1}{2}\int_0^t\frac{4(r-r_0)^2}{\sigma^2}(S(\iota))^{-2\beta}d\iota
-\int_0^t\frac{2(r-r_0)}{\sigma}(S(\iota))^{-\beta}dW(\iota)\big\}.
\end{align}
By virtue of Novikov's condition, we know that $\exp\big\{-\frac{1}{2}\int_0^t\Theta_2^{F_i}(\iota)^2d\iota
+\int_0^t\Theta_2^{F_i}(\iota)dW_i(\iota)\big\}$ and
$\exp\big\{-\frac{1}{2}\int_0^t\Theta_3^{F_i}(\iota)^2d\iota
+\int_0^t\Theta_3^{F_i}(\iota)dW_j(\iota)\big\}$ are two martingales.
In addition, according to \eqref{equ:St} and It\^{o}'s formula, we can drive
\begin{align}
d(S(t))^{-2\beta}
=[\beta(2\beta+1)\sigma^2-2\beta r(S(t))^{-2\beta}]dt-2\beta\sigma(S(t))^{-\beta}dW(t).
\end{align}
By using Lemma 4.3 and Theorem 5.1 of \cite{Zeng-Taksar-2013}, we can verify that
$\exp\big\{-\frac{1}{2}\int_0^t\frac{4(r-r_0)^2}{\sigma^2}(S(\iota))^{-2\beta}d\iota
-\int_0^t\frac{2(r-r_0)}{\sigma}(S(\iota))^{-\beta}dW(\iota)$ is a martingale.
From the range of parameters and the form of $\varphi^{F_i}(t)$, we know that $\Theta_1^{F_i}(t)<\infty$.
Thus taking expectation from both sides of equation \eqref{equ:VFT}, we obtain
\begin{align*}
E[(V^{F_i}(t))^2]=(V^{F_i}(0))^2\exp\big\{\int_0^t\Theta_1^{F_i}(\iota)d\iota\big\}<+\infty.
\end{align*}
Therefore, for $n=1,2,\cdots$, we have
\begin{align*}
E_{t,\hat{x}_i,y_i,y_j,s}\big\{[V^{F_i}(\tau_n\wedge T,\hat{X}_i(\tau_n\wedge T),Y_i(\tau_n\wedge T),Y_j(\tau_n\wedge T),S(\tau_n\wedge T))]^2\big\}<+\infty.
\end{align*}
Then, the proof is complete.
\end{proof}

\renewcommand{\theequation}{D.\arabic{equation}}
\section{Proof of Theorem \ref{Theorem2}}\label{Appendix D}

\begin{proof}
Obviously, the pair $(p^{\ast}(t),b_L^{\ast}(t),q_1^{\ast}(t),b_1^{\ast}(t),q_2^{\ast}(t),b_2^{\ast}(t))$ obtained in Theorem \ref{Theorem1} is an admissible strategy, i.e., $(p^{\ast}(t),b_L^{\ast}(t),q_1^{\ast}(t),b_1^{\ast}(t),q_2^{\ast}(t),b_2^{\ast}(t))
\in\Pi_L\times\Pi_{1}\times\Pi_{2}$. Next, we show the optimality of $(p^{\ast}(t),b_L^{\ast}(t),q_1^{\ast}(t),b_1^{\ast}(t),\newline q_2^{\ast}(t),b_2^{\ast}(t))$  in $\Pi_L\times\Pi_{1}\times\Pi_{2}$.
From the construction of $\tau_n$, we know that $\tau_n\wedge T\rightarrow T$ when $n\rightarrow+\infty$.
For $\forall(p^{\ast}(t),b_L^{\ast}(t),q_1^{\ast}(t),b_1^{\ast}(t),q_2^{\ast}(t),b_2^{\ast}(t))
\in\Pi_L\times\Pi_{1}\times\Pi_{2}$ and
$\forall\iota\in[t,T]$, we apply It\^{o}'s formula to $V^{F_i}(t,\hat{x}_i,y_i,y_j,s)$ and deduce
\begin{align*}
&V^{F_i}(\tau_n\wedge T,\hat{X}_i(\tau_n\wedge T),Y_i(\tau_n\wedge T),Y_j(\tau_n\wedge T),S(\tau_n\wedge T))\\
&=V^{F_i}(t,\hat{x}_i,y_i,y_j,s)
+\int_t^{\tau_n\wedge T}\mathcal{A}^{F_i}V^{F_i}(\iota,\hat{X}_i(\iota),Y_i(\iota),Y_j(\iota),S(\iota))d\iota
+\int_t^{\tau_n\wedge T}V^{F_i}_{\hat{x}_i}(\iota,\hat{X}_i(\iota),Y_i(\iota),Y_j(\iota),S(\iota))
q_i(\iota)\sigma_idW_i(\iota)\\
&-\int_t^{\tau_n\wedge T}V^{F_i}_{\hat{x}_i}(\iota,\hat{X}_i(\iota),Y_i(\iota),Y_j(\iota),S(\iota))
k_iq_j(\iota)\sigma_jdW_j(\iota)+\int_t^{\tau_n\wedge T}V^{F_i}_{\hat{x}_i}(\iota,\hat{X}_i(\iota),Y_i(\iota),Y_j(\iota),S(\iota))
(b_i(\iota)\\
&-k_ib_j(\iota))\sigma(S(\iota))^\beta dW(\iota)+\int_t^{\tau_n\wedge T}V^{F_i}_{s}(\iota,\hat{X}_i(\iota),Y_i(\iota),Y_j(\iota),S(\iota))
\sigma(S(\iota))^{\beta+1} dW(\iota).
\end{align*}
Because the last four terms are square-integrable martingales with zero expectations, taking expectation on both sides of the above equation conditional on $\hat{X}_i(t)=\hat{x}_i$, $Y_i(t)=y_i$, $Y_j(t)=y_j$ and $S(t)=s$, we have
\begin{align*}
&E_{t,\hat{x}_i,y_i,y_j,s}[V^{F_i}(\tau_n\wedge T,\hat{X}_i(\tau_n\wedge T),Y_i(\tau_n\wedge T),Y_j(\tau_n\wedge T),S(\tau_n\wedge T))]\\
&=V^{F_i}(t,\hat{x}_i,y_i,y_j,s)
+E_{t,\hat{x}_i,y_i,y_j,s}\big[\int_t^{\tau_n\wedge T}\mathcal{A}^{F_i}V^{F_i}(\iota,
\hat{X}_i(\iota),Y_i(\iota),Y_j(\iota),S(\iota))d\iota\big]\leq V^{F_i}(t,\hat{x}_i,y_i,y_j,s).
\end{align*}
In terms of Lemma \ref{lemma1}, the uniform integrability of $V^{F_i}(\tau_n\wedge T,\hat{X}_i(\tau_n\wedge T),Y_i(\tau_n\wedge T),Y_j(\tau_n\wedge T),S(\tau_n\wedge T))$ yields
\begin{align*}
&\sup_{(q_i(\cdot),b_i(\cdot))\in\Pi_i}E_{t,\hat{x}_i,y_i,y_j,s}
\big[U_i\big(\hat{X}_i^{\pi_i}(T)+\eta_iY_i(T)-k_i\eta_jY_j(T)\big)\big]\\
&=\lim_{n\rightarrow+\infty}E_{t,\hat{x}_i,y_i,y_j,s}\big[V^{F_i}(\tau_n\wedge T,\hat{X}_i(\tau_n\wedge T),Y_i(\tau_n\wedge T),Y_j(\tau_n\wedge T),S(\tau_n\wedge T))\big]\leq V^{F_i}(t,\hat{x}_i,y_i,y_j,s).
\end{align*}
When $(q_i(\cdot),b_i(\cdot))=(q_i^{\ast}(\cdot),b_i^{\ast}(\cdot))$, the inequality in the above formula becomes an equality, and thus
\begin{align*}
\sup_{(q_i(\cdot),b_i(\cdot))\in\Pi_i}E_{t,\hat{x}_i,y_i,y_j,s}
\big[U_i\big(\hat{X}_i^{\pi_i}(T)+\eta_iY_i(T)-k_i\eta_jY_j(T)\big)\big]
=V^{F_i}(t,\hat{x}_i,y_i,y_j,s).
\end{align*}

Following similar derivations, we can obtain
\begin{align*}
\sup_{(p(\cdot),b_L(\cdot))\in\Pi_L}E_{t,x_L,y_L,s}
\left[U_L(X_L^{\pi_L}(T)+\eta_LY_L(T))\right]
\leq V^L(t,x_L,y_L,s).
\end{align*}
And when $(p(\cdot),b_L(\cdot))=(p^{\ast}(\cdot),b_L^{\ast}(\cdot))$, the above inequality becomes an equality. Then, the proof is complete.
\end{proof}

\end{document}